\documentclass[a4paper,UKenglish,cleveref,autoref,thm-restate,numberwithinsect]{lipics-v2021}

\pdfoutput=1 
\hideLIPIcs  

\usepackage[noadjust]{cite}
\title{Complete $\omega$-Regular Supermartingale Certificates}

\author{Alessandro Abate}{University of Oxford, UK}{alessandro.abate@cs.ox.ac.uk}{https://orcid.org/0000-0002-5627-9093}{}
\author{Mirco Giacobbe}{University of Birmingham, UK}{m.giacobbe@bham.ac.uk}{https://orcid.org/0000-0001-8180-0904}{}
\author{Sergey Ichtchenko}{University of Oxford, UK}{sergey.ichtchenko@cs.ox.ac.uk}{https://orcid.org/0009-0006-2938-3158}{}
\author{Diptarko Roy}{University of Birmingham, UK}{d.s.roy@bham.ac.uk}{https://orcid.org/0009-0003-4306-2076}{}

\authorrunning{A. Abate, M. Giacobbe, S. Ichtchenko, D. Roy} 

\Copyright{Alessandro Abate, Mirco Giacobbe, Sergey Ichtchenko, Diptarko Roy} 

\ccsdesc[500]{Theory of computation~Logic and verification}
\ccsdesc[500]{Theory of computation~Probabilistic computation}

\keywords{Probabilistic Model Checking, Markov Chains on Measurable State Spaces, Omega-Regular Properties, Martingale Theory} 

\category{} 

\relatedversion{} 

\acknowledgements{This work was supported in part by the Advanced Research and Invention Agency under the Safeguarded AI programme.}

\nolinenumbers 

\newcommand{\ffilti}[1]{\mathcal{F}_i}
\def\Expect{ {\mathsf E}}
\def\Prob{ \mathsf{P} }
\def\chain{\mathbf{\Phi}}
\newcommand{\mc}[1]{\Phi_{#1}}
\newcommand{\ind}[1]{\mathbf{1}{#1}}

\def\Nat{\mathbb{N}}
\def\Real{\mathbb{R}}
\renewcommand{\d}[1]{\ensuremath{\operatorname{d}\!{#1}}}

\makeatletter
\newsavebox{\@brx}
\newcommand{\llangle}[1][]{\savebox{\@brx}{\(\m@th{#1\langle}\)}%
  \mathopen{\copy\@brx\mkern2mu\kern-0.9\wd\@brx\usebox{\@brx}}}
\newcommand{\rrangle}[1][]{\savebox{\@brx}{\(\m@th{#1\rangle}\)}%
  \mathclose{\copy\@brx\mkern2mu\kern-0.9\wd\@brx\usebox{\@brx}}}
\makeatother

\DeclareMathOperator{\Fin}{ \mathsf{Fin} }
\DeclareMathOperator{\Inf}{ \mathsf{Inf} }

\usepackage{pifont}

\usepackage{multicol}

\usepackage{amsmath,amssymb}
\usepackage{amsfonts}

\usepackage{tikz}
\usetikzlibrary{automata, positioning, fit, cd}
\usetikzlibrary[positioning,calc,arrows.meta,automata]
\usetikzlibrary{shapes, backgrounds}
\usepackage{pgfplots}
\pgfplotsset{compat=1.18}
\usepgfplotslibrary{fillbetween}

\usepackage[most]{tcolorbox}
\tcbuselibrary{breakable,skins}
\newcommand{\tightequations}{%
    \setlength{\abovedisplayskip}{3pt}%
    \setlength{\belowdisplayskip}{0pt}%
}

\makeatletter
\newcommand{\tagpad}{\def\maketag@@@##1{\hbox{\m@th\normalfont##1\hspace*{6pt}}}}
\makeatother

\tcbset{
    proofrulebox/.style={
        enhanced,
        breakable,
        arc=3pt,
        outer arc=3pt,
        boxrule=0.6pt,
        colframe=black!75,
        colback=white,
        coltitle=black,
        fonttitle=\bfseries\small,
        attach boxed title to top left={xshift=8pt,yshift=-2mm},
        boxed title style={
            enhanced,
            arc=3pt,
            outer arc=3pt,
            boxrule=0.6pt,
            colframe=black!75,
            colback=white,
            left=2pt,
            right=2pt,
            top=0pt,
            bottom=0pt,
        },
        left=8pt,
        right=8pt,
        top=8pt,
        bottom=4pt,
        before skip=8pt,
        after skip=8pt,
    },
    middleproofrulebox/.style={
        proofrulebox,
        colback=blue!5,
        boxed title style={
            enhanced,
            arc=3pt,
            outer arc=3pt,
            boxrule=0.6pt,
            left=2pt,
            right=2pt,
            top=0pt,
            bottom=0pt,
        },
    }
}

\EventEditors{Claudia Faggian and Joost-Pieter Katoen}
\EventNoEds{2}
\EventLongTitle{41st Annual Symposium on Logic in Computer Science (LICS 2026)}
\EventShortTitle{LICS 2026}
\EventAcronym{LICS}
\EventYear{2026}
\EventDate{July 20--23, 2026}
\EventLocation{Lisbon, Portugal}
\EventLogo{}
\SeriesVolume{380}
\ArticleNo{42}

\begin{document}

\maketitle

\begin{abstract}
We introduce a general methodology for the construction of sound and complete proof rules for the almost-sure and quantitative acceptance of reactivity properties on time-homogeneous Markov chains with general state spaces.
Reactivity captures the $\omega$-regular properties and subsumes linear temporal logic.
Our core technical result establishes that every reactivity property admits decomposition into multiple obligations of almost-sure termination into absorbing regions, and that appropriate absorbing regions always exist on general state spaces. This enables the extension of every complete proof rule for almost-sure termination into a proof rule for reactivity that is complete in the almost-sure case, and complete up to an arbitrarily small $\varepsilon$-approximation in the quantitative case.
We apply our new methodology to recent results on sound and complete supermartingale certificates for almost-sure termination in the special case of countably infinite state spaces, alongside standard results on quantitative safety. As a result, we obtain the first sound and complete supermartingale certificates for almost-sure $\omega$-regular properties and the first sound and $\varepsilon$-complete supermartingale certificates for quantitative $\omega$-regular properties on time-homogeneous Markov chains with countably infinite state spaces.
\end{abstract}

\section{Introduction}
Supermartingale certificates are essential components of widely used proof rules for the formal verification of probabilistic systems whose semantics correspond to time-homogeneous Markov chains over general state spaces. Examples include probabilistic programs used to model randomised algorithms and inference protocols, which typically operate over countably infinite state spaces, as well as stochastic difference equations used to model discrete-time dynamical systems, which usually evolve over continuous state spaces. In many real-world domains, the finite-state assumptions underlying traditional probabilistic model-checking algorithms are inappropriate, making it crucial to model and reason about probabilistic systems over general measurable state spaces.

Deductive approaches based on martingale theory for reasoning about probabilistic systems bring principles from stochastic analysis into program analysis and control theory. Supermartingales are real-valued stochastic processes whose values satisfy specific convergence properties; supermartingale certificates map a system to a supermartingale and are essentially the probabilistic analogues of ranking functions and inductive invariants in Hoare logic, as well as Lyapunov functions in stability analysis~\cite{kushner1967stochastic,DBLP:conf/cav/ChakarovS13}.
In this paper, we bring together stochastic analysis and formal language theory, providing the technical elements for the construction of sound and complete supermartingale certificates for reactivity properties.

Reactivity is the temporal property stating that either an event 
$A$ occurs only finitely many times or an event 
$B$ occurs infinitely often or, more generally, it is a finite conjunction of such properties.
It is a standard result that reactivity characterises the $\omega$-regular languages, 
thereby subsuming linear temporal logic and generalising properties such as recurrence, persistence, safety (or invariance), and termination (or reachability)~\cite{DBLP:conf/podc/MannaP89, Safra88}. Proof rules based on supermartingale certificates for the almost-sure and quantitative acceptance of these classes of properties---including reactivity---have been proposed previously~\cite{McIverMKK18,DBLP:journals/toplas/TakisakaOUH21,DBLP:conf/cav/AbateGR24}; however, for reactivity, these proof rules provide only sufficient conditions, meaning they may fail to certify systems that nonetheless satisfy the property. 

The completeness of a proof rule involving supermartingale certificates concerns whether {\em every} stochastic process that satisfies the corresponding temporal property admits a certificate---typically expressed as a value function---that fulfils the associated proof rule. Complete proof rules for reactivity \cite[Definition 4.4]{DBLP:journals/corr/abs-2512-00270} have been presented recently, although under the assumption that value functions may express arbitrary distributions. While this enables reasoning about the distribution of the number of visits to states, it also means that it is hard to automate the construction of such proof certificates~\cite[p.\ 11]{DBLP:journals/corr/abs-2512-00270}.  
In fact, the existing certificate construction methods require that proof certificates are scalar-valued functions.
In this work, we focus on the setting of certificates that are (i) complete and (ii) scalar-valued. Prior to the present work, proof rules satisfying both criteria were known only for almost-sure termination~\cite{McIverMKK18,MajumdarS25}; completeness for reactivity and therefore $\omega$-regular properties remained an open problem.

We introduce a general methodology for characterising sound and complete proof rules for reactivity properties. Our core technical result shows that every reactivity property admits a decomposition into (i) multiple obligations of qualitative safety for each {\em absorbing region}, (ii) multiple obligations of almost-sure termination into the absorbing regions,  and (iii) one obligation of quantitative safety for the overall system. In other words, our methodology reduces the problem of establishing almost-sure or quantitative reactivity to a collection of almost-sure termination and qualitative or quantitative safety problems.
Crucially, we show that every time-homogeneous Markov chain on a general state space that satisfies a reactivity property admits suitable absorbing regions that may be used to approximate the probability of acceptance up to any arbitrary $\varepsilon$-approximation. Moreover, when the reactivity property holds almost surely, we show that there exist suitable absorbing regions that exactly characterise almost-sure acceptance.

Our result introduces the first sound and complete proof rules yielding supermartingale certificates for reactivity, and thus for $\omega$-regular properties. 
We present two such proof rules, obtained by applying our methodology to complete proof rules for almost-sure termination on the special case of countably infinite state spaces~\cite{McIverMKK18,MajumdarS25}, alongside standard results on quantitative safety~\cite{kushner1967stochastic,DBLP:journals/toplas/TakisakaOUH21}. As an immediate consequence of the completeness results underlying our absorbing-region decomposition, these new proof rules are complete on countably infinite state spaces up to arbitrary $\varepsilon$-ap\-prox\-i\-ma\-tion for quantitative reactivity, and fully complete when the reactivity property holds almost surely.

Our contribution is twofold. First, we show that every reactivity property admits a decomposition into multiple obligations of almost-sure termination and quantitative safety with respect to absorbing regions. Our decomposition is $\varepsilon$-complete in the general case and fully complete in the almost-sure case for any time-homogeneous Markov chain on a general state space. 
Second, we instantiate our decomposition using existing complete proof rules for almost-sure termination over countably infinite state spaces. As a result, we obtain the first supermartingale certificates that are $\varepsilon$-complete in the quantitative case and fully complete in the almost-sure case for 
$\omega$-regular properties on countably infinite state spaces.

\section{Markov Chains and Probabilistic Reactivity}

We consider stochastic processes on a general state space $(S, \Sigma)$, where $S$ denotes the set of states
and the associated $\sigma$-algebra $\Sigma$ denotes the set of measurable regions. We treat the problem of determining the acceptance of  
infinite time horizon properties measured over $(\Omega, {\mathcal F})$, 
where the set of outcomes $\Omega = S^\omega$ are the infinite trajectories of a process 
and the set of events ${\mathcal F} = \bigotimes_{i \in \omega} \Sigma_i$ (with $\Sigma_i = \Sigma$) 
are the measurable properties \cite[Chapter 3.4]{meyn_tweedie_glynn_2009}. 
A filtration $\mathfrak{F} = ( \mathcal{F}_0, \mathcal{F}_1, \ldots,\mathcal{F}_n, \ldots)$ is an increasing sequence of sub-$\sigma$-algebras of $\mathcal{F}$ where every $L \in \mathcal{F}_n$ is an event whose membership is determined by the prefix of length $n$.
A sequence of random variables $\chain = (\mc{0}, \mc{1}, \ldots, \allowbreak \mc{n}, \ldots)$ is a stochastic process adapted to $\mathfrak{F}$ precisely if $\mc{n}$ is $\mathcal{F}_{n}$-measurable for every $n \in \Nat$.

Throughout the paper, we use the notation $\ind\{ \pi \} :  S^\omega \to \{ 0, 1 \} $ for the indicator function of the proposition $\pi$, 
we use $\{ \pi \} $ to denote the subset of $S^\omega$ consisting of trajectories that satisfy proposition $\pi$,
and use $\delta_s : \Sigma \to [0, 1]$ for the Dirac measure $\delta_s(A)=\ind\{s\in A\}$ at $s \in S$. We say that two events are {\em almost-surely equivalent} under a probability measure precisely if their symmetric difference is a probability zero event.

We restrict our attention to time-homogeneous Markov chains on general state spaces. 
As is standard in stochastic analysis, we rely on the result that an initial probability measure 
and a transition probability kernel define a unique 
probability measure over the measurable space of infinite trajectories and properties.

\begin{theorem}[{\protect \cite[Theorem 3.4.1]{meyn_tweedie_glynn_2009}}]
\label{thm:meas-spec}
Let $\mu : \Sigma \to [0, 1]$ be an initial probability measure and 
$P : S \times \Sigma \to [0, 1]$ be a transition probability kernel. Then, there exists a stochastic process  $\mathbf{\Phi} = ( \Phi_0, \Phi_1, \ldots )$ adapted to $\mathfrak{F}$ 
and a probability measure ${\sf P}_\mu : \mathcal{F} \to [0,1]$ where ${\sf P}_\mu(\mathbf{\Phi} \in L)$ is the probability that $\chain$ satisfies the property $L \in {\mathcal F}$ and, for every $n \in \Nat$ and 
$A_0 \in \Sigma, \ldots, A_n \in \Sigma$, 
\begin{multline}
        {\sf P}_\mu(\Phi_0 \in A_0 \land \dots \land \Phi_n \in A_n)
    =\\\int_{s_0 \in A_0} \cdots \int_{s_{n-1} \in A_{n-1}} 
    \mu(\mathrm{d}s_0)
    P(s_0, \mathrm{d}s_1) 
    \cdots 
    P(s_{n - 1}, A_n).\label{eqn:measure-traj}
\end{multline}
\end{theorem}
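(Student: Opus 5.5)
This is the Ionescu-Tulcea construction, and I would prove it in the canonical form. The plan is to build finite-dimensional measures on $S^{n+1}$ by iterated integration against the kernel, check that they are consistent, and extend them to $\mathcal{F}$ by a Carathéodory-type argument. The process $\Phi_n$ is taken to be the $n$-th coordinate projection on $\Omega=S^\omega$, so adaptedness to $\mathfrak{F}$ holds by construction.

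\emph{Step 1: finite-dimensional measures.} For each $n$, define a set function on measurable rectangles $A_0\times\dots\times A_n$ by the right-hand side of \eqref{eqn:measure-traj}. More generally, for $B\in\Sigma^{\otimes(n+1)}$ set
\[
\mu_n(B)=\int \mu(\mathrm d s_0)\int P(s_0,\mathrm d s_1)\cdots\int P(s_{n-1},\mathrm d s_n)\,\ind\{(s_0,\dots,s_n)\in B\}.
\]
Each inner integral is measurable in the preceding variables: the kernel maps bounded measurable functions to measurable functions, which follows from a monotone class argument starting from rectangles. Monotone convergence then shows that $\mu_n$ is a probability measure on $S^{n+1}$. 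Since $P(s,S)=1$, we get consistency, $\mu_{n+1}(B\times S)=\mu_n(B)$.

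\emph{Step 2: extension.} Define $\mathsf P_\mu$ on the algebra of cylinder sets $\{(s_0,\dots,s_n)\in B\}$ by $\mu_n(B)$. Consistency makes this well defined, and it is finitely additive. The main obstacle is countable additivity, equivalently continuity at $\emptyset$. A general measurable space carries no topology, so Kolmogorov's compactness argument is unavailable, and I would use the Ionescu-Tulcea trick instead.

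Suppose $C_k\downarrow$ are cylinders with $\mathsf P_\mu(C_k)\ge\varepsilon>0$. Define
\[
h^k_j(s_0,\dots,s_j)=\mathsf P_{\text{cond}}\big(C_k\mid s_0,\dots,s_j\big),
\]
computed by integrating out the remaining coordinates through the kernels. For fixed $j$ these functions decrease in $k$, so their limit $h_j$ exists, and by dominated convergence it satisfies $h_j=\int P(s_j,\mathrm d s_{j+1})\,h_{j+1}$.

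Since $\int h_0\,\mathrm d\mu\ge\varepsilon$, there is an $s_0^*$ with $h_0(s_0^*)\ge\varepsilon$. Inductively there are $s_{j+1}^*$ with $h_{j+1}(s_0^*,\dots,s_{j+1}^*)\ge\varepsilon$. Because each $C_k$ depends only on finitely many coordinates, $h^k_j$ is an indicator once $j$ is large enough for that $k$. Hence the point $(s_0^*,s_1^*,\dots)$ lies in every $C_k$, so $\bigcap_k C_k\ne\emptyset$. Carathéodory's theorem then extends $\mathsf P_\mu$ uniquely to $\mathcal F=\bigotimes_i\Sigma_i$, since the cylinders generate $\mathcal F$.

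\emph{Step 3: conclusion.} With $\Phi_n(\omega)=\omega_n$, the event $\{\Phi_0\in A_0\wedge\dots\wedge\Phi_n\in A_n\}$ is exactly the cylinder over the rectangle $A_0\times\dots\times A_n$. Its measure is therefore $\mu_n$ of that rectangle, which is the iterated integral in \eqref{eqn:measure-traj}; the innermost integral evaluates to $P(s_{n-1},A_n)$. This proves the statement.
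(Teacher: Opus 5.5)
Your proposal is correct. The paper gives no proof of its own and simply imports this result from Meyn--Tweedie, whose argument is the one you give: define the finite-dimensional laws by iterated kernel integration, check consistency via $P(s,S)=1$, and extend to $\mathcal{F}$ by the Ionescu-Tulcea theorem (or by Kolmogorov's theorem when $S$ carries a suitable topology). The only difference is that you carry out the continuity-at-$\emptyset$ step explicitly instead of citing it. Your version of that step is sound: the conditional probabilities $h^k_j$ decrease in $k$, the limit satisfies $h_j=\int P(s_j,\mathrm{d}s_{j+1})\,h_{j+1}$, and the $s^*_j$ are chosen inductively.
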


\begin{definition}[Time-Homogeneous Markov Chain]
    A time-ho\-mo\-ge\-neous Markov chain defined by an initial probability measure $\mu : \Sigma \to [0, 1]$ and
    the transition probability kernel $P : S \times \Sigma \to [0, 1]$ is a stochastic process satisfying \cref{eqn:measure-traj}.
\end{definition}

We consider $\omega$-regular properties which we define in terms of reactivity, which in turn we define in terms of visiting a region $A$ finitely or infinitely many times, respectively, as follows: 
\begin{align}
    &\Fin(A) = \bigcup_{n \in \Nat} \bigcap_{m \geq n} \{ \mc{m} \notin A \},
    &&\Inf(A) = \bigcap_{n \in \Nat} \bigcup_{m \geq n} \{ \mc{m} \in A \}.
    \label{eqn:inf-definition}
\end{align}
    
Using these components, we define reactivity as an intersection of events of the form $\Fin(A_i) \cup \Inf(B_i)$  for $i = 1, \dots, k$ on a finite sequence 
of pairs $(A_1, B_1), \dots, (A_k, B_k) \in \Sigma^2$, which we refer to as a Streett acceptance condition.
\begin{definition}[Reactivity]\label{def:reactivity}
A reactivity property defined by the Streett acceptance condition $(A_1, B_1) \in \Sigma^2, \dots, (A_k, B_k) \in \Sigma^2$
is the event
\begin{equation}\label{eqn:react-property}
    \bigcap_{i = 1}^k \Fin (A_i) \cup \Inf (B_i).
\end{equation}
\end{definition}

\begin{remark}[$\omega$-Regular Properties]\label{rem:omega-regular}
Reactivity conditions are sufficient to encode $\omega$-regular properties. 
Every $\omega$-regular property $\varphi$ over a finite set of atomic propositions $\Pi$ 
may be represented as the set of accepting runs of a 
corresponding deterministic Streett automaton (DSA) \cite{Safra88}. 
A DSA $\mathcal{A}$ consists of 
a finite set of states $Q$, 
an initial state $q_0 \in Q$, 
a transition function $T \colon Q \times 2^\Pi \to Q $, 
and an acceptance condition $(F_1, G_1), \dots, (F_k,G_k)$ where $F_i, G_i \subseteq Q$ for $i = 1, \dots, k$.
An infinite input trace $(p_0, p_1, \allowbreak p_2, \dots ) \in (2^\Pi)^\omega$ is accepted if there exists 
an infinite run $(q_0, q_1, \allowbreak q_2, \dots) \in Q^\omega$ such that $q_{n+1} = T(q_n, p_n)$ for every $n \in \Nat$ and,  
for every $i = 1, \dots, k$, either $\sum_{n=0}^\infty \ind{ \{ q_n \in F_i \} } < \infty$ or $\sum_{n=0}^\infty \ind{\{q_n \in G_i \} } = \infty$.

Given a time-homogeneous Markov chain $\hat \chain$ defined by $(\hat \mu, \hat P)$ with state space $(\hat S, \hat \Sigma)$, 
we define the synchronous composition of $\hat{\chain}$ and the automaton $\mathcal{A}$ as another time-homogeneous Markov chain 
$\chain$ with state space $(\hat S \times Q, \hat \Sigma \otimes 2^Q)$, defined by the initial distribution $\mu(A) = \int_{ s \in \hat{S} } 
\ind{ \{ (s, q_0) \in A \} } ~\hat{\mu}(\mathrm{d}s)$,
and the transition probability kernel 
\begin{align}
P((s, q), A)
= \int_{(s^\prime,q^\prime) \in A}
\ind{ 
\{ T(q, \llangle s \rrangle) = q^\prime \}
}
~\hat{P}(s, \mathrm{d}s^\prime),
\end{align} 
where $\llangle \cdot \rrangle \colon \hat S \to 2^\Pi$ denotes the observation (or labelling) function of $\hat \chain$, 
yielding the set of propositions $\llangle s \rrangle$ that hold true in $s$. 
We then define the Streett acceptance condition $(A_1, B_1), \dots, (A_k, B_k)$ 
where $A_i = \hat S \times F_i$ and $B_i = \hat S \times G_i$ for $i = 1, \dots, k$. 
Then, this construction yields an equivalence   
between the probabilities of the original process $\hat \chain$ satisfying the $\omega$-regular property $\varphi$ and 
the synchronous product $\chain$ satisfying the derived reactivity property:
\begin{equation}
    \Prob_{\hat \mu} (\hat \chain \in \varphi) = \Prob_\mu\left(\chain \in \bigcap_{i = 1}^k \Fin (A_i) \cup \Inf (B_i)\right). 
\end{equation}
For this reason, henceforth we consider reactivity properties, without loss of generality for $\omega$-regular properties. \lipicsEnd
\end{remark}

\begin{figure}[t]
\centering
    \begin{tikzpicture}[minimum size=7mm, node distance=14mm]
        \node[draw, circle] (1) {1};

        \node[draw, circle, above right of=1] (2) {2};
        \node[draw,circle,right of=2] (3) {3};
        \node[draw,circle,right of=3] (4) {4};
        \node[right of=4] (5) {};

        \node[draw, circle, below right of=1] (0) {0};
        \node[draw, circle, right of=0] (m1) {$-1$};
        \node[draw, circle, right of=m1] (m2) {$-2$};
        \node[right of=m2] (m3) {};
        \draw (1) edge[->, bend left] node[above] {$\frac{1}{2}$} (2)
        (2) edge[->, bend left] node[below,xshift=2mm,yshift=3mm] {$\frac{1}{2}$} (1)
        (2) edge[->,bend left] node[above] {$\frac{1}{2}$} (3)
        (3) edge[->,bend left] node[below] {$\frac{1}{2}$} (2)
        (3) edge[->,bend left] node[above] {$\frac{1}{2}$} (4)
        (4) edge[->,bend left] node[below] {$\frac{1}{2}$} (3)
        (4) edge[->, bend left, dashed] node[above] {$\frac{1}{2}$} (5)
        (5) edge[->, bend left, dashed] node[below] {$\frac{1}{2}$} (4);
        \def\pPlus{$\frac{1+\varepsilon}{2}$}
        \def\pMinus{$\frac{1-\varepsilon}{2}$}
        \draw (1) edge[->, bend left] node[above,xshift=3mm,yshift=-3.3mm] {$\frac{1}{2}$} (0);
        \draw (0) edge[->, bend left] node[below] {$\frac{1}{2}$} (1);
        \draw (0) edge[->, bend left] node[above] {$\frac{1}{2}$} (m1);
        \draw (m1) edge[->, bend left] node[below] {\pMinus} (0);
        \draw (m1) edge[->, bend left] node[above] {\pPlus} (m2);
        \draw (m2) edge[->, bend left] node[below] {\pMinus} (m1);
        \draw
        (m2) edge[->, bend left, dashed] node[above] {\pPlus} (m3)
        (m3) edge[->, bend left, dashed] node[below] {\pMinus} (m2);
        \node[right=12mm of 5, anchor=west, yshift=-7mm] (profit)
            {$\textit{Profit} = \{ w : w > 0 \}$};
        \node[anchor=north west,yshift=2mm] (solvency) at (profit.south west)
            {$\textit{Solvency} = \{ w : w \geq 0 \}$};
        \node[anchor=north west,yshift=2mm] (debt) at (solvency.south west)
            {$\textit{Debt} = \{ w : w < 0 \}$};
    \end{tikzpicture}
    \caption{The Lending Casino}\label{fig:lending-casino}
\end{figure}

\begin{example}[The Lending Casino, {\protect \cref{fig:lending-casino}}]\label{ex:lending-casino-intro}

A casino decides on the following business strategy. Each gambler is provided \$1 of initial wealth. A fair coin is tossed, and the gambler's wealth either increases or decreases by \$1 depending upon the outcome of the toss.
If the gambler's wealth becomes negative, then they are given the option of borrowing from the casino, but while they are in debt, the coin is biased such that the probability of winning \$1 is $\frac{1-\varepsilon}{2}$ for some $\varepsilon > 0$.
Notably, there is no upper limit to either the gambler's wealth or the size of their overdraft, and there is no upper bound to the duration of the game.

The casino wishes to answer the following question regarding infinite trajectories: \textit{will a gambler eventually fall into debt and remain there forever?}
This corresponds to the requirement that the property $\Inf(\textit{Solvency})$ holds with probability zero, or equivalently $\Fin(\textit{Solvency})$ holds with probability one. This is the reactivity property defined by the Streett acceptance condition given by the single pair $(\textit{Solvency}, \emptyset)$. \lipicsEnd
\end{example}

A stopping time $\zeta : \Omega \to \Nat \cup \{ \infty\} $ is a random variable for which the event $\{ \zeta = n \}$ is an element of $\mathcal{F}_n$, for every $n \in \Nat$. A special case of a stopping time is the first hitting time of a region, which is key to defining the notion of return times in \cref{sec:decomp} and of termination in \cref{sec:complete-certificates}.
\begin{definition}[First Hitting Time]\label{def:hitting-time} The first hitting time $\tau_A$ of a region $A \in \Sigma$ by the stochastic process $\chain$ is
\begin{align}
    \tau_A = \inf \{ n \geq 0 \colon \mc{n} \in A\}.
\end{align}
\end{definition}
 
\begin{example}[Profit Leads to Debt, and Debt Doesn't Always Lead to Solvency]
We use the notion of first hitting time to describe the event that the gambler hits \textit{Debt} as $\tau_\textit{Debt} < \infty$. It is a classic result \cite[p.182]{durrett2019probability} that a symmetric random walk on the integers visits every integer with probability 1, which implies that from any initial positive wealth $w \in \textit{Profit}$ that the gambler eventually enters debt:
\begin{equation}
    \forall w \in \textit{Profit} \mathpunct.
    \Prob_{\delta_w}( \tau_\textit{Debt} < \infty )= 1,
\end{equation}
and $\forall w \in \textit{Profit} \mathpunct. \Expect_{\delta_w}[\tau_\textit{Debt}] = \infty$ \cite[p.304]{durrett2019probability}.
However, when the gambler reaches a wealth of $-\$1$ or smaller, their probability of returning to \textit{Solvency} is less than 1:
\begin{align}\label{eqn:lending-casino-hitting-probability}
    \sup_{w \in \textit{Debt}}
    \Prob_{\delta_w}( \tau_{\textit{Solvency}} < \infty  ) = 
    \sup_{w \in \textit{Debt}} 
    \left(\frac{1-\varepsilon}{1+\varepsilon}\right)^{|w|}
    = 
    \left(\frac{1-\varepsilon}{1+\varepsilon}\right) < 1.\qquad\qquad\qquad~~~~\lipicsEnd 
\end{align}
\end{example}

In order to reason about the interaction of stopping times with the Markov property, 
we recall the (time) shift operator $\theta : \Omega \to \Omega$ \cite[Definition 3.1.8, p.58]{douc2018markov} defined by
\begin{equation}
    \theta(s_0, s_1, \ldots, s_n, \ldots)
    =
    ( s_1, s_2, \ldots, s_{n+1}, \ldots )
\end{equation}
that deletes the first element of a trajectory. Given $n \in \Nat$, we can apply the shift operator $n$ times to delete a prefix of length $n$, denoted as $\theta^n$ in the usual way. 
The shift operator gives rise to two fundamental concepts in stochastic analysis: shift invariance and the Markov property. 
Our technical results on completeness for quantitative reactivity relies on the fact that reactivity is shift-invariant, i.e., invariant to addition and removal of arbitrary prefixes.

\begin{theorem}[Reactivity is Shift-Invariant]\label{thm:reactivity-shift-invariant}
Let $(A_1, B_1) \in \Sigma^2, \dots, (A_k, B_k) \in \Sigma^2$ be a Streett acceptance condition, and let $L$ be 
the corresponding reactivity property defined in \cref{eqn:react-property}. Then,
\begin{equation}\label{eqn:shift-inv-eqn}
    \theta^{-1} L = L. 
\end{equation}
\end{theorem}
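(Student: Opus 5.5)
The plan is to work pointwise on trajectories. We show that $\omega \in L$ if and only if $\theta(\omega) \in L$ for every $\omega = (s_0, s_1, \ldots) \in \Omega$. Since $\theta^{-1} L = \{\omega : \theta(\omega) \in L\}$, this gives \cref{eqn:shift-inv-eqn} as an equality of sets, not merely an almost-sure one. Preimages commute with finite intersections and unions, so
\[
\theta^{-1} L = \bigcap_{i=1}^k \theta^{-1}\Fin(A_i) \cup \theta^{-1}\Inf(B_i).
\]
It therefore suffices to prove $\theta^{-1}\Fin(A) = \Fin(A)$ and $\theta^{-1}\Inf(A) = \Inf(A)$ for an arbitrary $A \in \Sigma$.

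For the building blocks, we identify $\Phi_m$ with the $m$-th coordinate projection on $\Omega = S^\omega$, as in the canonical construction of \cref{thm:meas-spec}. Then $\Phi_m(\theta(\omega)) = \Phi_{m+1}(\omega)$, and hence $\theta^{-1}\{\Phi_m \in A\} = \{\Phi_{m+1} \in A\}$. Applying $\theta^{-1}$ to \cref{eqn:inf-definition} termwise gives
\[
\theta^{-1}\Inf(A) = \bigcap_{n} \bigcup_{m \ge n} \{\Phi_{m+1} \in A\} = \bigcap_n \bigcup_{m \ge n+1} \{\Phi_m \in A\}.
\]
The unions $U_n = \bigcup_{m\ge n}\{\Phi_m\in A\}$ are decreasing in $n$, so $\bigcap_{n\ge 0} U_{n+1} = \bigcap_{n \ge 0} U_n$, which is $\Inf(A)$. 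The same reindexing shows $\theta^{-1}\Fin(A) = \Fin(A)$, because the intersections $\bigcap_{m\ge n}\{\Phi_m\notin A\}$ are increasing in $n$. Alternatively, $\Fin(A)$ is the complement of $\Inf(A)$ and preimages commute with complements. Intuitively, the number of visits to $A$ by $\theta(\omega)$ differs from that of $\omega$ by at most one, so finiteness is unaffected.

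The only delicate point is the identification of $\Phi_m$ with the coordinate maps on $S^\omega$, so that $\Phi_m \circ \theta = \Phi_{m+1}$ holds everywhere. This is how $\Omega$ and $\mathcal{F}$ are set up in the paper. Once this is fixed, the argument is purely set-theoretic and needs no measure or Markov property. The claimed invariance under adding or removing arbitrary prefixes then follows by iterating, since $\theta^{-n}L = L$ for all $n$.
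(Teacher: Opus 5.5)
Your proof is correct and follows essentially the paper's route: a reindexing argument gives $\theta^{-1}\Fin(A)=\Fin(A)$ and $\theta^{-1}\Inf(A)=\Inf(A)$, and commutation of preimages with the Boolean operations lifts this to $L$ (the paper phrases this step as closure of the class of shift-invariant events under countable unions and intersections). Your monotonicity argument for dropping the first tail set only spells out the $m\geq 1$ versus $m\geq 0$ equality that the paper asserts directly.
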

\begin{proof}
The event of visiting a region $A$ finitely many times $\Fin(A)$, as well as the event of visiting a 
region $B$ infinitely many times $\Inf(B)$ both satisfy shift invariance:
\begin{align}
    \notag\theta^{-1}\Fin(A) &= \{w \in \Omega \colon \exists m\geq 1 \mathpunct. \forall n\geq m \mathpunct. \mc{n}(w)\notin A\} \\&= \{w \in \Omega \colon \exists m\geq 0 \mathpunct. \forall n\geq m \mathpunct. \mc{n}(w)\notin A\} = \Fin(A),\\
    \notag\theta^{-1}\Inf(B) &= \{w \in \Omega \colon \forall m\geq 1 \mathpunct. \exists n\geq m \mathpunct. \mc{n}(w)\in B\}\\ &=  \{w \in \Omega \colon \forall m\geq 0 \mathpunct. \exists n\geq m \mathpunct. \mc{n}(w)\in B\} = \Inf(B).
\end{align}
The class of events $L$ for which \cref{eqn:shift-inv-eqn} holds is closed under countable unions and intersections (\cite[Proposition 5.1.5(i), p.99]{douc2018markov}).
Since the reactivity property \cref{eqn:react-property} is a finite Boolean combination of $\Fin$- and $\Inf$-events, it satisfies \cref{eqn:shift-inv-eqn} as well.
\end{proof}

Our technical results on completeness for Streett acceptance (\cref{thm:eps-completeness,thm:completeness}) build upon the result that 
every time-homogeneous Markov chain satisfies the Markov property.

\begin{theorem}[Markov Property {\protect \cite[p.~66]{meyn_tweedie_glynn_2009}}]\label{thm:strong-markov-property}
Let $\chain$ be a time-homogeneous Markov chain. Then, for 
any bounded random variable $H$ on $(\Omega, \mathcal{F}, \Prob_\mu)$, and any time $n \in \Nat$: 
\begin{align}\label{eqn:strong-markov-property}
    &\Expect_\mu
    \left[ 
    H \circ \theta^n 
    \mid \mathcal{F}_n 
    \right]
    = \Expect_{\delta_{\mc{n}}}[H]
    &&\Prob_\mu\text{-a.s.}
\end{align}
\end{theorem}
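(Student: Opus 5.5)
The statement is the (simple) Markov property for time-homogeneous Markov chains: for bounded $\mathcal{F}$-measurable $H$ and fixed $n$, $\Expect_\mu[H\circ\theta^n \mid \mathcal{F}_n] = \Expect_{\delta_{\mc{n}}}[H]$ almost surely. The plan is the standard monotone-class argument, using only the finite-dimensional formula \cref{eqn:measure-traj} of \cref{thm:meas-spec}.

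First, we note that the right-hand side $s \mapsto \Expect_{\delta_s}[H]$ is $\Sigma$-measurable. This holds for cylinder indicators $H = \ind\{\mc{0}\in A_0,\dots,\mc{m}\in A_m\}$ because \cref{eqn:measure-traj} expresses it as an iterated kernel integral, which is measurable in $s$. It then extends to all bounded measurable $H$ by a Dynkin argument. Consequently $\Expect_{\delta_{\mc{n}}}[H]$ is $\mathcal{F}_n$-measurable, and it remains to verify the defining identity of conditional expectation:
\[
\Expect_\mu[\ind_G \cdot H\circ\theta^n] = \Expect_\mu[\ind_G\, \Expect_{\delta_{\mc{n}}}[H]] \quad \text{for all } G\in\mathcal{F}_n .
\]
Since $\mathcal{F}_n$ is generated by the $\pi$-system of cylinders $\{\mc{0}\in A_0,\dots,\mc{n}\in A_n\}$, it suffices by Dynkin's $\pi$-$\lambda$ theorem to check this identity for such $G$.

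Next, we fix such a $G$ and take $H$ to be a cylinder indicator $\ind\{\mc{0}\in C_0,\dots,\mc{m}\in C_m\}$. Then $H\circ\theta^n$ is the indicator of $\{\mc{n}\in C_0,\dots,\mc{n+m}\in C_m\}$. The left-hand side is therefore the probability of a cylinder of length $n+m+1$, with sets $A_0,\dots,A_{n-1},A_n\cap C_0,C_1,\dots,C_m$. Applying \cref{eqn:measure-traj} and splitting the iterated integral at index $n$ (Fubini for kernels), the inner integral over $s_{n+1},\dots,s_{n+m}$ is exactly $\Prob_{\delta_{s_n}}(\mc{0}\in C_0,\dots,\mc{m}\in C_m)$, again by \cref{eqn:measure-traj} with initial measure $\delta_{s_n}$. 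Time homogeneity is used here: the same kernel $P$ appears at every step, so the tail integral does not depend on $n$. The outer integral is then, by \cref{eqn:measure-traj} applied to the first $n+1$ coordinates, $\Expect_\mu[\ind_G\,\Expect_{\delta_{\mc{n}}}[H]]$.

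Finally, we extend from cylinder indicators to general $H$. For fixed $G$, the class of $L\in\mathcal{F}$ such that the identity holds with $H=\ind_L$ is a $\lambda$-system: it is closed under proper differences by linearity and under increasing unions by monotone convergence on both sides. It contains the $\pi$-system of cylinders, which generates $\mathcal{F}$, so it is all of $\mathcal{F}$. Linearity gives simple $H$, and bounded monotone or uniform approximation gives every bounded measurable $H$.

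The main obstacle is bookkeeping rather than a conceptual difficulty. The delicate step is the careful splitting of the iterated kernel integral in \cref{eqn:measure-traj} at time $n$ and identifying the tail with $\Prob_{\delta_{s_n}}$; this requires the measurability of $s\mapsto P(s,\cdot)$-integrals, which we establish alongside it.
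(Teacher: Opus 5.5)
Your proposal is correct and gives the standard monotone-class proof of the (simple) Markov property. You reduce to cylinder events $G \in \mathcal{F}_n$ and cylinder indicators $H$, split the iterated kernel integral of \cref{eqn:measure-traj} at time $n$ to recognise $\Prob_{\delta_{s_n}}$, and extend in both $G$ and $H$ by Dynkin's $\pi$-$\lambda$ theorem. The paper gives no proof of its own and simply defers to Meyn--Tweedie, and your argument is the standard one behind that reference. One step you state slightly quickly is identifying the outer integral with $\Expect_\mu[\mathbf{1}_G\, \Expect_{\delta_{\mc{n}}}[H]]$: this needs \cref{eqn:measure-traj} extended from rectangles to bounded measurable integrands $h(s_n)$ via simple functions and monotone convergence, which is routine.
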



\section{Absorbing-Region Decomposition for Almost-Sure and Quantitative Reactivity}\label{sec:decomp}

We show that every reactivity property admits a decomposition into multiple independent obligations, one for each Streett pair, where each obligation consists of two interdependent components: quantitative safety (i.e.,\ safety with positive probability) and almost-sure termination into appropriate regions of the state space, termed absorbing regions. 

We first recall the notion of first return time, which is the first hitting time that occurs strictly after the initial step of the process.
\begin{definition}[First Return Time]\label{def:successive-return-times}
The first return time $\sigma_A$ of a region $A \in \Sigma$ by the stochastic process $\chain$ is
\begin{align}
    \sigma_A &= 1 + \tau_A \circ \theta, 
    \label{eqn:first-return-time}
\end{align}
where $\tau_A$ is the first hitting time of the region $A$ (\cref{def:hitting-time}).
\end{definition}

We observe that, as time tends to infinity, the probability of returning to a region 
$A$ converges almost surely to the probability of visiting 
$A$ infinitely often, and that this limit necessarily takes a value of either 0 or 1 for every initial distribution.
\begin{lemma}[{\protect \cite[Equation 4.2.5]{douc2018markov}}]\label{lem:technical}
Let $A \in \Sigma$ be a region. Then
\begin{align}
    \lim_{n \to \infty}
    \Prob_{\delta_{\mc{n}}}
    (~\sigma_A < \infty~) = \ind{ \{\chain \in \Inf(A) \}},
\label{eqn:inf-lim}
\end{align}
holds $\Prob_\xi$-a.s, for any initial probability distribution $\xi$ on $(S, \Sigma)$.
\end{lemma}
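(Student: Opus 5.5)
The plan is to express $\Prob_{\delta_{\mc{n}}}(\sigma_A<\infty)$ as a conditional probability of a "future" event given $\mathcal{F}_n$, and then apply Lévy's upward (martingale convergence) theorem. Define the events
\begin{equation*}
    E_n = \bigcup_{m > n} \{\mc{m}\in A\} = \theta^{-n}\{\sigma_A<\infty\}, \qquad n \in \Nat,
\end{equation*}
the second equality holding because $\sigma_A<\infty$ means that some $\mc{m}$ with $m\geq 1$ lies in $A$. Applying \cref{thm:strong-markov-property} with the bounded variable $H=\ind\{\sigma_A<\infty\}$ gives
\begin{equation*}
    \Prob_\xi(E_n \mid \mathcal{F}_n) = \Expect_\xi\left[H\circ\theta^n \mid \mathcal{F}_n\right] = \Prob_{\delta_{\mc{n}}}(\sigma_A<\infty) \qquad \Prob_\xi\text{-a.s.}
\end{equation*}
The statement therefore reduces to showing $\Prob_\xi(E_n\mid\mathcal{F}_n)\to \ind\{\chain\in\Inf(A)\}$ almost surely.

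The events $E_n$ decrease in $n$, and $\bigcap_n E_n = \Inf(A)$ by \cref{eqn:inf-definition}. Fix $k$. For all $n\geq k$ we have $E_n \subseteq E_k$, so monotonicity of conditional expectation gives
\begin{equation*}
    \Prob_\xi(\Inf(A)\mid\mathcal{F}_n)\leq \Prob_\xi(E_n\mid\mathcal{F}_n)\leq \Prob_\xi(E_k\mid\mathcal{F}_n) \qquad \text{a.s.}
\end{equation*}
By Lévy's upward theorem, $\Prob_\xi(L\mid\mathcal{F}_n)\to \ind\{L\}$ a.s.\ for every $L\in\mathcal{F}$, since $\mathcal{F}=\sigma(\bigcup_n\mathcal{F}_n)$. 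Applying this to both outer terms yields
\begin{equation*}
    \ind\{\Inf(A)\}\leq\liminf_n \Prob_\xi(E_n\mid\mathcal{F}_n)\leq\limsup_n \Prob_\xi(E_n\mid\mathcal{F}_n)\leq \ind\{E_k\} \qquad \text{a.s.}
\end{equation*}
We then intersect over the countably many $k$, which keeps the exceptional set null, and let $k\to\infty$. Since $\ind\{E_k\}\downarrow\ind\{\Inf(A)\}$, the limit exists and equals $\ind\{\chain\in\Inf(A)\}$ almost surely.

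The main obstacle is the "moving target": the event $E_n$ changes with $n$, so Lévy's theorem cannot be applied to it directly. The sandwich argument above, which is a Hunt-type lemma specialised to decreasing indicators, is what handles this. The other point requiring care is the Markov property step. The identity $\theta^{-n}\{\sigma_A<\infty\}=E_n$ must be checked carefully from \cref{eqn:first-return-time}, and we must confirm that $x\mapsto\Prob_{\delta_x}(\sigma_A<\infty)$ is measurable, which is standard for kernels. Once the a.s.\ identification is made for each $n$, the countable union of null sets is harmless.
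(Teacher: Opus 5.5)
Your proof is correct. The paper does not prove this lemma itself but imports it from Douc et al.\ (Equation 4.2.5), and your argument is essentially the standard one behind that citation: you identify $\Prob_{\delta_{\mc{n}}}(\sigma_A<\infty)$ with $\Prob_\xi(E_n\mid\mathcal{F}_n)$ via the Markov property, sandwich it between $\Prob_\xi(\Inf(A)\mid\mathcal{F}_n)$ and $\Prob_\xi(E_k\mid\mathcal{F}_n)$, apply L{\'e}vy's upward theorem, and let $k\to\infty$. It is also the natural extension of the paper's own argument for \cref{thm:inv-exists-as}: there, shift-invariance gives $\theta^{-n}L=L$, so the target does not move and the sandwich is not needed.
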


\begin{figure}
    \centering
    \begin{tikzpicture}
        \begin{axis}[
            xlabel={$n$},
            ylabel={$\Phi_n$},
            name=stateplot,
            ymin=-2500, ymax=500,
            xmin=0, xmax=200001,
            xtick distance=1000,
            ytick distance=1000,
            minor y tick num=1,
            scaled x ticks=false,
            xtick={0,100000,200000},
            height=4cm, 
            width=.5*\columnwidth-3mm,
            ytick={0,-1000, -2000},
            yticklabels={0, \(-10^3\), \(-2 \cdot 10^3\)},
            xticklabels={0, \(10^5\), \(2 \cdot 10^5\)},
            ylabel style={yshift=-12pt},
            ]
            \def\colorone{black!80}
            \def\colorzero{black!30}
            \def\colorbetween{black!55}
            \def\filename{figs/casino_state.txt}

            \addplot [\colorone] table[x={n}, y={s1}] {\filename};
            \addplot [\colorone] table[x={n}, y={s2}] {\filename};
            \addplot [\colorone] table[x={n}, y={s3}] {\filename};
            \addplot [\colorzero] table[x={n}, y={s4}] {\filename};
            \addplot [\colorzero] table[x={n}, y={s5}] {\filename};
            \addplot [\colorone] table[x={n}, y={s6}] {\filename};
            \addplot [\colorzero] table[x={n}, y={s7}] {\filename};
            \addplot [\colorone] table[x={n}, y={s8}] {\filename};
            \addplot [\colorbetween] table[x={n}, y={s9}] {\filename};
            \addplot [\colorone] table[x={n}, y={s10}] {\filename};
        \end{axis}
        
        \begin{axis}[xlabel={$n$}, ylabel={$\Prob_{\delta_{\Phi_n}}(\sigma_{\textit{\tiny Solvency}} < \infty)$},
            ymin=0, ymax=1,
            xmin=0, xmax=200001,
            xtick distance=10000,
            ytick distance=0.2,
            minor y tick num=1,
            scaled x ticks=false,
            xtick={0,100000,200000},
            height=4cm, 
            width=.5*\columnwidth-3mm,
            xticklabels={0, \(10^5\), \(2 \cdot 10^5\)},
            at={(stateplot.north east)},
            anchor=north west,
            xshift=20mm
            ]
            \def\colorone{black!80}
            \def\colorzero{black!30}
            \def\colorbetween{black!55}
            \def\filename{figs/casino_prob.txt}

            \addplot [\colorone] table[x={n}, y={s1}] {\filename};
            \addplot [\colorone] table[x={n}, y={s2}] {\filename};
            \addplot [\colorone] table[x={n}, y={s3}] {\filename};
            \addplot [\colorzero] table[x={n}, y={s4}] {\filename};
            \addplot [\colorzero] table[x={n}, y={s5}] {\filename};
            \addplot [\colorone] table[x={n}, y={s6}] {\filename};
            \addplot [\colorzero] table[x={n}, y={s7}] {\filename};
            \addplot [\colorone] table[x={n}, y={s8}] {\filename};
            \addplot [\colorbetween] table[x={n}, y={s9}] {\filename};
            \addplot [\colorone] table[x={n}, y={s10}] {\filename};
        \end{axis}
    \end{tikzpicture}
    \caption{Simulation of Return Probabilities}
    \label{fig:return_prob}
\end{figure}
\begin{example}[Simulating \cref{lem:technical}]
\Cref{fig:return_prob} empirically illustrates this standard result using 10 stochastic simulations of the Lending Casino from \cref{fig:lending-casino}. The left hand plot shows that, over long runs, the process tends to remain in {\em Debt} states (negative values) and stops visiting {\em Solvency} states (non-negative values). The right hand plot, based on the same simulations, shows that the probability of returning to \textit{Solvency} disappears from the plot, because it tends to 0. This supports the following almost-sure property of the Lending Casino:
\begin{equation}
    \ind{ \{\chain \in \Inf(\textit{Solvency}) \}} = 0 =
    \lim_{n \to \infty}
    \Prob_{\delta_{\mc{n}}}
    (~\sigma_{\textit{Solvency}} < \infty~).\qquad\qquad\lipicsEnd
\end{equation}
\end{example}

\Cref{lem:technical} leads to the following standard result---Orey's Theorem---which provides a sufficient condition for the reactivity property defined by a single Streett pair. 
\begin{theorem}[Orey's Theorem {\protect {\cite[Theorem 4.2.6]{douc2018markov}}}]\label{thm:orey}
Let $A, B \in \Sigma$ be regions such that 
\begin{align}
    \inf_{s \in A} \Prob_{\delta_s} \left(~ 
    \sigma_B < \infty
    ~\right) > 0.\label{eq:orey-condition}
\end{align}
Then $\Prob_{\mu}\left(
\Fin(A) \cup \Inf(B)
\right) = 1$.
\end{theorem}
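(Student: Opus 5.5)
We prove Orey's Theorem by showing that $\Inf(A) \subseteq \Inf(B)$ up to a $\Prob_\mu$-null set. Since $\Fin(A)$ is the complement of $\Inf(A)$, this gives $\Fin(A) \cup \Inf(B) \supseteq \Fin(A) \cup \Inf(A) = \Omega$ almost surely, which is the claim. Write $\delta = \inf_{s \in A} \Prob_{\delta_s}(\sigma_B < \infty) > 0$ for the constant in \cref{eq:orey-condition}. The main tool is \cref{lem:technical}, applied with the region $B$ and the initial distribution $\xi = \mu$. It says that, $\Prob_\mu$-almost surely,
\begin{equation*}
\lim_{n\to\infty} \Prob_{\delta_{\mc{n}}}(\sigma_B < \infty) = \ind\{\chain \in \Inf(B)\}.
\end{equation*}

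Fix a trajectory outside the exceptional null set of \cref{lem:technical} that lies in $\Inf(A)$. Then there are infinitely many indices $n$ with $\mc{n} \in A$. For each such $n$, the hypothesis gives $\Prob_{\delta_{\mc{n}}}(\sigma_B < \infty) \geq \delta$. So along this infinite subsequence of times the sequence $\Prob_{\delta_{\mc{n}}}(\sigma_B < \infty)$ stays at least $\delta > 0$.

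By \cref{lem:technical} the whole sequence converges, and its limit lies in $\{0,1\}$. The limit of the full sequence equals the limit along any subsequence, so it must be at least $\delta > 0$. Hence the limit is $1$, so $\ind\{\chain \in \Inf(B)\} = 1$ and the trajectory lies in $\Inf(B)$. This proves $\Prob_\mu(\Inf(A) \setminus \Inf(B)) = 0$, and therefore $\Prob_\mu(\Fin(A) \cup \Inf(B)) = 1$.

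The argument is short once \cref{lem:technical} is available, and the main obstacle sits inside that lemma. If it had to be justified here, I would use the Markov property (\cref{thm:strong-markov-property}) with $H = \ind\{\sigma_B < \infty\}$. This rewrites $\Prob_{\delta_{\mc{n}}}(\sigma_B < \infty)$ as $\Prob_\mu(\exists m > n\mathpunct. \mc{m} \in B \mid \mathcal{F}_n)$. The events $\{\exists m > n\mathpunct. \mc{m} \in B\}$ decrease to $\Inf(B)$. Lévy's zero–one law, together with a sandwich argument over the shrinking tail events, then gives the almost-sure convergence to $\ind\{\chain \in \Inf(B)\}$. The only remaining technical care is measurability of $s \mapsto \Prob_{\delta_s}(\sigma_B < \infty)$, which follows from the kernel construction in \cref{thm:meas-spec}.
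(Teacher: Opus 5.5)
Your proof is correct and follows the route the paper indicates. The paper does not prove Orey's theorem itself: it cites \cite[Theorem 4.2.6]{douc2018markov} and says the result follows from \cref{lem:technical}. Your argument is exactly that derivation: infinitely many visits to $A$ keep $\Prob_{\delta_{\mc{n}}}(\sigma_B < \infty) \geq \delta$ along a subsequence, which forces the $\{0,1\}$-valued limit to be $1$. The paper reuses the same pattern to discard $\Inf(J_i)$ in the proof of \cref{thm:absorbing-regions-decomposition}. Your sketch of \cref{lem:technical}, via the Markov property with $H = \ind\{\sigma_B < \infty\}$ and L{\'e}vy's zero--one law with a sandwich over the shrinking tail events, is the standard justification.
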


Intuitively, the theorem states that if every state in 
$A$ leads to a trajectory that eventually returns to 
$B$ with positive probability, then any process that visits 
$A$ infinitely often must almost surely also visit 
$B$ infinitely often. In other words, it provides a sufficient condition for the event 
$\Fin(A) \cup \Inf(B)$, that is, the acceptance of a single Streett pair $(A,B)$.
However, Orey’s theorem is incomplete, because a Markov chain can almost-surely satisfy a Streett pair, yet fail to satisfy the condition \eqref{eq:orey-condition}. We will now present two counterexamples to its completeness and introduce strategies to address them. Our three main results in \cref{thm:absorbing-regions-decomposition,thm:eps-completeness,thm:completeness} then generalise these two strategies, leading to our framework for the construction of sound and complete proof rules for reactivity properties on general state spaces.

\begin{example}[Orey's Theorem is Incomplete for Almost-Sure Reactivity]\label{ex:orey-incomplete}
We show that it is possible for a Markov chain to satisfy a Streett pair almost-surely without satisfying the premise of Orey's Theorem.
Observe the finite-state system given in \cref{fig:orey_no}, and consider the Streett pair $(A,B)$ 
    where $A = \{ s_1, s_3\}$ are the states labelled by $a$, and $B = \{ s_2\}$ are the states labelled by $b$. 
    It is easy to see that the system satisfies the Streett condition $(A,B)$ almost surely, as we either alternate between $s_1$ and $s_2$, visiting $B$ infinitely often, or visit $s_3$ once and then stay in $s_4$, visiting $A$ at most finitely many times. However, it is also easy to see that the premise of Orey's Theorem (\cref{eq:orey-condition}) does not hold: consider the problematic state $s_3$. For this state, $s_3\in A$, but it is impossible to reach a state in $B$, 
    making the minimum probability to reach $B$ from $A$ nil and violating \cref{eq:orey-condition}.
    We address the incompleteness of this example by extending the set of states that the system 
    must return to infinitely often. In other words, we extend $B$ with the additional 
    set $J = \{ s_4 \}$ which, in this example, consists of all states outside of $A$ from which the probability of returning to $A$ is no greater than 1/2.
    We then observe that while \cref{eq:orey-condition} does not hold for the original pair $(A,B)$, it does hold on the extended pair $(A,B\cup J)$. Furthermore, we can intuitively see that it is sound to use this new extended Streett pair, because in this example, it is impossible to alternate 
    between $A$ and $J\setminus B$ infinitely many times. Therefore, visiting $J\setminus B$ infinitely often implies visiting $A$ finitely many times.\lipicsEnd
\end{example}
\begin{figure}[t]
    \centering
    \begin{minipage}{0.48\textwidth}
        \centering
        \begin{tikzpicture}[minimum size=7mm, node distance=14mm]
            \node[draw, circle, initial, initial text=$\mu$] (s0) {$s_0$};
            \node[draw, circle, right of=s0,
                  label={[label distance=-5pt]above:{$\{a\}$}}] (s1) {$s_1$};
            \node[draw, circle, right of=s1,
                  label={[label distance=-5pt]above:{$\{b\}$}}] (s2) {$s_2$};
            
            \node[circle, minimum size=7mm, above of=s1,
                  label={[label distance=-5pt]above:{\phantom{$\{a\}$}}}] (spacer) {};
            \node[draw, red, circle, below of=s1,
                  label={[red,label distance=-5pt]above:{$\{a\}$}}] (s3) {$s_3$};
            \node[draw, circle, right of=s3] (s4) {$s_4$};
            \draw (s0) edge[->] node[below, yshift=1mm] {$\frac{1}{2}$} (s1)
                  (s1) edge[->, bend left] node[above, yshift=-1mm] {} (s2)
                  (s2) edge[->, bend left] node[below, yshift=1mm] {} (s1)
                  (s0) edge[->] node[below left, xshift=2mm, yshift=2mm] {$\frac{1}{2}$} (s3)
                  (s3) edge[->] node[above, xshift=-1mm, yshift=-1mm] {} (s4)
                  (s4) edge[->, loop right] node[right] {} (s4);
            \node (J) [draw=orange, fit=(s4), inner sep=0.2cm, dashed, thick,
                       fill=orange!20, fill opacity=0.2] {};
            \node [xshift=2.0ex, yshift=-4.0ex, orange] at (J.east) {$J$};
        \end{tikzpicture}
        \caption{Almost-Sure Reactivity}
        \label{fig:orey_no}
    \end{minipage}\hfill
    \begin{minipage}{0.48\textwidth}
        \centering
        \begin{tikzpicture}[minimum size=7mm, node distance=14mm]
            \node[draw, circle, initial, initial text=$\mu$] (1) {$s_0$};
            \node[draw, circle, right of=1,
                  label={[label distance=-5pt]above:{$\{a\}$}}] (2) {$s_1$};
            \node[draw, circle, right of=2,
                  label={[label distance=-5pt]above:{$\{b\}$}}] (3) {$s_2$};
            \node[draw, circle, below of=2,
                  label={[label distance=-5pt]above:{$\{a\}$}}] (0) {$s_3$};
            \node[draw, circle, right of=0] (m1) {$s_4$};
            \node[draw, red, circle, above of=2,
                  label={[label distance=-5pt,red]above:{$\{a\}$}}] (5) {$s_5$};
            \draw (1) edge[->] node[below, yshift=1mm] {$\frac{1}{3}$} (2)
                  (2) edge[->, bend left] node[above] {} (3)
                  (3) edge[->, bend left] node[below] {} (2)
                  (1) edge[->] node[below left, xshift=2mm, yshift=2mm] {$\frac{1}{3}$} (0)
                  (0) edge[->] node[above] {} (m1)
                  (1) edge[->] node[above left, xshift=2mm, yshift=0mm] {$\frac{1}{3}$} (5)
                  (5) edge[->, loop right] (5)
                  (m1) edge[->, loop right] (m1);
            \node (J) [draw=orange, fit=(m1), inner sep=0.2cm, dashed, thick,
                       fill=orange!20, fill opacity=0.2] {};
            \node [xshift=3.1ex, yshift=-4.0ex, orange] at (J.east) {$J$};
            \node (I) [draw=yellow!10!green, fit=(m1) (0) (1) (2) (3),
                       inner sep=0.4cm, dashed, thick, fill=green!20, fill opacity=0.2] {};
            \node [xshift=2.0ex, yellow!10!green] at (I.east) {$I$};
        \end{tikzpicture}
        \caption{Quantitative Reactivity}
        \label{fig:orey_no_quant}
    \end{minipage}
\end{figure}

\begin{example}[Orey's Theorem is Incomplete for Quantitative Reactivity]\label{ex:orey-no-quant}
    Consider the Markov chain given in \cref{fig:orey_no_quant}, which extends the previous process with the additional problematic state $s_5$. 
    This state fails to satisfy the reactivity property for the Streett condition $(A,B)$, where $A =\{ s_1, s_3, s_5 \}$ and $B = \{ s_2 \}$, but such trajectories account only for 1/3 of the total probability. The remainder of the system satisfies the property and these trajectories constitute the remaining 2/3 of the probability mass. However, the strategy of \cref{ex:orey-incomplete} alone fails to capture this quantitative distinction. In fact, the system does not satisfy the updated condition 
    $\inf_{s \in A} \Prob_{\delta_s}(~\sigma_{B \cup J} < \infty~) > 0$, because $s_5 \in A$ has no path that can reach $B \cup J$.
    
    To resolve this, we identify an additional invariant region $I$, such that any trajectory that remains entirely within $I$ satisfies the Streett acceptance condition $(A, B \cup J)$ almost surely. This is equivalent to identifying a sufficiently large region for which either $I$ is exited or the extended Streett pair $(A, B \cup J)$ is satisfied. 
    As a result, in this example, the probability of exiting $I$ serves to upper-bound the probability of violating the original reactivity condition. \lipicsEnd
\end{example}

Our technical results formalise the intuition by which adding absorbing regions $J$ and invariant regions $I$ provides completeness while retaining soundness, as we illustrated in \cref{ex:orey-incomplete,ex:orey-no-quant}. 
Crucially, we show that this approach generalises to every time-homogeneous Markov chain on a general state space. 

\begin{figure}[t]
\begin{tikzpicture}
[x=0em,y=0em,
box/.style={draw=black,minimum width=0.5*\linewidth-2pt,rounded corners=0.2em, minimum height=2em},
header/.style={minimum width=12em}]
\node (origin) at (0,0) {};

\def\dist{4pt}

\node[box, minimum width=\linewidth] (meta) at (0,0) {\cref{thm:meta}$: \Prob((I^\omega)^{\sf c} \cup L)=1 \implies \exists J \mathpunct. \text{Eqs.}\, \eqref{eqn:semantic-1-quant}\land \eqref{eqn:semantic-2-quant}$ };
\node[box, below=2*\dist of meta, minimum width=\linewidth] (soundness) {\cref{thm:absorbing-regions-decomposition} (Soundness): $\forall I,J \mathpunct.\text{ Eqs.}\, \eqref{eqn:semantic-1-quant}\land \eqref{eqn:semantic-2-quant} \Rightarrow \Prob(L) \geq \Prob(I^\omega) $};
\node[box, above=\dist of meta.north west, anchor=south west] (quantprobbound) {\cref{lem:cav25-derivative}: $\exists I \mathpunct. \Prob(I^\omega)\geq \Prob(L)-\varepsilon$};
\node[box, above=\dist of meta.north east, anchor=south east] (probbound) {\cref{thm:inv-exists-as}: $\exists I \mathpunct. \Prob(I^\omega)=\Prob(L)=1$};
\node[box, above=\dist of quantprobbound.north west, anchor=south west] (epscompleteness) {\cref{thm:eps-completeness} ($\varepsilon$-Completeness)};
\node[box, above=\dist of probbound.north west, anchor=south west,minimum width=0.5*\linewidth-0.66*\dist] (completeness) {\cref{thm:completeness} (Completeness)};
\node[header, above=\dist of completeness] (as) {\textbf{Almost-Sure}};
\node[header,above=\dist of epscompleteness] (quant) {\textbf{Quantitative}};

\draw[dashed] ([yshift=0.5em]soundness.north west) -- ([yshift=0.5em]soundness.north east);
\end{tikzpicture}
\caption{Structure of our Technical Results}
\label{fig:structure-completeness}
\end{figure}

Our arguments build on the scaffolding illustrated in \cref{fig:structure-completeness}. 
We present our decomposition in full generality and show its soundness in \cref{thm:absorbing-regions-decomposition}. We then prove in \cref{thm:meta} that every Markov chain that either violates an invariant $I$ or satisfies a reactivity property necessarily admits appropriate absorbing regions $J$. 
This provides the basis for establishing the completeness of our decomposition from both the quantitative and the qualitative (almost-sure) perspectives. In the general quantitative setting, we show that every Markov chain admits a sufficiently strong invariant $I$, whose probability of invariance lower-bounds the probability of the reactivity property with any arbitrary $\varepsilon > 0$ gap; this is shown in \cref{lem:cav25-derivative} which in turn establishes our $\varepsilon$-completeness result of \cref{thm:eps-completeness}. In the special case that the reactivity property is satisfied almost surely, we show that this gap is zero, which in turn establishes our completeness result for almost-sure reactivity of \cref{thm:completeness}.

\begin{theorem}[Absorbing-Region Decomposition]\label{thm:absorbing-regions-decomposition}
    Let $(A_1,\allowbreak B_1) \in \Sigma^2, \dots, (A_k,B_k) \in \Sigma^2$ be a Streett acceptance condition. 
    Suppose there exists a region $I \in \Sigma$ and $k$ regions $J_1 \in \Sigma, \dots, J_k \in \Sigma$ 
    such that, for every $i = 1, \ldots, k$, $J_i \subseteq I \smallsetminus A_i$ and
    \begin{align}
        &\sup_{s \in J_i } \Prob_{\delta_s}\left(~{\sigma}_{A_i} < \infty ~\right) < 1,\label{eqn:semantic-1-quant}\\
        &\inf_{s \in I}  \Prob_{\delta_s}\left(~\sigma_{B_i \cup J_i \cup I^{\sf c}} < \infty ~\right) = 1.\label{eqn:semantic-2-quant}
    \end{align}
    
\noindent Then
    \begin{equation}\label{lb-streett}
        \Prob_\mu \left( \bigcap_{i = 1}^k \Fin (A_i) \cup \Inf (B_i)  \right) \geq \Prob_\mu(I^\omega).
    \end{equation}

\end{theorem}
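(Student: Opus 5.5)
The plan is to prove, for each $i$ separately, that $I^\omega \subseteq \Fin(A_i)\cup\Inf(B_i)$ up to a $\Prob_\mu$-null set. Intersecting over the finitely many $i$ then gives \cref{lb-streett}. Here $I^\omega$ denotes the event $\{\forall n\mathpunct. \mc{n}\in I\}$. Fix $i$ and write $A=A_i$, $B=B_i$, $J=J_i$. It suffices to establish two almost-sure inclusions:
\begin{align*}
&\text{(a)}\quad I^\omega \subseteq \Inf(B\cup J) &&\Prob_\mu\text{-a.s.},\\
&\text{(b)}\quad \Inf(J)\subseteq \Fin(A) &&\Prob_\mu\text{-a.s.}
\end{align*}
Given these, on $I^\omega$ either $\Inf(B)$ holds, or $\Inf(J)$ holds and then $\Fin(A)$ holds.

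For (a), I would use \cref{lem:technical} with the region $C = B\cup J\cup I^{\sf c}$. By \cref{eqn:semantic-2-quant}, $\Prob_{\delta_s}(\sigma_C<\infty)=1$ for every $s\in I$. On $I^\omega$ every $\mc{n}$ lies in $I$, so $\Prob_{\delta_{\mc{n}}}(\sigma_C<\infty)=1$ for all $n$, and the limit in \cref{eqn:inf-lim} equals $1$. Hence $\Inf(C)$ holds almost surely on $I^\omega$. Since $I^\omega$ excludes visits to $I^{\sf c}$, this yields $\Inf(B\cup J)$ almost surely on $I^\omega$.

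For (b), I would apply \cref{lem:technical} to $A$. Let $p=\sup_{s\in J}\Prob_{\delta_s}(\sigma_A<\infty)<1$, which is the assumption \cref{eqn:semantic-1-quant}. On $\Inf(J)$, infinitely many $n$ satisfy $\mc{n}\in J$, and for those $n$ we have $\Prob_{\delta_{\mc{n}}}(\sigma_A<\infty)\le p$. The limit in \cref{eqn:inf-lim}, which exists almost surely, is therefore at most $p<1$. Because that limit equals $\ind\{\chain\in\Inf(A)\}$, it must be $0$, so $\Fin(A)$ holds. The assumption $J\subseteq I\smallsetminus A$ is not really needed for this step; it is only a convenience. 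The null sets come from countably many applications of \cref{lem:technical}, one per $i$ for each of the two regions, so their union is still null.

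The step I expect to be the main obstacle is (a), specifically the claim that \cref{lem:technical} may be applied under $\Prob_\mu$ together with the conditioning on $I^\omega$. The lemma's identity holds $\Prob_\mu$-a.s. on all of $\Omega$, and we only evaluate it pointwise on $I^\omega$, so no conditioning is actually required. What remains is to check measurability of $s\mapsto\Prob_{\delta_s}(\sigma_C<\infty)$, which is standard. Finally, taking the intersection over $i$ gives $\Prob_\mu\bigl(\bigcap_i \Fin(A_i)\cup\Inf(B_i)\bigr)\ge\Prob_\mu(I^\omega)$.
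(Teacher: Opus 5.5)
Your proof is correct. It has the same two-part structure as the paper's: one step guaranteeing visits to $B_i\cup J_i$ along $I^\omega$, and one step showing $\Inf(J_i)\subseteq\Fin(A_i)$ $\Prob_\mu$-a.s. You obtain the second step from \cref{lem:technical} exactly as the paper does.

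The difference is in your step (a). The paper restricts \cref{eqn:semantic-2-quant} to $A_i\cap I$ and applies Orey's Theorem (\cref{thm:orey}) to the pair $(A_i\cap I,\,B_i\cup J_i\cup I^{\sf c})$. This yields only $\Fin(A_i\cap I)\cup\Inf(B_i\cup J_i\cup I^{\sf c})$ almost surely, and the paper then needs a round of set algebra to reach $(I^\omega)^{\sf c}\cup\Fin(A_i)\cup\Inf(B_i)\cup\Inf(J_i)$. You instead apply \cref{lem:technical} directly to $C=B\cup J\cup I^{\sf c}$, using that the return probability is exactly $1$ at every time along $I^\omega$. This gives the stronger intermediate inclusion $I^\omega\subseteq\Inf(B\cup J)$ a.s.\ without Orey and without the set manipulations, so your argument is shorter and self-contained.

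The paper's route instead shows how little of \cref{eqn:semantic-2-quant} soundness actually uses: only $\inf_{s\in A_i\cap I}\Prob_{\delta_s}(\sigma_{B_i\cup J_i\cup I^{\sf c}}<\infty)>0$. Your argument adapts to that weaker hypothesis as well, by invoking the lower bound only at the infinitely many visits to $A_i$ on $I^\omega\cap\Inf(A_i)$; that is precisely the proof of Orey's Theorem. Your remark that $J_i\subseteq I\smallsetminus A_i$ is not needed here is also right; the paper's proof of this theorem does not use it either.
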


\begin{proof}
Let $i$ be arbitrary among $\{1, \ldots, k\}$.
We apply Orey's Theorem (\cref{thm:orey}) to a consequence of \cref{eqn:semantic-2-quant} obtained by restricting the infimum to a smaller set $A_i \cap I \subseteq I$:
\begin{align}\label{eqn:inf-AI}
    \inf_{s \in A_i \cap I}
    \Prob_{\delta_s}\left( 
    ~\sigma_{B_i \cup J_i \cup I^{\sf c}} < \infty~
    \right)
    \geq 
    \inf_{s \in I}
    \Prob_{\delta_s}\left( 
    ~\sigma_{B_i \cup J_i \cup I^{\sf c}} < \infty~
    \right)
    = 1,
\end{align}
and Orey's Theorem applied to \eqref{eqn:inf-AI} yields:
\begin{align}\label{eqn:imm-orey}
    \Prob_\mu\left( 
    \Fin(A_i \cap I) \cup \Inf(B_i \cup J_i \cup I^{\sf c})
    \right) = 1.
\end{align}
We now establish the inclusion
\begin{align}\label{eqn:inclusion-1}
    \Fin(A_i \cap I) \subseteq 
    (I^\omega)^{\sf c}
    \cup 
    (\Fin(A_i) \cap I^\omega) 
\end{align}
so that we can replace $\Fin(A_i \cap I)$ in \cref{eqn:imm-orey} with $ (I^\omega)^{\sf c} \cup (\Fin(A_i) \cap I^\omega)$ to obtain \cref{eqn:replaced}.
To this end, we first note that $\Fin(A_i \cap I) \cap I^\omega = \Fin(A_i) \cap I^\omega$, because 
for any trajectory $(s_0,s_1,\cdots, s_n, \cdots) \in I^\omega$, and any time $n \in \Nat$, we have $s_n \in A_i \cap I \Leftrightarrow s_n \in A_i$, of which \cref{eqn:inclusion-1} is a consequence. We thus obtain:
\begin{align}\label{eqn:replaced}
    \Prob_\mu( 
    (I^\omega)^{\sf c}
    \cup 
    (\Fin(A_i) \cap I^\omega) 
    \cup 
    \Inf(B_i \cup J_i \cup I^{\sf c}
    )) = 1
\end{align}
Using the fact that $\Inf(B_i \cup J_i \cup I^{\sf c}) = \Inf(B_i) \cup \Inf(J_i) \cup \Inf(I^{\sf c})$, 
and that any trajectory performing infinitely many visits to $I^{\sf c}$ must necessarily leave $I$, namely:
\begin{align}
    \Inf(I^{\sf c}) \subseteq (I^\omega)^{\sf c},
\end{align} 
we rewrite
\cref{eqn:replaced} into \cref{eqn:replaced-3}:
\begin{align}\label{eqn:replaced-3}
    \Prob_\mu( 
    (I^\omega)^{\sf c}
    \cup 
    (\Fin(A_i) \cap I^\omega) 
    \cup 
    \Inf(B_i) \cup \Inf(J_i)
    ) = 1.
\end{align}
By set-theoretic reasoning, we replace $(I^\omega)^{\sf c} \cup (\Fin(A_i) \cap I^\omega)$ in \cref{eqn:replaced-3} by $(I^\omega)^{\sf c} \cup \Fin(A_i)$, since
\begin{align}
    &\overbrace{
    (I^\omega)^{\sf c} \cup 
    (\Fin(A_i) \cap (I^\omega)^{\sf c})}^{ (I^\omega)^{\sf c}}
    \cup 
    (\Fin(A_i) \cap I^\omega)
    =
    (I^\omega)^{\sf c}
    \cup  
\Fin(A_i) \cap 
\overbrace{( (I^\omega)^{\sf c} \cup I^\omega )}^{\Omega}.\label{eqn:set-2}
\end{align}
Therefore we obtain:
\begin{align}\label{eqn:replaced-4}
    \Prob_\mu( 
    (I^\omega)^{\sf c}
    \cup 
    \Fin(A_i)
    \cup 
    \Inf(B_i) \cup \Inf(J_i)
    ) = 1.
\end{align}
Our next objective is to remove $\Inf(J_i)$ from \cref{eqn:replaced-4}. We first note that by \cref{eqn:semantic-1-quant} that there exists $\gamma < 1$ for which
\begin{align}\label{eqn:expanded-supremum}
\forall s \in J_i \mathpunct. 
    \Prob_{\delta_s}(~\sigma_{A_i} < \infty~)
    \leq \gamma.
\end{align}
Therefore, in any trajectory that visits $J_i$ infinitely many times, the probability of returning to $A_i$ must drop below $\gamma$ infinitely often. Formally, using the definition of $\Inf(J_i)$ in \cref{eqn:inf-definition} and \cref{eqn:expanded-supremum} we derive:
\begin{align}
    \chain \in \Inf(J_i) 
    \iff 
    &\forall n ~\exists m\geq n \mathpunct.
    \mc{m} \in J_i\\
    \implies 
    &\forall n ~\exists m\geq n \mathpunct.
    \Prob_{\delta_{\mc{m}}}
    (~\sigma_{A_i} < \infty~)
    \leq \gamma.
    \label{eqn:io-leq-gamma}
\end{align}
By \cref{lem:technical}, the limit 
$
    \lim_{n\to\infty }
    \Prob_{\delta_{\mc{n}}}(~\sigma_{A_i} < \infty~)    
$
exists and takes value either zero or one, $\Prob_{\mu}$-almost surely. We note that \eqref{eqn:io-leq-gamma} implies that the limit cannot be one, meaning that the limit must be equal to zero, $\Prob_\mu$-almost surely. Summarising, we conclude that there exists a $\Prob_\mu$-negligible set $N_3 \in \mathcal{F}$ for which
\begin{align}\label{eqn:lim-is-zero-mu-as}
    \chain \in \Inf(J_i) \smallsetminus N_3
    &\implies 
    \lim_{n \to \infty}
    \Prob_{\delta_{\mc{n}}}(
    ~\sigma_{A_i} < \infty~
    ) = 0.
\end{align}
Finally, since (by \cref{lem:technical}) $\lim_{n \to \infty}
    \Prob_{\delta_{\mc{n}}}(~
    \sigma_{A_i} < \infty~
    ) = 0$ and $\Fin(A_i)$ differ by a $\Prob_\mu$-negligible set, we conclude for some $\Prob_\mu$-negligible set $N_4$, that:
\begin{align}\label{eqn:InfJ-subseteq-FinA}
    \Inf(J_i) \smallsetminus N_4
    \subseteq \Fin(A_i).
\end{align}
This means we can replace $\Fin(A_i) \cup \Inf(J_i)$ in \cref{eqn:replaced-4} by $\Fin(A_i)$ to obtain:
\begin{align}
    \Prob_\mu( 
    (I^\omega)^{\sf c}
    \cup 
    \Fin(A_i)
    \cup 
    \Inf(B_i)
    ) = 1.
\end{align}
Since this holds for arbitrary $i = 1,\ldots, k$, we conclude the result of the theorem by noting that the intersection of finitely many probability 1 events has probability 1:
\begin{align}
    &\Prob_\mu\left( 
    (I^\omega)^{\sf c}
    \cup 
    \bigcap_{i = 1}^{k}
    \Fin(A_i)
    \cup 
    \Inf(B_i)
    \right) = 1,
\end{align}
and by applying a union bound we obtain 
\begin{align}
    \Prob_\mu\left(
    \bigcap_{i = 1}^{k}
    \Fin(A_i)
    \cup 
    \Inf(B_i)
    \right) 
    \geq 1 - \Prob_{\mu}( (I^\omega)^{\sf c} ) = \Prob_\mu(I^\omega)
\end{align}
which is the desired result.
\end{proof}

\begin{figure}[t]
    \centering
        \begin{tikzpicture}[minimum size=7mm, node distance=14mm]
        \node[draw, circle, initial, initial text=$\mu$] (1) {$s_0$};
        
        \node[draw, circle, right of=1,label={[label distance=-5pt]above:{$\{a\}$}}] (2) {$s_1$}; 

        \node[draw, circle, right of=2] (6) {$s_6$};
        
        \node[draw,circle,right of=6,label={[label distance=-5pt]above:{$\{b\}$}}] (3) {$s_2$};
        
        \node[draw, circle, below of=2,label={[label distance=-5pt]above:{$\{a\}$}}] (0) {$s_3$};
        \node[draw, circle, right of=0] (m1) {$s_4$};

        \node[draw, circle, above of=2,label={[label distance=-5pt]above:{$\{a\}$}}] (5) {$s_5$};

        \draw (1) edge[->] node[below, yshift=1mm] {$\frac{1}{3}$} (2)
        (2) edge[->,bend left] node[above] {} (6)
        (6) edge[->,bend left] node[below,yshift=1mm] {$\frac{1}{2}$} (2)
        (6) edge[->] node[above,yshift=-1mm] { $\frac{1}{2}$ } (3)
        (3) edge[->, loop right] (3)
        (1) edge[->] node[below left,xshift=2mm, yshift=2mm] {$\frac{1}{3}$} (0)
        (0) edge[->] node[above] {} (m1)
        (1) edge[->] node[above left,xshift=3mm,yshift=0.5mm] {$\frac{1}{3}$} (5)
        (5) edge[->, loop right] (5)
        (m1) edge[->, loop right] (m1);

        \node (J) [draw=orange, fit= (m1) (6), inner sep=0.2cm, dashed, thick, fill=orange!20, fill opacity=0.2] {};
        \node [xshift=3.1ex,yshift=-4.0ex, orange] at (J.east) {$J$};

        \node (I) [draw=yellow!10!green, fit= (m1) (0) (1) (2) (3), inner sep=0.5cm, dashed, thick, fill=green!20, fill opacity=0.2] {};
        \node [xshift=2.0ex, yellow!10!green] at (I.east) {$I$};

        \node (myTableNode) [right of=3, yshift=0cm,xshift=3.5cm] {
        {\setlength{\tabcolsep}{1pt}
        \begin{tabular}{l l}
            $s_0 (s_5)^\omega $ 
            &$\in (I^\omega)^{\sf c}$  
            \\
            $s_0 s_3 (s_4)^\omega$ 
            &$\in I^\omega \cap \Fin(A) \cap \Fin(B)$\\
            $s_0 s_1 (s_6 s_1)^\omega $ &$\in I^\omega 
            \cap \Inf(A) \cap \Fin(B)$\\
            $s_0 s_1 (s_6 s_1)^n s_6 (s_2)^\omega $
            &$\in I^\omega \cap \Fin(A) \cap \Inf(B)$\\
            $\varnothing$ &= $I^\omega \cap \Inf(A) \cap \Inf(B)$
        \end{tabular}
        }
    };
    
    \end{tikzpicture}
    \caption{Intuition for Absorbing-Region Decomposition}
    \label{fig:soundness_absorbing_region_intuition}
\end{figure}
\begin{example}[Intuition for \cref{thm:absorbing-regions-decomposition}]
Consider \cref{fig:soundness_absorbing_region_intuition}, which displays a Markov chain that satisfies the requirements of \cref{thm:absorbing-regions-decomposition} and therefore satisfies the Streett condition with probability $\frac{2}{3}$.
In the previous examples, we could not return from the set $J$ to $A$ at all. However, in this example, we can go from the state $s_6$ to $s_1$ with probability $\frac{1}{2}$.

We observe that any path that visits the set $A$ infinitely often must either exit the set $I$, as before, or visit the set $B \cup J$ infinitely often. If $B$ is visited infinitely often, we satisfy Streett. On the other hand, we see that for the choice of $J = \{ s_4, s_6 \}$ that the probability of returning to $A$ from it is no greater than $\frac{1}{2}$, which means that the probability of returning to $A$ from it infinitely often is zero.
In our example, this corresponds to the fact that, in order to reach $A$ infinitely often, the Markov chain must either go to $s_5$, which is outside the set $I$, or take the transition from $s_6$ to $s_1$ infinitely often, the probability of which is zero.\lipicsEnd
\end{example}

\cref{thm:absorbing-regions-decomposition} establishes that if a Markov chain admits an invariant and absorbing regions then it satisfies the reactivity property with probability at least $\Prob_\mu(I^\omega)$. We next show, in \cref{thm:meta}, that given a sufficiently strong invariant region (namely, a region $I \in \Sigma$ satisfying \eqref{eqn:streett-or-ic}), there exist suitable absorbing regions, such that the invariant and the absorbing regions jointly satisfy requirements \eqref{eqn:semantic-1-quant} and \eqref{eqn:semantic-2-quant}.

\begin{lemma}\label{thm:meta}
Let $(A_1,B_1) \in \Sigma^2, \dots, (A_k,B_k) \in \Sigma^2$ be a Streett acceptance condition.
Let $I \in \Sigma$ be a region for which
\begin{align}\label{eqn:streett-or-ic}
    \forall s \in S \mathpunct. 
    \Prob_{\delta_s}\left( 
    (I^\omega)^{\sf c} \cup \bigcap_{i = 1}^{k}
    \Fin(A_i) \cup \Inf(B_i)
    \right) = 1.
\end{align}
Then, there exist regions $J_i \in \Sigma$ with $J_i \subseteq I \smallsetminus A_i$ for $i = 1, \ldots, k$ such that \cref{eqn:semantic-1-quant,eqn:semantic-2-quant} hold.
\end{lemma}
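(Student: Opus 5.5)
The plan is to define the absorbing regions explicitly through the return-probability function, so that \cref{eqn:semantic-1-quant} holds by construction. The real work is then to verify \cref{eqn:semantic-2-quant} using \cref{lem:technical} together with the hypothesis \cref{eqn:streett-or-ic}.

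\textbf{Construction of $J_i$.} For each $i$, let $h_i(s) = \Prob_{\delta_s}(\sigma_{A_i} < \infty)$ and set
\begin{align*}
    J_i = \{ s \in I \smallsetminus A_i \colon h_i(s) \leq \tfrac{1}{2} \}.
\end{align*}
Measurability of $J_i$ follows because $s \mapsto \Prob_{\delta_s}(E)$ is $\Sigma$-measurable for every $E \in \mathcal{F}$; this is standard for the kernel-induced measures of \cref{thm:meas-spec}. Then $J_i \subseteq I \smallsetminus A_i$ by definition, and $\sup_{s \in J_i} h_i(s) \leq \frac12 < 1$, so \cref{eqn:semantic-1-quant} holds.

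\textbf{Verifying \cref{eqn:semantic-2-quant}.} Fix $i$ and $s \in I$, and write $C = B_i \cup J_i \cup I^{\sf c}$. Since $\sigma_C = 1 + \tau_C \circ \theta$, the Markov property (\cref{thm:strong-markov-property} with $n = 1$, $H = \ind\{\tau_C < \infty\}$) gives
\begin{align*}
    \Prob_{\delta_s}(\sigma_C < \infty) = \Expect_{\delta_s}\bigl[\Prob_{\delta_{\mc{1}}}(\tau_C < \infty)\bigr].
\end{align*}
It therefore suffices to show $\Prob_{\xi}(\tau_C = \infty) = 0$ for every initial distribution $\xi$. Integrating \cref{eqn:streett-or-ic} over $\xi$, the event $(I^\omega)^{\sf c} \cup \bigcap_j \Fin(A_j)\cup\Inf(B_j)$ has $\Prob_\xi$-probability one. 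By \cref{lem:technical}, outside a $\Prob_\xi$-null set we also have $h_i(\mc{n}) \to \ind\{\chain \in \Inf(A_i)\}$.

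Now take any trajectory outside these null sets with $\tau_C = \infty$. It never visits $I^{\sf c}$, so it lies in $I^\omega$. Hence it satisfies $\Fin(A_i) \cup \Inf(B_i)$, and since it never visits $B_i$ it must satisfy $\Fin(A_i)$. Consequently $h_i(\mc{n}) \to 0$, and there is an $n$ with $\mc{n} \notin A_i$, $\mc{n} \in I$ and $h_i(\mc{n}) \leq \frac12$. That is, $\mc{n} \in J_i \subseteq C$, contradicting $\tau_C = \infty$. Hence $\{\tau_C = \infty\}$ is $\Prob_\xi$-null. Taking the infimum over $s \in I$ yields \cref{eqn:semantic-2-quant}; note that the argument in fact works for every $s \in S$.

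\textbf{Main obstacle.} The delicate part is measure-theoretic bookkeeping rather than any single idea. One has to pass from the pointwise hypothesis \cref{eqn:streett-or-ic} to an arbitrary initial law $\xi$, including the law of $\mc{1}$ under $\delta_s$; this works by writing $\Prob_\xi = \int \Prob_{\delta_x}\,\xi(\mathrm{d}x)$. One also has to handle the off-by-one between hitting and return times through the shift, and to check that $h_i$ is measurable so that $J_i \in \Sigma$. The threshold $\frac12$ is arbitrary: any constant in $(0,1)$ works, because \cref{lem:technical} forces the limit to be exactly $0$ on $\Fin(A_i)$.
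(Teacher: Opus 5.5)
Your proposal is correct and takes essentially the paper's route: the same $J_i$, obtained by thresholding the return probability to $A_i$ at $\tfrac12$, and the same use of \cref{lem:technical} to show that almost every trajectory in $I^\omega \cap \Fin(A_i)$ eventually stays in $J_i$. The only difference is bookkeeping. You show that $B_i \cup J_i \cup I^{\sf c}$ is hit almost surely from every state and then pass to return times via the Markov property at time $1$. The paper instead derives $(I^\omega)^{\sf c} \cup \Inf(J_i) \cup \Inf(B_i)$ almost surely, weakens $\Inf(X)$ to $\{\sigma_X < \infty\}$, and replaces $\tau_{I^{\sf c}}$ by $\sigma_{I^{\sf c}}$ pointwise for starting states in $I$.
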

\begin{proof}
For every $i \in \{1 , \ldots, k \}$ we define the region
\begin{align}\label{eqn:defJ-completeness}
    J_i = \left\{ s \in I \smallsetminus A_i \colon 
    \Prob_{\delta_s}(~\sigma_{A_i} < \infty~) \leq 0.5
    \right\},
\end{align}
with $0.5$ being an arbitrarily chosen constant in $(0, 1)$.
The region $J_i$ of \eqref{eqn:defJ-completeness} satisfies \eqref{eqn:semantic-1-quant} by construction, since $\sup_{s \in J_i}\Prob_{\delta_s}(~\sigma_{A_i} < \infty~) \leq 0.5$.

\noindent The premise \eqref{eqn:streett-or-ic} is equivalent to
\begin{align}\label{eqn:with-Iw}
    \forall s \in S
    \mathpunct. 
    \forall i  = 1, \ldots, k
    \mathpunct. 
    \Prob_{\delta_s}
    \left( 
    (I^\omega)^{\sf c}
    \cup \Fin(A_i) \cup \Inf(B_i)
    \right) = 1,
\end{align}
and by \cref{eqn:set-2} we replace
$(I^\omega)^{\sf c} \cup \Fin(A_i)$ by $(I^\omega)^{\sf c} \cup (\Fin(A_i) \cap I^\omega)$ in \cref{eqn:with-Iw} to obtain:
\begin{align}\label{eqn:replaced-IcFaIw-IcFa-NEW}
    \forall s \in S \mathpunct. 
    \forall i = 1, \ldots, k \mathpunct. 
    \Prob_{\delta_s}(
    (I^\omega)^{\sf c} \cup 
    (I^\omega \cap \Fin(A_i))
     \cup \Inf(B_i)
    ) = 1.
\end{align}
We now show that $I^\omega \cap \Fin(A_i)$ in \cref{eqn:replaced-IcFaIw-IcFa-NEW} may be replaced by $\Inf(J_i)$ to obtain \cref{eqn:replaced-IcFaIw-IcFa2-NEW}.
To this end, we invoke \cref{lem:technical}, which says that 
the events $\Fin(A_i)$ and $\{ 
\lim_{n \to \infty} 
\Prob_{\delta_{\mc{n}}}(~\sigma_{A_i} < \infty~) = 0
\}$ are $\Prob_{\delta_s}$-equivalent for every $s \in S$. This means that for every $s \in S$ there exists a $\Prob_{\delta_s}$-negligible event $N_s \in \mathcal{F}$ for which 
    $\Fin(A_i) \smallsetminus N_s \subseteq 
    \{
    \lim_{n \to \infty} 
    \Prob_{\delta_{\mc{n}}}(~\sigma_{A_i} < \infty ~)  = 0
    \}
    $.
We extend this to the following inclusions among events, which hold for every $s \in S$:
\begin{align}
    I^\omega \cap \Fin(A_i) \smallsetminus N_s
    &\subseteq 
    I^\omega \cap 
    \Fin(A_i) \cap 
    \left\{ 
    \lim_{n \to \infty} 
    \Prob_{\delta_{\mc{n}}}
    (~\sigma_{A_i} < \infty~) = 0
    \right\} 
    \label{eqn:first-in-inclusions}
    \\
    &=
    I^\omega \cap 
    \Fin(A_i) \cap 
    \left\{ 
    \forall \varepsilon_1 > 0 ~
    \exists n ~  
    \forall m \geq n \mathpunct.
    \Prob_{\delta_{\mc{m}}}(~\sigma_{A_i} < \infty ~) 
    \leq \varepsilon_1 
    \right\}
    \\
    &\subseteq 
    I^\omega \cap 
    \Fin(A_i) \cap 
    \left\{ 
    \exists n ~
    \forall m \geq n \mathpunct. 
    ~\Prob_{\delta_{\mc{m}}}(~\sigma_{A_i} < \infty~) \leq 0.5
    \right\}.
    \label{eqn:event-Iw-FinA-FG-lt-0.5-NEW}
\end{align}
We next observe that any trajectory that always remains within $I$ and visits $A_i$ finitely often must eventually always remain inside $I \smallsetminus A_i$:
\begin{align}
    \chain \in I^\omega \cap \Fin(A_i) &\iff
    \forall n \in \Nat \mathpunct. 
    \Phi_n \in I
    \land 
    \exists n ~ \forall m \geq n \mathpunct. 
    \Phi_m \notin A_i\\
    &\implies \exists n ~ \forall m \geq n \mathpunct. \Phi_m \in I \smallsetminus A_i.
    \label{eqn:ltl-reasoning}
\end{align}
Equations \eqref{eqn:first-in-inclusions} to \eqref{eqn:ltl-reasoning}, combined with the definition of $J_i$ in \eqref{eqn:defJ-completeness}, shows that $I^\omega \cap \Fin(A_i) \smallsetminus N_s$ is a subset of trajectories that visit $J_i$ infinitely often, namely:
\begin{align}
    I^\omega \cap \Fin(A_i) 
    \smallsetminus N_s 
    &\subseteq 
    \left\{ 
    \exists n ~\forall m\geq n\mathpunct. 
    \mc{m}\in I \smallsetminus A_i 
    \land \Prob_{\delta_{\mc{m}}}(
    ~\sigma_{A_i} < \infty~) \leq 0.5 
    \right\}
    \label{eq:start-Iw-FinA-mN}
    \\
    &= 
    \left\{ 
    \exists n ~ \forall m \geq n \mathpunct. \mc{m} \in J_i
    \right\}
    = \Fin(J_i^{\sf c}) \subseteq \Inf(J_i).
    \label{eq:end-InfJc}
\end{align}
Using inclusions \eqref{eq:start-Iw-FinA-mN} and \eqref{eq:end-InfJc} to replace the event $I^\omega \cap \Fin(A_i) $ in \cref{eqn:replaced-IcFaIw-IcFa-NEW} by the larger event $\Inf(J_i)$ we obtain:
\begin{align}\label{eqn:replaced-IcFaIw-IcFa2-NEW}
    \forall s \in S \mathpunct. 
    \forall i = 1, \ldots, k \mathpunct.
    \Prob_{\delta_s}(
    (I^\omega)^{\sf c} \cup 
    \Inf(J_i)
     \cup \Inf(B_i)
    ) = 1.
\end{align}
We next seek to show that \cref{eqn:replaced-IcFaIw-IcFa2-NEW} implies \cref{eqn:semantic-2-quant}, by analysing its consequences upon the return times to the regions $B_i$, $J_i$, and $I^{\sf c}$. We observe that to visit a region $X \in \Sigma$ infinitely often a trajectory must return to it at least once:
\begin{align}
    \chain \in \Inf(X) \iff
    \left[ \forall n ~ \exists m \geq n \mathpunct. \mc{m} \in X 
    \right]
    \implies 
    \left[ \exists m \geq 1 \mathpunct. \mc{m} \in X \right]
    \iff \sigma_X < \infty, 
\end{align} 
and therefore,
\begin{align}
    &\Inf(J_i) \subseteq \{ \sigma_{J_i} < \infty \},
    && 
    \Inf(B_i) \subseteq \{ \sigma_{B_i} < \infty \}.
    \label{eqn:IR-B}
\end{align}
Furthermore, the trajectories that satisfy $(I^\omega)^{\sf c}$ are precisely those for which $\tau_{I^{\sf c}} < \infty$:
\begin{align}\label{eqn:Iwc-iff-tIc-lt-inf}
    \chain \in (I^\omega)^{\sf c}
    \iff 
    \neg [\forall n
    \mathpunct. \mc{n} \in I]
    \iff 
    [\exists n \geq 0 \mathpunct. 
    \mc{n} \in I^{\sf c}]
    \iff
    \tau_{I^{\sf c}} < \infty.
\end{align}
Using  
\eqref{eqn:IR-B}, and 
\eqref{eqn:Iwc-iff-tIc-lt-inf} we see that \eqref{eqn:replaced-IcFaIw-IcFa2-NEW} implies 
\begin{align}\label{eqn:tau-sigma-sigma-NEW}
    \forall s \in S \mathpunct. 
    \forall i = 1, \ldots, k \mathpunct.
    \Prob_{\delta_s}\left(~
    \tau_{I^{\sf c}} < \infty \lor \sigma_{J_i} < \infty \lor \sigma_{B_i} < \infty 
    ~\right) = 1.
\end{align}
Next, we observe that for every $s \in I$ that the events
    $\{ \tau_{I^{\sf c}} < \infty \} $ 
    and $\{ \sigma_{I^{\sf c}} < \infty \}$ are $\Prob_{\delta_s}$-equivalent, because for any trajectory whose initial state $s$ belongs to $I$, the hitting time to $I^{\sf c}$ coincides with the return time to $I^{\sf c}$:
\begin{align} 
\mc{0} \in I \implies
(
\underbrace{\exists n \geq 0 \mathpunct. \mc{n} \in I^{\sf c}}_{
\tau_{I^{\sf c}} < \infty
}
\iff
\underbrace{\exists n \geq 1 \mathpunct. \mc{n} \in I^{\sf c}}_{\sigma_{I^{\sf c}} < \infty}
)
\end{align}
holds pointwise. This allows us to rewrite \cref{eqn:tau-sigma-sigma-NEW} into 
\begin{align}\label{eqn:sigma-sigma-sigma}
    \forall s \in I \mathpunct. 
    \forall i = 1, \ldots, k \mathpunct.
    \Prob_{\delta_s}\left(
    ~
    \sigma_{I^{\sf c}} < \infty \lor \sigma_{J_i} < \infty \lor \sigma_{B_i} < \infty
    ~
    \right) = 1.
\end{align}
by restricting the quantifier to $s \in I$.
We observe that
\begin{align}
    \sigma_{I^{\sf c}} < \infty 
    \lor \sigma_{J_i} < \infty 
    \lor \sigma_{B_i} < \infty 
    \iff 
    \sigma_{B_i \cup J_i \cup I^{\sf c}} < \infty
\end{align}
which allows us to rewrite \cref{eqn:sigma-sigma-sigma} into
\begin{align}\label{eqn:sigma-once}
\forall s \in I \mathpunct. 
    \forall i = 1, \ldots, k \mathpunct. 
    \Prob_{\delta_s}\left(
    ~
    \sigma_{B_i \cup J_i \cup I^{\sf c}} < \infty
    ~
    \right) = 1, 
\end{align}
from which \cref{eqn:semantic-2-quant} follows.
\end{proof}

The premise of \cref{thm:meta} is that the invariant $I$ is sufficiently
strong, in the sense of condition \eqref{eqn:streett-or-ic}, which requires that almost every trajectory
that remains forever within $I$ also satisfies the reactivity property
$\bigcap_{i = 1}^k \Fin(A_i) \cup \Inf(B_i)$. Equivalently, the
probability of the reactivity property conditional upon $I^\omega$ must equal
$1$ (cf.\ \cite[Theorem 2]{DBLP:conf/fm/ChatterjeeGGKZ24},
\cite[Theorem 6]{AbateGR25}). Given any such invariant, \cref{thm:meta}
guarantees the existence of absorbing regions that, together with $I$,
satisfy \cref{eqn:semantic-1-quant,eqn:semantic-2-quant}.

\begin{example}[Necessity of Quantitative Safety]
In the quantitative setting, where the reactivity property fails to hold
almost surely, any sufficiently strong invariant must be exited with
positive probability. For instance, in
\cref{fig:soundness_absorbing_region_intuition}, no invariant satisfying 
\cref{eqn:streett-or-ic} may contain $s_5$, since otherwise the trajectory
$s_0 (s_5)^\omega$ would remain within $I$ while failing the reactivity
property. Since the Markov chain reaches $s_5$ with probability $1/3$, any
such $I$ must be exited with probability at least $1/3$. The joint
conditions on the absorbing regions and the invariant accommodate this
precisely, by permitting trajectories to exit $I$ in 
\cref{eqn:semantic-2-quant}.
\end{example}

It remains to show that a sufficiently strong invariant always exists. To
this end, we rely on the shift-invariance of reactivity properties. In
\cref{lem:cav25-derivative}, we show that there always exists a region
whose probability of invariance can be made arbitrarily close to that of
the shift-invariant property in question.

\begin{lemma}\label{lem:cav25-derivative}
Suppose $L \in \mathcal{F}$ is shift-invariant. Then, for every arbitrary $\varepsilon > 0$ there exists a region $I \in \Sigma$ such that
\begin{align}
    &\Prob_{\mu}(I^\omega) \geq 
    \Prob_{\mu}(L) - \varepsilon,\label{eqn:invar-eps-exist}\\
    &\forall s \in S \mathpunct. \Prob_{\delta_s} ((I^\omega)^{\sf c} \cup L) = 1.\label{eqn:invar-conditional-exist}
\end{align}
\end{lemma}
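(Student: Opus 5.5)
We work with the function $h(s) = \Prob_{\delta_s}(L)$, which is $\Sigma$-measurable since $L \in \mathcal{F}$. Shift invariance of $L$ together with the Markov property (\cref{thm:strong-markov-property}) shows that $h(\mc{n})$ is a bounded martingale under $\Prob_\mu$:
\[
\Expect_\mu[\ind\{L\} \mid \mathcal{F}_n] = \Expect_\mu[\ind\{L\}\circ\theta^n \mid \mathcal{F}_n] = \Expect_{\delta_{\mc{n}}}[\ind\{L\}] = h(\mc{n}).
\]
By Lévy's upward (martingale convergence) theorem, $h(\mc{n}) \to \ind\{\chain\in L\}$ holds $\Prob_\xi$-a.s.\ for every initial distribution $\xi$. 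This is the same mechanism that underlies \cref{lem:technical}.

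Fix $\varepsilon>0$. The plan is to choose a threshold $\delta \in (0,1)$ and set $I = \{ s \in S : h(s) > \delta\}$, or possibly $\geq \delta$; this set lies in $\Sigma$.

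For \eqref{eqn:invar-conditional-exist}, fix any $s$ and argue under $\Prob_{\delta_s}$. On $I^\omega \smallsetminus L$ we have $h(\mc{n}) > \delta$ for all $n$, while $h(\mc{n}) \to \ind\{L\} = 0$ almost surely. These two facts are incompatible, so $I^\omega \smallsetminus L$ is $\Prob_{\delta_s}$-negligible. This step is essentially the argument already used in \cref{thm:absorbing-regions-decomposition} and needs nothing beyond the convergence above.

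For \eqref{eqn:invar-eps-exist}, we must show that $\Prob_\mu(L \smallsetminus I^\omega)$ can be made at most $\varepsilon$. A trajectory in $L \smallsetminus I^\omega$ has $h(\mc{n}) \le \delta$ at some time $n$. Let $T = \tau_{I^{\sf c}}$, a stopping time. By the strong Markov property at $T$ (obtained from \cref{thm:strong-markov-property} by decomposing over $\{T=n\}$ and using shift invariance $L = \theta^{-n} L$),
\[
\Prob_\mu(L \cap \{T<\infty\}) = \Expect_\mu\bigl[\ind\{T<\infty\}\, h(\mc{T})\bigr] \le \delta\, \Prob_\mu(T<\infty) \le \delta.
\]
Choosing $\delta = \varepsilon$ (any $\delta \le \varepsilon$ in $(0,1)$) then gives $\Prob_\mu(L) \le \Prob_\mu(I^\omega) + \varepsilon$.

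The main obstacle is justifying the identity $\Prob_\mu(L\cap\{T=n\}) = \Expect_\mu[\ind\{T=n\}h(\mc{n})]$ using only the simple Markov property stated in the paper. I would derive it as follows:
\[
\Prob_\mu(L \cap \{T=n\}) = \Expect_\mu\bigl[\ind\{T=n\}\,\Expect_\mu[\ind\{L\}\circ\theta^n\mid\mathcal{F}_n]\bigr],
\]
using $\{T=n\}\in\mathcal{F}_n$ and $L=\theta^{-n}L$, and then apply \eqref{eqn:strong-markov-property} to the inner conditional expectation and sum over $n$. A second, minor point is that Lévy's theorem requires identifying $\Expect_\xi[\ind\{L\}\mid\mathcal{F}_n]$ with $h(\mc{n})$ under each $\xi$. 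This follows from the same computation, and I would state it once for general $\xi$ before specialising to $\mu$ and to $\delta_s$.
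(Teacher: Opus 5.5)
Your proof is correct, and it takes a genuinely different route from the paper's.

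\textbf{The paper's route.} The paper works with the whole family $I_k = \{ s : \Prob_{\delta_s}(L) \geq \frac{1}{k+1} \}$. It imports from \cite[Theorem 3]{AbateGR25} the fact that $L$ is $\Prob_\xi$-equivalent to $\{ \inf_n \Prob_{\delta_{\mc{n}}}(L) > 0 \} = \bigcup_k I_k^\omega$. It then applies monotone convergence to get $\Prob_\mu(I_k^\omega) \to \Prob_\mu(L)$ and picks some $k_0$ non-constructively. Both \eqref{eqn:invar-eps-exist} and \eqref{eqn:invar-conditional-exist} are read off from that single cited equivalence.

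\textbf{Your route.} You fix one threshold $\delta = \varepsilon$ and prove the two conditions by separate arguments. You get \eqref{eqn:invar-conditional-exist} from L{\'e}vy's 0-1 law applied to the martingale $\Prob_{\delta_{\mc{n}}}(L)$, which is the same mechanism the paper uses in \cref{thm:inv-exists-as}, not in \cref{thm:absorbing-regions-decomposition} as you suggest. You get \eqref{eqn:invar-eps-exist} by evaluating $L$ at the exit time $\tau_{I^{\sf c}}$. Your justification of that step is sound: you decompose over $\{\tau_{I^{\sf c}} = n\} \in \mathcal{F}_n$, use $L = \theta^{-n} L$, and apply \cref{thm:strong-markov-property} at deterministic times.

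\textbf{What each buys.} Your argument is self-contained, needing neither the external theorem nor a limit argument. It produces an explicit invariant $I = \{ s : \Prob_{\delta_s}(L) > \varepsilon \}$ that does not depend on $\mu$, so it works for every initial distribution at once. It also gives a sharper gap, $\Prob_\mu(L) - \Prob_\mu(I^\omega) \leq \varepsilon \, \Prob_\mu(\tau_{I^{\sf c}} < \infty)$. The paper's argument gains brevity by outsourcing the key equivalence between $L$ and $\{\inf_n \Prob_{\delta_{\mc{n}}}(L) > 0\}$ to prior work. That equivalence rests on the same martingale-at-a-stopping-time reasoning you carry out explicitly.
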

\begin{proof}
    We consider the event $\{ \inf_{n \in \Nat} \Prob_{\delta_{\mc{n}}}(L)  >  0 \}$ that the satisfaction probability of $L$, when evaluated along a trajectory, is uniformly bounded away from zero. We note that if, for some $k \in \Nat$, the satisfaction probability is always above $\frac{1}{k+1}$, then it is uniformly bounded away from zero. That is, we note 
    \begin{align}
        \left\{ \inf_{n \in \Nat} \Prob_{\delta_{\mc{n}}}(L) \geq \frac{1}{k+1} \right\}\subseteq \left\{ \inf_{n \in \Nat} \Prob_{\delta_{\mc{n}}}(L) > 0 \right\}
    \end{align} for every $k \in \Nat$, and that
    \begin{align}\label{eqn:mct-premise}
        \bigcup_{k \in \Nat} 
        \left\{ \inf_{n \in \Nat} \Prob_{\delta_{\mc{n}}}(L) \geq \frac{1}{k+1} \right\} = 
        \left\{ \inf_{n \in \Nat} \Prob_{\delta_{\mc{n}}}(L) > 0 \right\}.
    \end{align}
    We define, for each $k \in \Nat$, the region 
    \begin{align}
        I_k = \left\{ 
        s \in S \colon \Prob_{\delta_s}(L) \geq \frac{1}{k+1}
        \right\},
    \end{align}
    which has the property that the events $I_k^\omega$ and $\{ \inf_{n \in \Nat} 
    \Prob_{\delta_{\mc{n}}}( L ) \geq \frac{1}{k+1}
    \}$ are identical.
    
    By \cite[Theorem 3]{AbateGR25}, the events $\{ 
    \inf_{n \in \Nat}
    \Prob_{\delta_{\mc{n}}}(L)  > 0
    \}$ and $L$ are $\Prob_\xi$-equivalent for every initial distribution $\xi$. 
    Using this, and applying the Monotone Convergence Theorem (\cite[Theorem 2.59]{axler2019measure}) to \cref{eqn:mct-premise} yields $\lim_{k \to \infty}  \Prob_{\mu} ( 
    I_k^\omega) = \Prob_\mu(L)$, so there must exist $k_0 \in \Nat$ for which $\Prob_\mu(I_{k_0}^\omega) \geq \Prob_\mu(L) - \varepsilon$, i.e.\ \cref{eqn:invar-eps-exist}. Since $I_{k_0}^\omega \subseteq \{ 
    \inf_{n \in \Nat}
    \Prob_{\delta_{\mc{n}}}(L)  > 0
    \}$ and $\{ \inf_{n \in \Nat} \Prob_{\delta_{\mc{n}}}(L) > 0 \}$ is (for every $\xi$) $\Prob_\xi$-almost surely equivalent to $L$, we conclude that $(I_{k_0}^\omega)^{\sf c} \cup L$ holds $\Prob_{\delta_s}$-almost surely for every $s \in S$, i.e.\ \cref{eqn:invar-conditional-exist}.
\end{proof}

A consequence of \cref{lem:cav25-derivative} is that, for reactivity properties, a sufficiently strong invariant always exists, and can be chosen so that its probability of invariance lower-bounds the probability of reactivity up to an arbitrarily small gap (\cref{thm:eps-completeness}).

\begin{theorem}[{\protect $\varepsilon$-Completeness of Absorbing-Region Decomposition}]\label{thm:eps-completeness}
    Let $(A_1,B_1) \in \Sigma^2, \dots, (A_k,B_k) \in \Sigma^2$ be a Streett acceptance condition.
    Suppose
    \begin{equation}
        \Prob_\mu \left( \bigcap_{i = 1}^k \Fin (A_i) \cup \Inf (B_i)  \right) > 0.
    \end{equation}
    Then for every arbitrary $\varepsilon > 0$, there exists a region 
    $I \in \Sigma$, and regions $J_i \in \Sigma$ with $J_i \subseteq I \smallsetminus A_i$ for $i = 1, \dots, k$
    such that \cref{eqn:semantic-1-quant,eqn:semantic-2-quant} hold and 
    \begin{align}\label{eqn:lb-eps}
    &\Prob_{\mu}(I^\omega) \geq 
    \Prob_\mu \left( \bigcap_{i = 1}^k \Fin (A_i) \cup \Inf (B_i)  \right)
    - \varepsilon. 
    \end{align}
\end{theorem}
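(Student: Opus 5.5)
The plan is to chain the three preceding results, following the scaffolding of \cref{fig:structure-completeness}. Write $L = \bigcap_{i=1}^k \Fin(A_i) \cup \Inf(B_i)$ for the reactivity property.

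First, \cref{thm:reactivity-shift-invariant} gives $\theta^{-1} L = L$, so $L$ is shift-invariant. We may therefore apply \cref{lem:cav25-derivative} to $L$ with the given $\varepsilon > 0$. This yields a region $I \in \Sigma$ that satisfies \cref{eqn:invar-eps-exist}, namely $\Prob_\mu(I^\omega) \geq \Prob_\mu(L) - \varepsilon$, which is exactly \cref{eqn:lb-eps}. It also satisfies \cref{eqn:invar-conditional-exist}, namely $\Prob_{\delta_s}((I^\omega)^{\sf c} \cup L) = 1$ for every $s \in S$.

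Second, we observe that \cref{eqn:invar-conditional-exist}, written out with $L$ expanded, is verbatim the premise \cref{eqn:streett-or-ic} of \cref{thm:meta}. Invoking \cref{thm:meta} with this $I$ produces regions $J_i \in \Sigma$ with $J_i \subseteq I \smallsetminus A_i$ that satisfy \cref{eqn:semantic-1-quant,eqn:semantic-2-quant}. Concretely, these are $J_i = \{ s \in I \smallsetminus A_i \colon \Prob_{\delta_s}(\sigma_{A_i} < \infty) \leq 0.5 \}$. Together, $I$ and the $J_i$ witness the theorem.

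There is no real obstacle, since all the work was done in the preceding lemmas. The only points to check are routine. The hypothesis $\Prob_\mu(L) > 0$ is not actually needed for the argument; it only makes the bound non-vacuous, and if $\varepsilon \geq \Prob_\mu(L)$ any $I$ from the lemma still works. The region $I$ returned by \cref{lem:cav25-derivative} must be measurable, which holds because $s \mapsto \Prob_{\delta_s}(L)$ is $\Sigma$-measurable for a kernel-defined chain. Similarly, measurability of the $J_i$ is already handled inside \cref{thm:meta}.
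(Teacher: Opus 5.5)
Your proposal is correct and matches the paper's proof: shift-invariance of reactivity lets you apply \cref{lem:cav25-derivative} to obtain $I$ satisfying \eqref{eqn:lb-eps} together with the premise \eqref{eqn:streett-or-ic}, and \cref{thm:meta} then supplies the absorbing regions $J_i$. Your side remarks are accurate points the paper leaves implicit: the positivity hypothesis is not used by the argument, and $I$ and the $J_i$ are measurable because $s \mapsto \Prob_{\delta_s}(\cdot)$ is measurable.
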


\begin{proof}
The reactivity property corresponding to the Streett acceptance condition is shift-invariant, and therefore by \cref{lem:cav25-derivative} there exists a region $I$ for which \cref{eqn:lb-eps} holds and for which
\begin{align}
    \forall s \in S \mathpunct. 
    \Prob_{\delta_s}\left( 
    (I^\omega)^{\sf c} \cup \bigcap_{i = 1}^{k}
    \Fin(A_i) \cup \Inf(B_i)
    \right) = 1.
\end{align}
Using this equality, we can invoke \cref{thm:meta} to show that there exist regions $J_i \in \Sigma$ with $J_i \subseteq I\smallsetminus A_i$ for $i = 1, \ldots, k$ such that \cref{eqn:semantic-1-quant,eqn:semantic-2-quant} hold, which completes the proof of the theorem.
\end{proof}

By contrast, in the almost-sure setting, there exists a region whose probability of invariance matches the probability of the shift-invariant property exactly (\cref{thm:inv-exists-as}).

\begin{lemma}\label{thm:inv-exists-as}
Suppose $L \in \mathcal{F}$ is shift-invariant 
and $\Prob_\mu(L) = 1$. Then there exists a region $I \in \Sigma$ such that
\begin{align}
    &\Prob_\mu(I^\omega) = 1, \label{eqn:concl-A}\\
    &\forall s \in S \mathpunct.
    \Prob_{\delta_s}( 
    (I^\omega)^{\sf c}
    \cup L
    ) = 1.\label{eqn:concl-B}
\end{align}
\end{lemma}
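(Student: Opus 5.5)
The plan is to take the region of states from which $L$ holds almost surely,
\begin{align*}
    I = \left\{ s \in S \colon \Prob_{\delta_s}(L) = 1 \right\},
\end{align*}
and to verify the two conclusions separately. First, $I \in \Sigma$, because the map $s \mapsto \Prob_{\delta_s}(L)$ is $\Sigma$-measurable for every $L \in \mathcal{F}$. This is the standard measurability of $s \mapsto \Prob_{\delta_s}(\cdot)$ underlying \cref{thm:meas-spec}, and the same fact already makes the regions $I_k$ in the proof of \cref{lem:cav25-derivative} measurable.

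For \cref{eqn:concl-B}, I follow the second half of the proof of \cref{lem:cav25-derivative}. Every trajectory in $I^\omega$ satisfies $\Prob_{\delta_{\mc{n}}}(L) = 1$ for all $n$. Hence
\begin{align*}
    I^\omega \subseteq \left\{ \inf_{n \in \Nat} \Prob_{\delta_{\mc{n}}}(L) > 0 \right\}.
\end{align*}
By \cite[Theorem 3]{AbateGR25}, the right-hand event is $\Prob_\xi$-equivalent to $L$ for every initial distribution $\xi$, and in particular for $\xi = \delta_s$. So $I^\omega \smallsetminus L$ is $\Prob_{\delta_s}$-negligible, which gives $\Prob_{\delta_s}((I^\omega)^{\sf c} \cup L) = 1$ for every $s \in S$.

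For \cref{eqn:concl-A}, I combine shift invariance with the Markov property. Fix $n \in \Nat$. Since $\theta^{-1} L = L$, we have $\theta^{-n} L = L$, hence $\ind\{\chain \in L\} \circ \theta^n = \ind\{\chain \in L\}$. Applying \cref{thm:strong-markov-property} with the bounded variable $H = \ind\{\chain \in L\}$ gives
\begin{align*}
    \Prob_{\delta_{\mc{n}}}(L) = \Expect_\mu\left[ \ind\{\chain \in L\} \mid \mathcal{F}_n \right] \qquad \Prob_\mu\text{-a.s.}
\end{align*}
Because $\Prob_\mu(L) = 1$, the indicator equals $1$ $\Prob_\mu$-a.s., so its conditional expectation is also $1$ a.s. Thus $\Prob_\mu(\mc{n} \in I) = 1$ for each $n$. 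Intersecting these countably many probability-one events yields $\Prob_\mu(I^\omega) = 1$.

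I expect the main obstacle to be bookkeeping rather than a conceptual gap. One must justify that $\theta^{-n}L = L$ turns $H \circ \theta^n$ into $H$ exactly, which follows from iterating \cref{thm:reactivity-shift-invariant}-style invariance. One must also confirm that the cited equivalence from \cite{AbateGR25} holds for Dirac initial measures, not only for $\mu$. Both points are used in the same form in the proof of \cref{lem:cav25-derivative}, so I would mirror that argument.
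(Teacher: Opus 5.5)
Your proof is correct. It uses the same region $I = \{ s \in S \colon \Prob_{\delta_s}(L) = 1 \}$ and the same core identity $\Expect_\mu(\ind{\{\chain \in L\}} \mid \mathcal{F}_n) = \Prob_{\delta_{\mc{n}}}(L)$, which comes from shift invariance plus the Markov property, but it uses that identity differently from the paper.

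For \cref{eqn:concl-A}, the paper argues by induction on $n$. Its base case $\mu(I) = 1$ comes from $\int \Prob_{\delta_{s_0}}(L)\,\mu(\mathrm{d}s_0) = \Prob_\mu(L) = 1$, and its inductive step uses the martingale property of $n \mapsto \Prob_{\delta_{\mc{n}}}(L)$ with the tower rule. You instead note that the conditional expectation of a $\Prob_\mu$-a.s.\ constant random variable is that constant. This gives $\Prob_\mu(\mc{n} \in I) = 1$ for every $n$ at once, so the induction is unnecessary and your argument is shorter.

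For \cref{eqn:concl-B}, the paper is self-contained. It combines the same identity with L\'evy's 0-1 law to get $\lim_{n} \Prob_{\delta_{\mc{n}}}(L) = \ind{\{\chain \in L\}}$ $\Prob_\xi$-a.s., and observes that this limit equals $1$ on $I^\omega$. You instead import the $\Prob_\xi$-equivalence of $\{\inf_n \Prob_{\delta_{\mc{n}}}(L) > 0\}$ and $L$ from \cite[Theorem 3]{AbateGR25}, exactly as in \cref{lem:cav25-derivative}. In fact your $I$ coincides with the region $I_0$ of that proof, so this half is that proof's argument with $k = 0$.

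Your route gains brevity and consistency with the previous lemma, at the cost of depending on the external theorem. The paper's L\'evy argument in effect reproves the one direction of that theorem needed here. You also state the measurability of $I$ explicitly, which the paper leaves implicit.
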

\begin{proof}

We define $I = \{s \in S \colon \Prob_{\delta_s}(L) = 1\}$. We prove that $I$ satisfies \cref{eqn:concl-A} (which also proves $I$ to be non-empty) and \cref{eqn:concl-B}.

The crux of our proof is that the stochastic process $n \mapsto \Prob_{\delta_{\mc{n}}}(L)$ is a martingale adapted to the filtered probability space $(\Omega, \mathcal{F}, \mathfrak{F}, \Prob_\xi)$ for every initial distribution $\xi$ on $(S, \Sigma)$, which in concrete terms means that
\begin{flalign} \label{eq:martingale_prop}
&\qquad 
\forall n \in \Nat \mathpunct.
\Expect_\xi \left(\Prob_{\delta_{\mc{n+1}}}(L) \mid \mathcal{F}_n\right) = \Prob_{\delta_{\mc{n}}}(L) &\Prob_\xi\text{-a.s.},
\qquad 
\end{flalign}
which we now prove. First, we let $\xi$ be an arbitrary initial distribution on $(S, \Sigma)$, and we recall that $L$ is shift-invariant, meaning that the following equality
\begin{align}
    \ind{\{\chain \in L\}} = \ind{\{\chain \circ \theta^n \in L\}},
\end{align}
holds pointwise for every $n \in \Nat$. Applying the conditional expectation operator $\Expect_\xi( \cdot \mid \mathcal{F}_n)$ to both sides:
\begin{flalign}\label{eqn:shift-expect}
& \qquad \forall n \in \Nat \mathpunct. \Expect_\xi \left(\ind{\{\chain \in L\}} \mid \mathcal{F}_n\right) = \Expect_\xi\left(\ind{\{\chain \circ \theta^n \in L\}} \mid \mathcal{F}_n\right) 
&
\Prob_\xi\text{-a.s.}
\qquad 
\end{flalign}
The Markov property (\cref{eqn:strong-markov-property}) states
\begin{flalign}\label{eq:strong_markov}
    &\qquad \Expect_\xi \left(\ind{\{\chain \circ \theta^n \in L\}} \mid \mathcal{F}_n\right) = \Prob_{\delta_{\mc{n}}}(L) &\Prob_\xi\text{-a.s.},
    \qquad 
\end{flalign}
and applying this to \cref{eqn:shift-expect} yields:
\begin{flalign}\label{eqn:intermediate}
&\qquad \forall n \in \Nat \mathpunct. \Expect_\xi \left(\ind{\{\chain \in L\}} \mid \mathcal{F}_n\right) = \Prob_{\delta_{ \mc{n} }}(L)
&\Prob_\xi\text{-a.s.}\qquad 
\end{flalign}
L{\'e}vy's 0-1 Law \cite[Theorem 5.5.8]{durrett2019probability} states
\begin{flalign} \label{eq:levy}
    &\qquad \lim_{n \to \infty} \Expect_\xi \left(\ind{\{\chain \in L\}} \mid \mathcal{F}_n\right) = \ind{\{\chain \in L\}} &\Prob_\xi\text{-a.s.}
    \qquad
\end{flalign}
Combining \cref{eqn:intermediate,eq:levy}, we have:
\begin{flalign}
&\qquad \ind{\{\chain \in L\}} = \lim_{n \to \infty} \Prob_{\delta_{\mc{n}}}(L)
&\Prob_\xi\text{-a.s.}\qquad
\end{flalign}
Therefore, for some $\Prob_\xi$-negligible event $N \in \mathcal{F}$, we have:
\begin{equation}\label{eq:subset_a}
\left\{ \lim_{n \to \infty} \Prob_{\delta_{\mc{n}}}(L) = 1 \right\} \smallsetminus N 
\subseteq L.
\end{equation}
We note that $I^\omega \smallsetminus N \subseteq \left\{ \lim_{n \to \infty} \Prob_{\delta_{\mc{n}}}(L) = 1 \right\}$ as:
\begin{align} \label{eq:subset_b}
I^\omega \smallsetminus N 
= \{ \forall n \in \Nat \mathpunct. \Prob_{\delta_{\mc{n}}}(L) = 1 \}
\smallsetminus N
\subseteq \left\{ \lim_{n \to \infty} \Prob_{\delta_{\mc{n}}}(L) = 1\right\} \smallsetminus N \subseteq L,
\end{align}
which proves that $\Prob_{\xi}((I^\omega)^{\sf c} \cup L) = 1$. 
Since $\xi$ was arbitrary, setting $\xi$ to be $\delta_s$ for every arbitrary initial state $s \in S$ concludes the proof of \cref{eqn:concl-B}.

\noindent We next prove \cref{eqn:concl-A}, noting that since 
\begin{align}
    \Prob_\mu(I^\omega) = 
    \Prob_\mu(\forall n \in \Nat \mathpunct. 
    \mc{n} \in I
    ) = 
    \Prob_\mu(
    \cap_{n \in\Nat} \{ \mc{n} \in I \}
    ), 
\end{align}
it suffices to establish
\begin{align}\label{eqn:all_in_inv}
    \forall n \in \Nat \mathpunct. 
    \Prob_{\mu}( 
    \mc{n} \in I
    ) = 1,
\end{align}
which we now prove by induction.

{\em (Base case)}
By \cref{eqn:measure-traj} we observe that:
\begin{align}
        \Prob_\mu(\mc{0} \in I) = \int_{s_0 \in I} \mu(\mathrm{d}s)
         = \mu(I)
\end{align}
and so it suffices to show $\mu(I) = 1$. 
By \cite[Proposition 3.1.3(ii)]{douc2018markov} and the premise $\Prob_\mu(L) = 1$, we obtain:
    \begin{align}
        \int_{s_0 \in S} \Prob_{\delta_{s_0}}(L) ~\mu(\mathrm{d}s_0)
        = \Prob_\mu(
        L
        ) = 1.
    \end{align}
    The term $\int_{s_0 \in S} \Prob_{\delta_{s_0}}(L) ~\mu(\mathrm{d}s_0)$ is the expectation of the $[0,1]$-valued random variable $s_0 \mapsto \Prob_{\delta_{s_0}}(L)$ with respect to the probability measure $\mu$; since its expected value is 1, the integrand must $\mu$-almost surely equal 1, that is $\mu(\{ s \in S \colon 
    \Prob_{\delta_s}(L) = 1
    \}) = 1$ 
    which is precisely $\mu(I) = 1$ (cf.\ \cite[Lemma 26(iv)]{pollard_2001}). This concludes the proof of the base case. 
    
\textit{(Inductive Step)}
    We prove that $\Prob_{\delta_{\mc{n+1}}} (L ) = 1$ holds $\Prob_{\mu}$-a.s.\ 
    under the assumption that 
    $\Prob_{\delta_{\mc{n}}}(L) = 1$ holds $\Prob_{\mu}$-a.s.
    We recall that the martingale property \eqref{eq:martingale_prop} of the process $n \mapsto \Prob_{\delta_{\mc{n}}}(L)$, instantiating the initial distribution $\xi$ to $\mu$ and applying the (unconditional) expectation operator $\Expect_\mu( \cdot) $ to both sides, yields: 
    \begin{align}\label{eqn:ind-conclusion}
    \Expect_{\mu}\left( 
    \Prob_{\delta_{\mc{n+1}}}(L)
    \right)
    \overset{{
    \scriptsize
    \text{\cite[Thm.\ 5.1.6]{durrett2019probability}}
    }
    }{=}
    \Expect_\mu\left( 
    \Expect_\mu\left( 
    \Prob_{\delta_{\mc{n+1}}
    }(L)
    \mid \mathcal{F}_n
    \right)
    \right) 
    \overset{
    {\scriptsize \text{Eq.~\eqref{eq:martingale_prop}
       }
    }
    }{=}
    \Expect_\mu\left( 
    \Prob_{\delta_{\mc{n}}}
    (L)
    \right)
    \overset{
        {
        \scriptsize
        \text{IH}
        }    
    }{=} 1 
    \end{align}
where we note that the expression $
\Expect_\mu( 
    \Prob_{\delta_{\mc{n}}}
    (L)
    )$ is equal to one since it is the expected value of a random variable that is $\Prob_\mu$-a.s.\ equal to one (by the inductive hypothesis). 
    \Cref{eqn:ind-conclusion} proves $\Expect_\mu( 
    \Prob_{\delta_{\mc{n+1}}}(L)
    ) = 1$, implying that $\Prob_{\delta_{\mc{n+1}}}(L) = 1$ holds $\Prob_\mu$-a.s, which concludes the inductive step, and the proof of the theorem.
\end{proof}

Combining \cref{thm:inv-exists-as} with \cref{thm:meta}, we conclude that whenever a Markov chain satisfies a reactivity property almost surely, there exists an almost-sure invariant and corresponding absorbing regions satisfying our decomposition (\cref{thm:completeness}).

\begin{theorem}[Completeness of Absorbing-Region Decomposition for Almost-Sure Reactivity]\label{thm:completeness}
Let $(A_1,B_1) \in \Sigma^2, \dots,\allowbreak (A_k,B_k) \in \Sigma^2$ be a Streett acceptance condition. Suppose that 
\begin{align}\label{eqn:streett-315}
    \Prob_\mu 
    \left( 
    \bigcap_{i = 1}^k \Fin(A_i) \cup \Inf(B_i)
    \right) = 1.
\end{align}
Then, there exists a region $I \in \Sigma$, and regions $J_i \in \Sigma$ with $J_i \subseteq I \smallsetminus A_i$ for $i = 1, \dots, k$ such that 
$\Prob_\mu(I^\omega) = 1$ and \cref{eqn:semantic-1-quant,eqn:semantic-2-quant} hold.
\end{theorem}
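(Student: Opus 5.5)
The plan is to combine \cref{thm:reactivity-shift-invariant}, \cref{thm:inv-exists-as} and \cref{thm:meta}, mirroring the proof of \cref{thm:eps-completeness} with the $\varepsilon$-approximate invariant replaced by an exact one.

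Let $L = \bigcap_{i=1}^k \Fin(A_i) \cup \Inf(B_i)$. By \cref{thm:reactivity-shift-invariant}, $L$ is shift-invariant, and by the premise \eqref{eqn:streett-315} we have $\Prob_\mu(L) = 1$. So \cref{thm:inv-exists-as} applies. It gives a region $I \in \Sigma$, concretely $I = \{ s \in S \colon \Prob_{\delta_s}(L) = 1\}$, with $\Prob_\mu(I^\omega) = 1$, which is the first required conclusion, and with $\Prob_{\delta_s}((I^\omega)^{\sf c} \cup L) = 1$ for every $s \in S$.

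The second property is exactly the premise \eqref{eqn:streett-or-ic} of \cref{thm:meta}. We therefore invoke \cref{thm:meta} with this $I$. It produces regions $J_i \subseteq I \smallsetminus A_i$ for $i = 1, \ldots, k$, namely the sets in \eqref{eqn:defJ-completeness}, such that \cref{eqn:semantic-1-quant,eqn:semantic-2-quant} hold. This completes the statement.

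I expect no real obstacle, since all the substantive work lies in \cref{thm:inv-exists-as} (the martingale and L\'evy 0-1 argument, plus the induction showing $\Prob_\mu(\mc{n} \in I) = 1$ for all $n$) and in \cref{thm:meta}. The only points to check are measurability of $I$ and of each $J_i$. These follow because $s \mapsto \Prob_{\delta_s}(L)$ and $s \mapsto \Prob_{\delta_s}(\sigma_{A_i} < \infty)$ are measurable functions, which is implicit in those lemmas. The one remaining check is that the quantifier over all $s \in S$ in \eqref{eqn:concl-B} matches the one in \eqref{eqn:streett-or-ic}, and it does.
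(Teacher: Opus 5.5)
Your proposal is correct and follows the paper's proof essentially verbatim: shift-invariance of reactivity lets you apply \cref{thm:inv-exists-as} to obtain $I$ with $\Prob_\mu(I^\omega)=1$ satisfying \eqref{eqn:streett-or-ic}, and \cref{thm:meta} then supplies the $J_i$. Your added check on the measurability of $I$ and the $J_i$ is a reasonable detail that the paper leaves implicit.
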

\begin{proof}
Using the fact that reactivity is shift-invariant (\cref{thm:reactivity-shift-invariant}), and that the reactivity property 
in \cref{eqn:streett-315}
holds $\Prob_\mu$-a.s., we invoke \cref{thm:inv-exists-as} to obtain an invariant $I \in \Sigma$ for which $\Prob_\mu(I^\omega) = 1$ and
\begin{equation}
    \forall s \in S \mathpunct.
    \Prob_{\delta_s}\left( 
    (I^\omega)^{\sf c}
    \cup \bigcap_{i = 1}^k \Fin(A_i) \cup \Inf(B_i)
    \right) = 1.
\end{equation}
But now we can simply apply \cref{thm:meta} to the latter equation to show the existence of regions $J_i \subseteq I \smallsetminus A_i $, for $i = 1, \ldots, k$, for which \cref{eqn:semantic-1-quant,eqn:semantic-2-quant} hold, which concludes the proof of the theorem.
\end{proof}


\section{Complete Supermartingale Certificates for Almost-Sure and Quantitative Reactivity}\label{sec:complete-certificates}

Our technical results of \cref{sec:decomp} form the basis for complete proof rules for probabilistic reactivity. They show that reactivity can be established by identifying suitable absorbing regions together with a global invariant, satisfying three conditions: (i) the qualitative (non–almost-sure) safety property of \cref{eqn:semantic-1-quant}, (ii) the almost-sure termination property of \cref{eqn:semantic-2-quant}, and (iii) the quantitative safety property used to lower-bound the right-hand side of \cref{lb-streett}, which in turn lower-bounds the probability of satisfying the reactivity property. \Cref{fig:rule_stack} summarises our framework, which lifts every complete proof rule for almost-sure termination into an $\varepsilon$-complete proof rule for quantitative reactivity.

In this section, we show that quantitative and qualitative safety admit sound and complete proof rules in \cref{cor:inv,lem:absorbing1}, respectively, over general state spaces. It then suffices to pick a sound and complete proof rule for almost-sure termination and substitute it into this framework to obtain a sound and complete proof rule for Streett conditions. Below, we carry out this instantiation with two pre-existing supermartingale proof rules for almost-sure termination (Proof Rules \#1 and \#2), and establish their soundness and completeness in the theorems that follow, with the proofs following from our results in \cref{sec:decomp}.

\begin{figure}[t]
    \centering
    \centering
\resizebox{\linewidth}{!}{%
\begin{tikzpicture}
[x=1em,y=1em,
box/.style={align=center,draw=black,rounded corners=0.2em, minimum height=4em},
label/.style={align=center, scale=0.75}
]
\def\diagwidth{420pt}                 
\def\diagheight{4em}
\def\dist{4pt}
\def\colw{(\diagwidth-2*\dist)/3}
\def\twocolw{2*\colw+\dist}
\node[box,minimum width=\diagwidth] (reactivity) at (0,0) {Quantitative Reactivity\\$\Prob_\mu(I^\omega) \leq \Prob_\mu(\bigcap_{i=1}^k \Fin(A_i) \cup \Inf(B_i)) \leq \Prob_\mu(I^\omega)+\varepsilon$};
\node[box, above=\dist of reactivity.north east, anchor=south east, minimum height=2*\diagheight+\dist, minimum width=\colw] (safety) {Quantitative Safety\\$\Prob_\mu(I^\omega)\overset{?}{=}p$};
\node[box, above=\dist of reactivity.north west, anchor=south west, minimum width=\twocolw] (asreactivity) {Almost-Sure Reactivity\\$\Prob_\mu((I^\omega)^{\sf c} \cup \bigcap_{i=1}^k \Fin(A_i) \cup \Inf(B_i)) = 1$};
\node[box, above=\dist of asreactivity.north west, anchor=south west, minimum height=\diagheight, minimum width=\colw] (avoidance) {\strut Qualitative Safety\\{\strut  $\sup\limits_{s\in J_i}\Prob_{\delta_s}({\sigma}_{A_i} < \infty)<1$}};
\node[box, above=\dist of asreactivity.north east, anchor=south east, minimum height=\diagheight, minimum width=\colw] (return) {\strut Almost-Sure Termination\\{\strut  $\inf\limits_{s\in I}\Prob_{\delta_s}(\sigma_{B_i \cup J_i \cup I^{\sf c}}<\infty)=1$}};
\node[box, dotted, fill=blue!7, thick, above=\dist of return.north west, anchor=south west, minimum width=\colw,minimum height=\diagheight+\dist] (byob) {Bring your own\\proof rule};
\node[box, above=\dist of avoidance.north west, anchor=south west, minimum width=\colw,minimum height=\diagheight+\dist] (supermart) {$\forall s\notin A_i\mathpunct. PV_i(s) \leq V_i(s)$ \\$\forall s\in A_i\mathpunct. V_i(s)\geq 1$\\$\exists \gamma_i > 0~ \forall s \in J_i\mathpunct. V_i(s)\leq 1-\gamma_i$};
\node[box,above=\dist of safety.north west, anchor=south west, minimum width=\colw,minimum height=\diagheight+\dist] (stochastic) {$\forall s \in I\mathpunct.PV_0(s)\leq V_0(s)$\\$\forall s \notin I \mathpunct.V_0(s)\geq 1$};
\node (certificates) [draw=blue, fit= (byob) (supermart) (stochastic), inner sep=0.45*\dist, dashed, thick] {};
\node (decompositions) [draw=red, fit= (avoidance) (return) (safety) (asreactivity), inner sep=0.45*\dist, dashed, thick] {};
\node[color=red,label,left=0em of decompositions.west,anchor=south,rotate=90] {Absorbing Region\\Decomposition};
\node[color=blue,label,left=0em of certificates.west,anchor=south,rotate=90] {Supermartingale\\Certificates};
\node[color=yellow!10!green,label,right=0em of reactivity.east,anchor=south,rotate=-90] {Section 2};
\node[color=red,label,right=0em of decompositions.east,anchor=south,rotate=-90] {Section 3};
\node[color=blue,label,right=0em of certificates.east,anchor=south,rotate=-90] {Section 4};
\end{tikzpicture}%
}
    \caption{Methodology Underlying our Supermartingale Certificates for Quantitative Reactivity}
    \label{fig:rule_stack}
\end{figure}

Supermartingale certificates provide proof rules for temporal properties of stochastic processes and are expressed as non-negative, real-valued measurable functions $V \colon S \to \Real_{\geq 0}$ over the state space $S$. Such certificates are typically required to satisfy conditions on their expected values with respect to the transition kernel $P$ and the initial distribution $\mu$ of a time-homogeneous Markov chain. 

Two standard tools that we will need in this setting are the post-expectation $(PV) \colon S \to \Real_{\geq 0}$ and the init-expectation $(\mu V) \in \Real_{\geq 0}$ of $V$, respectively defined as follows:
\begin{align}
    &PV(s) = \int_{u\in S} V(u) ~P(s, \d u), &&\mu V = \int_{s\in S} V(s)~\mu(\d s).
\end{align}
Intuitively, the post-expectation is just an expected value formula and gives the expected value of $V$ at the next state given that we are in a state $s$.
On the other hand, the init-expectation is the expected value of $V$ at the first state given an initial distribution $\mu$.

We instantiate the safety obligations of our framework using a standard result for quantitative safety which requires the identification of a value function to upper bound the probability that the stochastic process under consideration leaves a given region.

\begin{theorem}[{{\protect \cite[Corollary 4.4.7]{douc2018markov}}}]\label{cor:inv}
Let $X \in \Sigma$ be a region. 
\begin{enumerate}[i)]
    \item Suppose there exists a measurable function $V : S \to \Real_{\geq 0}$ for which the following conditions hold  
    \begin{align}
    &\forall s \in X \mathpunct. PV(s) \leq V(s)\label{eq:si_1},\\
    &\forall s \notin X \mathpunct. V(s) \geq 1\label{eq:si_2}.
\end{align}
    Then, $\Prob_\xi( X^\omega ) \geq 1 - \xi V$ for every initial distribution $\xi$ on $(S, \Sigma)$.

    \item Suppose $\xi$ is a probability distribution on $(S, \Sigma)$ for which $\Prob_{\xi}(X^\omega) \geq p$, for some $p \in [0, 1]$. Then, 
    there exists a measurable function $V : S \to \Real_{\geq 0}$ 
    for which \cref{eq:si_1,eq:si_2} hold, and for which $\xi V \leq 1 - p$.
\end{enumerate}
\end{theorem}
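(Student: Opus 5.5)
For part (i), the plan is to turn the exit probability into a hitting probability and bound it with a nonnegative supermartingale. We have $\Prob_\xi((X^\omega)^{\sf c}) = \Prob_\xi(\tau_{X^{\sf c}} < \infty)$. Consider the stopped process $M_n = V(\mc{n \wedge \tau_{X^{\sf c}}})$.

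On the event $\{\tau_{X^{\sf c}} > n\}$, which lies in $\mathcal{F}_n$, we have $\mc{n} \in X$. The Markov property (\cref{thm:strong-markov-property}) then gives $\Expect_\xi[V(\mc{n+1}) \mid \mathcal{F}_n] = PV(\mc{n}) \le V(\mc{n})$ by \cref{eq:si_1}. On the complement the process is frozen. Hence $(M_n)$ is a nonnegative supermartingale. Integrability at $n=0$ is not an issue: if $\xi V = \infty$ the claim is trivial, and otherwise $M_0$ is integrable.

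Consequently $\Expect_\xi[M_n] \le \xi V$ for all $n$. By \cref{eq:si_2}, $M_n \ge \ind{\{\tau_{X^{\sf c}} \le n\}}$. Taking expectations gives $\Prob_\xi(\tau_{X^{\sf c}} \le n) \le \xi V$, and letting $n\to\infty$ by continuity from below yields $\Prob_\xi(\tau_{X^{\sf c}}<\infty) \le \xi V$. This is exactly $\Prob_\xi(X^\omega) \ge 1 - \xi V$. One could alternatively argue by induction on $n$ with the function $V_n(s)=\Prob_{\delta_s}(\tau_{X^{\sf c}}\le n)$, showing $V_n \le V$ pointwise.

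For part (ii), take the value function itself, $V(s) = \Prob_{\delta_s}(\tau_{X^{\sf c}} < \infty) = 1 - \Prob_{\delta_s}(X^\omega)$.

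\textbf{Measurability.} The map $s \mapsto \Prob_{\delta_s}(E)$ is measurable for every $E\in\mathcal{F}$; this is the standard kernel property behind \cref{thm:meas-spec}.

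\textbf{Boundary condition \cref{eq:si_2}.} For $s \notin X$ we have $\tau_{X^{\sf c}} = 0$, so $V(s)=1$.

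\textbf{Harmonicity \cref{eq:si_1}.} For $s \in X$, one has $\tau_{X^{\sf c}} = 1 + \tau_{X^{\sf c}}\circ\theta$ $\Prob_{\delta_s}$-a.s. Conditioning on $\mathcal{F}_1$ and applying the Markov property gives $V(s) = \Expect_{\delta_s}[\Prob_{\delta_{\mc{1}}}(\tau_{X^{\sf c}}<\infty)] = PV(s)$. So equality holds, in particular $\le$.

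\textbf{Init-expectation.} Using the disintegration $\Prob_\xi(E) = \int \Prob_{\delta_s}(E)\,\xi(\d s)$ (\cite[Proposition 3.1.3(ii)]{douc2018markov}), we get $\xi V = \Prob_\xi(\tau_{X^{\sf c}}<\infty) = 1 - \Prob_\xi(X^\omega) \le 1-p$.

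The main obstacle is in (i): justifying the conditional-expectation step on the stopped process cleanly, in particular that $\{\tau_{X^{\sf c}}>n\} \in \mathcal{F}_n$ and that the Markov property applies with the possibly unbounded $V$. I would handle the unboundedness by truncating to $\min(V,1)$. This truncation still satisfies \cref{eq:si_1,eq:si_2}, since $P\min(V,1)\le\min(PV,1)\le\min(V,1)$, so all random variables involved become bounded.
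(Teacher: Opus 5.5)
Your proof is correct, and it is the standard argument behind \cite[Corollary 4.4.7]{douc2018markov}, which the paper cites without giving a proof of its own: for (i), your stopped nonnegative supermartingale (or your induction $V_n \le V$) shows that $s \mapsto \Prob_{\delta_s}(\tau_{X^{\sf c}}<\infty)$ is bounded above by every nonnegative $V$ that is superharmonic on $X$ and $\geq 1$ off $X$, and your truncation to $\min(V,1)$ supplies the boundedness that \cref{thm:strong-markov-property} requires. For (ii), your witness $V(s)=\Prob_{\delta_s}(\tau_{X^{\sf c}}<\infty)$ is exactly the function the paper itself uses, via this corollary, in the proof of \cref{lem:absorbing1}.
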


\Cref{cor:inv} provides a sound and complete proof rule for quantitative safety
with respect to an invariant $I$. Combined with \cref{lb-streett}, it yields a
lower bound on the probability of reactivity, provided that reactivity holds
almost surely within $I$, which in turn is ensured by
\cref{eqn:semantic-1-quant,eqn:semantic-2-quant}. It remains to give a proof
rule for the qualitative safety obligation of \cref{eqn:semantic-1-quant}.
To
this end, we apply \cref{cor:inv} to prove
\cref{lem:absorbing1}, which provides a sound and complete supermartingale certificate for
establishing that $J$ is an absorbing region with respect to $A$, that is,
that the probability of returning from $J$ to $A$ is uniformly bounded
below one.

\begin{theorem}\label{lem:absorbing1}
Suppose $A, J \in \Sigma$ are regions with $A$ disjoint from $J$.
Then, there exists a function $V \colon S \to \Real_{\geq 0}$ for which 
\begin{align}
    \forall s \notin A  &\mathpunct. PV(s) \leq V(s),\label{eqn:decrease2}\\
    \forall s \in A &\mathpunct. V(s) \geq 1,\label{eqn:safety2}\\
    \exists \gamma > 0 ~
    \forall s \in J &\mathpunct. 
    V(s) \leq 1 - \gamma,\label{eqn:supcondition2}
\end{align} 
if and only if $\sup_{s \in J} \Prob_{\delta_s}( ~\sigma_A < \infty~) < 1$.
\end{theorem}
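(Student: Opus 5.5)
The natural route is to reduce both directions to \cref{cor:inv} with the region $X = A^{\sf c}$. The link is that for $s \notin A$ we have $\tau_A = \sigma_A$ pointwise, because $\Phi_0 = s \notin A$. Hence for every $s \in J$ (recall $J \cap A = \emptyset$),
\[
\Prob_{\delta_s}(\sigma_A < \infty) = 1 - \Prob_{\delta_s}\bigl((A^{\sf c})^\omega\bigr).
\]
With $X = A^{\sf c}$, conditions \eqref{eq:si_1} and \eqref{eq:si_2} of \cref{cor:inv} are literally \eqref{eqn:decrease2} and \eqref{eqn:safety2}.

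For soundness (the ``only if'' direction), take $V$ satisfying \eqref{eqn:decrease2}--\eqref{eqn:supcondition2}. Apply \cref{cor:inv}(i) with $\xi = \delta_s$ for $s \in J$. This gives $\Prob_{\delta_s}((A^{\sf c})^\omega) \geq 1 - V(s) \geq \gamma$. By the identity above, $\Prob_{\delta_s}(\sigma_A<\infty) \leq 1-\gamma$ for every $s \in J$, so the supremum is at most $1-\gamma<1$.

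For completeness (the ``if'' direction), set $\gamma = 1 - \sup_{s\in J}\Prob_{\delta_s}(\sigma_A<\infty) > 0$. The obstacle is that \cref{cor:inv}(ii) produces a certificate for a single initial distribution $\xi$, giving only the averaged bound $\xi V \leq 1-p$. What we need is a pointwise bound $V(s)\leq 1-\gamma$ uniformly on $J$.

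To overcome this, I would not use (ii) as a black box. Instead I would take the canonical certificate it is built from,
\[
V(s) = \Prob_{\delta_s}(\tau_A < \infty).
\]
This function is measurable in $s$ by the standard measurability of $s\mapsto \Prob_{\delta_s}(E)$ for $E\in\mathcal F$. It satisfies $V(s)=1$ on $A$, which gives \eqref{eqn:safety2}. For $s \notin A$, first-step analysis via the Markov property (\cref{thm:strong-markov-property} with $n=1$ and $H=\ind\{\tau_A<\infty\}$) gives $\{\tau_A<\infty\} = \{\tau_A\circ\theta<\infty\}$ on $\{\Phi_0\notin A\}$. Hence $V(s) = \Expect_{\delta_s}[V(\Phi_1)] = PV(s)$, so \eqref{eqn:decrease2} holds with equality. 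Finally, on $J\subseteq A^{\sf c}$ we have $V(s)=\Prob_{\delta_s}(\sigma_A<\infty)\leq 1-\gamma$, which is \eqref{eqn:supcondition2}.

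Alternatively, one could cite that the proof of \cref{cor:inv}(ii) uses exactly this minimal function, which is independent of $\xi$. The only delicate step is the first-step identity together with measurability, both of which are standard.
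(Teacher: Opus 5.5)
Your proposal is correct and follows essentially the same route as the paper. The forward direction applies \cref{cor:inv}(i) with $X = A^{\sf c}$ and $\xi = \delta_s$, using $\sigma_A = \tau_A$ for initial states in $J$; the reverse direction uses the same $\xi$-independent certificate $V(s) = \Prob_{\delta_s}(\tau_A < \infty)$. The differences are cosmetic: you verify $PV = V$ off $A$ by first-step analysis where the paper cites the result behind \cref{cor:inv}, and the identity $\tau_A = \sigma_A$ holds $\Prob_{\delta_s}$-almost surely (pointwise on $\{\Phi_0 \notin A\}$) rather than pointwise on all of $\Omega$.
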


\begin{proof}
$(\Rightarrow)$ Suppose $V \colon S \to \Real_{\geq 0}$ is a function for which \cref{eqn:decrease2,eqn:safety2,eqn:supcondition2} hold, then 
by \cref{cor:inv}(i) we conclude
\begin{align}\label{eqn:UB-exit2}
    \forall s \in S 
    \mathpunct.
    \Prob_{\delta_s}(~\tau_A < \infty~) 
    = 
    1 - \Prob_{\delta_s}((A^{\sf c})^\omega) \leq \delta_s V = V(s).
\end{align}
Since $A$ and $J$ are disjoint, we note that for any trajectory initialised in $J$, the first hitting time to $A$ and the first return time to $A$ coincide:
\begin{align}\label{eqn:equality-tau-sigma-outside-A}
    \forall s \in J \mathpunct. \Prob_{\delta_s} (~\sigma_A = \tau_A~) = 1.
\end{align}
This allows us to replace $\tau_A$ by $\sigma_A$ in \cref{eqn:UB-exit2} once we restrict the quantifier to the subset $J$ which is disjoint from $A$:
\begin{align}\label{eqn:using-sigma}
    \forall s \in J
    \mathpunct.
    \Prob_{\delta_s}(~\sigma_A < \infty ~) 
    \leq  V(s).
\end{align}
Applying \cref{eqn:supcondition2} to \cref{eqn:using-sigma} we obtain that the return probability to $A$ is upper bounded by $1 - \gamma$ when the initial state belongs to $J$:
\begin{align}\label{eqn:sup-expanded} 
    \forall s \in J \mathpunct. 
    \Prob_{\delta_s}(~\sigma_A < \infty~) \leq 1 - \gamma,
\end{align}
which implies $\sup_{s \in J} \Prob_{\delta_s}(~\sigma_A < \infty~) < 1$, concluding the proof of the forward direction. 

$(\Leftarrow)$ 
We define $V(s) = \Prob_{\delta_s}(~ \tau_A < \infty~) $ which by \cite[Corollary 4.4.7]{douc2018markov} satisfies conditions \eqref{eqn:decrease2} and \eqref{eqn:safety2}.
Moreover, we note that the premise $\sup_{s \in J} \Prob_{\delta_s}(~ \sigma_A < \infty ~) < 1$ is equivalent to the existence of a constant $\gamma > 0$ such that 
\begin{align}
    \forall s \in J \mathpunct. \Prob_{\delta_s}(~\sigma_A < \infty~) \leq 1 - \gamma,
\end{align}
which (recalling \cref{eqn:equality-tau-sigma-outside-A}) is precisely \cref{eqn:supcondition2} for our choice of $V$, which completes the reverse direction of the proof.
\end{proof}

\begin{figure}[h]
\centering
\begin{tikzpicture}[minimum size=7mm, node distance=14mm]
    \node[draw, circle] (1) {1};
    
    \node[draw, circle, above right of=1] (2) {2}; 
    \node[draw,circle,right of=2] (3) {3};
    \node[draw,circle,right of=3] (4) {4};
    \node[right of=4] (5) {};
    
    \node[draw, circle, below right of=1] (0) {0};
    \node[draw, circle, right of=0] (m1) {$-1$};
    \node[draw, circle, right of=m1] (m2) {$-2$};
    \node[right of=m2] (m3) {};
    \draw (1) edge[->, bend left] node[above] {$\frac{1}{2}$} (2)
    (2) edge[->, bend left] node[below,xshift=2mm,yshift=3mm] {$\frac{1}{2}$} (1)
    (2) edge[->,bend left] node[above] {$\frac{1}{2}$} (3)
    (3) edge[->,bend left] node[below] {$\frac{1}{2}$} (2)
    (3) edge[->,bend left] node[above] {$\frac{1}{2}$} (4)
    (4) edge[->,bend left] node[below] {$\frac{1}{2}$} (3)
    (4) edge[->, bend left, dashed] node[above] {$\frac{1}{2}$} (5)
    (5) edge[->, bend left, dashed] node[below] {$\frac{1}{2}$} (4);
    \def\pPlus{$\frac{1+\varepsilon}{2}$}
    \def\pMinus{$\frac{1-\varepsilon}{2}$}
    \draw (1) edge[->, bend left] node[above,xshift=3mm,yshift=-3.3mm] {$\frac{1}{2}$} (0);
    \draw (0) edge[->, bend left] node[below] {$\frac{1}{2}$} (1);
    \draw (0) edge[->, bend left] node[above] {$\frac{1}{2}$} (m1);
    \draw (m1) edge[->, bend left,yshift=0.5mm] node[below] {\pMinus} (0);
    \draw (m1) edge[->, bend left] node[above] {\pPlus} (m2);
    \draw (m2) edge[->, bend left] node[below] {\pMinus} (m1);
    \draw
    (m2) edge[->, bend left, dashed] node[above] {\pPlus} (m3)
    (m3) edge[->, bend left, dashed] node[below] {\pMinus} (m2);
    \node (J) [fit= (m1)(m2)(m3), inner sep=0.15cm, fill=orange!20, fill opacity=0.2] {};
    \draw[thick,dashed,orange] ($(m1)+(-5mm,5mm)$) -- ($(m3)+(5mm,5mm)$)
    ($(m1)+(-5mm,5mm)$) -- ($(m1)+(-5mm,-5mm)$) 
    ($(m1)+(-5mm,-5mm)$) -- ($(m3)+(5mm,-5mm)$);
    \node [xshift=-3.0ex, orange] at (J.east) {$J$};

    \node[anchor=west] (V0eq) at ([xshift=18mm]4.east)
        {$V_0(s) = 0$};
    \node[anchor=west] (V1eq) at (V0eq.west |- 0)
        {$V_1(s) = \begin{cases}
            \left(\frac{1-\varepsilon}{1+\varepsilon}\right)^{|s|} & \text{if } s < 0,\\
            1 & \text{if } s \geq 0
         \end{cases}$};

\end{tikzpicture}
\caption{Debt is an Absorbing Region}
\label{fig:casino-soundness}
\end{figure}

\begin{example}[Debt is an Absorbing Region]\label{eqn:debt-absorbing}
We illustrate our absorbing-region decomposition in the context of the
Lending Casino (\cref{fig:lending-casino}) for the property $\Fin(\textit{Solvency})$, which corresponds
to the Streett pair $(\textit{Solvency}, \emptyset)$, 
as introduced in \cref{ex:lending-casino-intro}.
Throughout,
we take the global invariant $I = \mathbb{Z}$ to be the entire state space
and the absorbing region $J_1 = \textit{Debt}$, and verify that this choice
satisfies the premises of \cref{thm:absorbing-regions-decomposition}. Since
$\textit{Debt}$ is disjoint from $\textit{Solvency}$, the return time to
$\textit{Solvency}$ from any $w \in \textit{Debt}$ coincides with the hitting
time computed in \cref{eqn:lending-casino-hitting-probability}, which
establishes \cref{eqn:semantic-1-quant}. For \cref{eqn:semantic-2-quant},
because the states in \textit{Solvency} form a symmetric
one-dimensional random walk, the state $-1$ is reached almost surely from
any non-negative starting point; therefore the chain returns to $B_1 \cup J_1 \cup I^{\sf c} = \emptyset \cup
\textit{Debt} \cup \emptyset$ almost surely from every state in
$I$. Since $\Prob_\mu(I^\omega) = 1$ trivially,
\cref{thm:absorbing-regions-decomposition} yields
$\Prob_\mu(\Fin(\textit{Solvency})) = 1$.

We now establish the same facts via supermartingale certificates. To prove
that $\textit{Debt}$ is an absorbing region with respect to
$\textit{Solvency}$, we invoke \cref{lem:absorbing1} for the certificate $V_1$
shown in \cref{fig:casino-soundness}. By construction, $V_1(s)$ gives the
exact probability of returning to $\textit{Solvency}$ from $s \in \textit{Debt}$, and
its supremum over $\textit{Debt}$ is
$\left( \frac{1 - \varepsilon}{1 + \varepsilon} \right) < 1$
(cf.\ \cref{eqn:lending-casino-hitting-probability}). This discharges the
qualitative safety obligation of \cref{eqn:semantic-1-quant} and confirms that
$J_1 = \textit{Debt}$ is absorbing with respect to $\textit{Solvency}$. It
remains to certify the almost-sure termination obligation
\cref{eqn:semantic-2-quant}, which we resolve in the rest of the section by
applying pre-existing proof rules for almost-sure termination
(cf.\ \cref{ex:lending-casino-rule1,ex:proof-rule-2-lending-casino}). \lipicsEnd
\end{example}

\Cref{cor:inv,lem:absorbing1} discharge the safety obligations of our
decomposition using sound supermartingale certificates that are complete over 
general state spaces. It remains to certify the almost-sure termination obligation
of \cref{eqn:semantic-2-quant} generated by our absorbing-region decomposition, that requires termination into
$B_i \cup J_i \cup I^{\sf c}$ from every state in $I$, for each Streett pair
$(A_i, B_i)$. This obligation may be discharged using any proof rule for
almost-sure termination. Furthermore, if the chosen rule is sound and complete
over a specified class of systems, our framework lifts it to a sound and
$\varepsilon$-complete proof rule for quantitative reactivity over that
class, and to a sound and fully complete one when the reactivity property
holds almost surely. 

We illustrate our methodology using two existing supermartingale
proof rules for almost-sure termination that are known to be complete over
countably infinite state spaces, yielding the first sound and complete
supermartingale proof rules for $\omega$-regular properties on this class
of systems.

\paragraph*{Proof Rule \#1} We discharge the almost-sure termination
obligation (\cref{eqn:semantic-2-quant}) of our absorbing-region decomposition
using the sound and complete proof rule of
\cite[Proof Rule 3.2]{MajumdarS25}. Let $(A_1, B_1), \dots, (A_k, B_k)$ be a
Streett acceptance condition, with $A_i, B_i \subseteq S$ for $i = 1, \dots,
k$, over the countable state space $(S, 2^S)$ of a time-homogeneous Markov
chain with transition kernel $P$. Then, {\em Proof Rule \#1}
requires the existence of
\begin{itemize}
    \item \textbf{a supporting invariant} $I \subseteq S$ with
        $V_0 \colon S \to \Real_{\geq 0}$,
    \item \textbf{absorbing regions} $J_1 \subseteq I \smallsetminus A_1, \dots, J_k \subseteq I \smallsetminus A_k$,
    \item \textbf{supermartingale certificates}
        $V_1, W_1, \dots, V_k, W_k \colon S \to \Real_{\geq 0}$,
    \item \textbf{ranking functions} $U_1, \dots, U_k \colon S \to \Nat$,
\end{itemize}
such that the following conditions hold:
\begin{tcolorbox}[proofrulebox, title={Qualitative safety}]
For every $i = 1, \ldots, k$:
\tightequations\tagpad
\begin{align}
    &\forall s 
    \notin A_i 
    \mathpunct{.} PV_i(s) \leq V_i(s) \label{eqn:V-cond1} \\
    &\forall s\in A_i \mathpunct{.} V_i(s) \geq 1 \label{eqn:V-cond3} \\
    &\exists \gamma_i > 0 \; \forall s \in J_i \mathpunct{.} V_i(s) \leq 1 - \gamma_i \label{eqn:V-cond2}
\end{align}
\end{tcolorbox}

\begin{tcolorbox}[middleproofrulebox, title={Almost-sure termination}]
For every $i = 1, \ldots, k$:
\tightequations\tagpad
\begin{align}
    &\forall s \in I \smallsetminus (B_i \cup J_i)\mathpunct{.} PW_i(s) \leq W_i(s) \label{eqn:majumdar-1} \\
    &\forall s \in I \smallsetminus (B_i \cup J_i) \mathpunct. 
    W_i(s) > 0\\
    &\forall s \in I \cap (B_i \cup J_i)  \mathpunct{.} W_i(s) = U_i(s) = 0 \label{eqn:majumdar-2} \\
    &\forall r \in \mathbb{R}_{\geq 0} \; \exists \varepsilon_r \in \mathbb{R}_{> 0} \; \forall s \in I \smallsetminus (B_i \cup J_i)
    \mathpunct{.} \notag \\[-0.5ex]
    &\qquad W_i(s) \leq r \implies P\big(s, \{ s^\prime : U_i(s^\prime) < U_i(s) \}\big) \geq \varepsilon_r \label{eqn:majumdar-4} \\[1ex]
    &\forall r \in \mathbb{R}_{\geq 0} \mathpunct{.} U_i[\{ s \in I \colon W_i(s) \leq r \}] \subseteq \mathbb{N} \text{ is bounded.} \label{eqn:proof-rule-1-last}
\end{align}
\end{tcolorbox}

\begin{tcolorbox}[proofrulebox, title={Quantitative safety}]
\tightequations\tagpad
\vspace{-2mm}
\begin{align}
    &\forall s \in I \mathpunct{.} PV_{0}(s) \leq V_{0}(s) \label{eqn:proof-rule-1-first} \\
    &\forall s \notin I \mathpunct{.} V_{0}(s) \geq 1 \label{eqn:proof-rule-V0-2nd-cond}
\end{align}
\end{tcolorbox}

We apply the results of 
\cref{sec:decomp} to establish that, over countable Markov chains, {\em Proof Rule \#1} is sound and $\varepsilon$-complete for quantitative reactivity properties, and complete for almost-sure reactivity.

\begin{theorem}[Proof Rule \#1 is Sound]\label{thm:proof-rule-1-sound}
Suppose conditions \eqref{eqn:V-cond1} to \eqref{eqn:proof-rule-V0-2nd-cond} hold.\\
Then $\Prob_\mu\left(\bigcap_{i = 1}^{k} \Fin(A_i) \cup \Inf(B_i)\right) \geq 1 - \mu V_0$.
\end{theorem}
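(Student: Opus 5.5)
The plan is to discharge the three hypotheses of \cref{thm:absorbing-regions-decomposition} one at a time, using the certificates, and then bound $\Prob_\mu(I^\omega)$ from below with $V_0$.

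\emph{Step 1 (qualitative safety).} For each $i$, conditions \eqref{eqn:V-cond1}--\eqref{eqn:V-cond2} are exactly the premises of \cref{lem:absorbing1} with $A = A_i$ and $J = J_i$. Disjointness holds because $J_i \subseteq I \smallsetminus A_i$. The forward direction of \cref{lem:absorbing1} then gives $\sup_{s \in J_i}\Prob_{\delta_s}(\sigma_{A_i} < \infty) \le 1-\gamma_i < 1$, which is \cref{eqn:semantic-1-quant}.

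\emph{Step 2 (almost-sure termination).} For each $i$, I would invoke soundness of \cite[Proof Rule 3.2]{MajumdarS25} on a modified chain. Take the kernel $P_i$ that agrees with $P$ on $I \smallsetminus (B_i \cup J_i)$ and makes every state of $T_i = B_i \cup J_i \cup I^{\sf c}$ absorbing. Extend $W_i, U_i$ by $0$ on $I^{\sf c}$. The conditions \eqref{eqn:majumdar-1}--\eqref{eqn:proof-rule-1-last} then give the premises of that rule for target $T_i$: supermartingale and positivity off the target, zero on the target, a uniform probability of decrease of $U_i$ on $W_i$-sublevel sets, and boundedness of $U_i$ there. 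The only work is checking that setting $W_i = U_i = 0$ on $I^{\sf c}$ does not break the decrease or supermartingale inequalities at states of $I \smallsetminus (B_i\cup J_i)$ that jump into $I^{\sf c}$. Lowering values outside $I$ only lowers $PW_i$, and moving to $U_i = 0$ only adds to the decrease event, so both inequalities survive. Soundness yields $\Prob_{\delta_s}(\tau_{T_i}<\infty)=1$ for all $s$, since before hitting $T_i$ the modified and original chains coincide.

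It remains to pass from hitting time to return time, as \cref{eqn:semantic-2-quant} requires. For $s \in I \smallsetminus T_i$ we have $\sigma_{T_i} = \tau_{T_i}$. For $s \in I \cap (B_i \cup J_i)$, I would condition on the first step via the Markov property (\cref{thm:strong-markov-property}). This gives $\Prob_{\delta_s}(\sigma_{T_i}<\infty)=\int P(s,\d u)\,\Prob_{\delta_u}(\tau_{T_i}<\infty)=1$, because the hitting probability equals $1$ from every state: trivially on $T_i$, and by the previous paragraph elsewhere. This establishes \cref{eqn:semantic-2-quant}. I expect this step to be the main obstacle. It needs a careful reading of the exact hypotheses of \cite[Proof Rule 3.2]{MajumdarS25}, in particular whether it certifies termination from every state or only from an initial distribution; if needed, I would apply it with $\delta_s$ for each $s$.

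\emph{Step 3 (quantitative safety and conclusion).} Conditions \eqref{eqn:proof-rule-1-first}--\eqref{eqn:proof-rule-V0-2nd-cond} are the premises of \cref{cor:inv}(i) with $X = I$, so $\Prob_\mu(I^\omega) \ge 1-\mu V_0$. Combining Steps 1 and 2 with \cref{thm:absorbing-regions-decomposition} gives $\Prob_\mu(\bigcap_i \Fin(A_i)\cup\Inf(B_i)) \ge \Prob_\mu(I^\omega) \ge 1-\mu V_0$, which is the claim.
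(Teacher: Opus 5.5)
Your proposal is correct and takes the same route as the paper. The paper's proof simply observes that Proof Rule \#1 instantiates \cref{lem:absorbing1}, \cref{cor:inv}(i) and the sound almost-sure termination rule of \cite[Proof Rule 3.2]{MajumdarS25}, whose conclusions are the premises of \cref{thm:absorbing-regions-decomposition}; your additional details (extending $W_i,U_i$ by zero on $I^{\sf c}$ with $B_i\cup J_i\cup I^{\sf c}$ made absorbing, and the one-step Markov argument turning $\tau$ into $\sigma$ for starts in $B_i\cup J_i$) are sound and fill in what that one-line proof leaves implicit.
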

\begin{proof}
    We note that {\em Proof Rule \#1} is an instantiation of the proof rules in \cref{cor:inv,lem:absorbing1} along with the sound proof rule for almost-sure termination from \cite[Proof Rule 3.2 \& Lemma 3.4]{MajumdarS25} which are sufficient conditions for the premises of \cref{thm:absorbing-regions-decomposition}.
\end{proof}

\begin{theorem}[Proof Rule \#1 is $\varepsilon$-Complete]\label{thm:rule1-eps-complete}
Suppose 
that $(\mu, P)$ is a Markov chain on a countable state-space with
$\Prob_\mu\left(\bigcap_{i = 1}^{k} \Fin(A_i) \cup \Inf(B_i)\right) = p$
for some $p \in (0, 1]$.
Then for every $\varepsilon > 0$, conditions \eqref{eqn:V-cond1} to \eqref{eqn:proof-rule-V0-2nd-cond} hold for a suitable supporting invariant and absorbing regions, 
supermartingale certificates, and ranking functions, with $\mu V_0 \leq 1-p+\varepsilon$.
\end{theorem}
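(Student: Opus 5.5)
We will derive the statement from the $\varepsilon$-completeness of the absorbing-region decomposition (\cref{thm:eps-completeness}) together with the completeness halves of the three component proof rules. Fix $\varepsilon > 0$. Since $p > 0$, \cref{thm:eps-completeness} yields a region $I \subseteq S$ and regions $J_i \subseteq I \smallsetminus A_i$ such that \cref{eqn:semantic-1-quant,eqn:semantic-2-quant} hold and $\Prob_\mu(I^\omega) \geq p - \varepsilon$. On a countable state space every subset is measurable, so these are legitimate choices for the supporting invariant and absorbing regions of \emph{Proof Rule \#1}. It then remains to produce, from these regions, certificates meeting each box of the rule.

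\textbf{Quantitative safety.} We apply \cref{cor:inv}(ii) with $X = I$, $\xi = \mu$ and $p' = p - \varepsilon$; if $p - \varepsilon \le 0$, we may take $V_0 \equiv 1$ instead. This gives $V_0$ satisfying \eqref{eqn:proof-rule-1-first} and \eqref{eqn:proof-rule-V0-2nd-cond} with $\mu V_0 \leq 1 - p + \varepsilon$.

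\textbf{Qualitative safety.} For each $i$, $J_i$ is disjoint from $A_i$ and \cref{eqn:semantic-1-quant} holds. The reverse direction of \cref{lem:absorbing1} therefore provides $V_i(s) = \Prob_{\delta_s}(\tau_{A_i} < \infty)$ and $\gamma_i > 0$ satisfying \eqref{eqn:V-cond1}--\eqref{eqn:V-cond2}.

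\textbf{Almost-sure termination.} This is the step requiring care. \Cref{eqn:semantic-2-quant} states that from every $s \in I$ the chain returns almost surely to $T_i = B_i \cup J_i \cup I^{\sf c}$. Hence, from every state of $I \smallsetminus (B_i \cup J_i)$ it hits $T_i$ almost surely; on $I \cap (B_i\cup J_i)$ the hitting time is $0$. We plan to invoke the completeness theorem of \cite[Proof Rule 3.2]{MajumdarS25} for the countable Markov chain obtained by making $T_i$ absorbing. That theorem yields $W_i$ and $U_i$ satisfying \eqref{eqn:majumdar-1}--\eqref{eqn:proof-rule-1-last} with target $T_i$, provided termination holds from all states.

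We then need to match the shape of our conditions, which only quantify over $I$. The conditions of \emph{Proof Rule \#1} are stated on $I \smallsetminus (B_i \cup J_i)$ and on $I \cap (B_i \cup J_i)$, while states of $I^{\sf c}$ are unconstrained. So the certificates from \cite{MajumdarS25}, restricted and with the values on $T_i$ kept as they are, should satisfy them directly. States outside $I$ that are reached are targets, where \cite{MajumdarS25} sets $W_i = U_i = 0$, which is consistent with \eqref{eqn:majumdar-4} and with the boundedness condition \eqref{eqn:proof-rule-1-last} restricted to $I$.

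The main obstacle is precisely this adaptation. We must check that the completeness result of \cite{MajumdarS25} applies with an arbitrary target set, and that almost-sure termination from states outside $I$ is not needed, since those states are themselves targets. We must also check that $PW_i(s)$ for $s \in I$ only involves values whose definition is compatible. If the cited theorem requires termination from every state, we will apply it to the modified chain in which every state of $T_i$ is made absorbing. From states outside $I \cup T_i$, which form the empty set because $I^{\sf c} \subseteq T_i$, nothing further is needed.

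Putting the three parts together, all of \eqref{eqn:V-cond1}--\eqref{eqn:proof-rule-V0-2nd-cond} hold with $\mu V_0 \le 1 - p + \varepsilon$.
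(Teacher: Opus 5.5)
Your proposal is correct and takes the same route as the paper: you obtain $I$ and $J_i$ from \cref{thm:eps-completeness}, then discharge each obligation using the completeness direction of the corresponding component rule, namely \cref{cor:inv}(ii), \cref{lem:absorbing1}, and the completeness result of \cite{MajumdarS25} for almost-sure termination into $B_i \cup J_i \cup I^{\sf c}$. You also make explicit two points that the paper's one-line proof leaves implicit: the edge case $p-\varepsilon \le 0$, and the identity $S \smallsetminus (B_i \cup J_i \cup I^{\sf c}) = I \smallsetminus (B_i \cup J_i)$, which ensures termination from every state and lets the certificates restrict correctly to $I$.
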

\begin{proof}
    By \cref{thm:eps-completeness} there exists a suitable supporting invariant and absorbing regions, such that the almost-sure termination obligation \eqref{eqn:semantic-2-quant} holds, and by the completeness result of \cite[Lemma 3.4]{MajumdarS25} the result follows.
\end{proof}

\begin{theorem}[Proof Rule \#1 is Complete for Almost-Sure Reactivity]\label{eqn:as-react-proof1}
Suppose 
that $(\mu, P)$ is a Markov chain on a countable state-space with
$\Prob_\mu\left(\bigcap_{i = 1}^{k} \Fin(A_i) \cup \Inf(B_i)\right) = 1$.
Then, conditions \eqref{eqn:V-cond1} to \eqref{eqn:proof-rule-V0-2nd-cond} hold for suitable supporting invariant, absorbing regions, supermartingale certificates, and ranking functions, with $\mu V_0 = 0$.
\end{theorem}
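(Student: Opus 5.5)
The plan is to combine \cref{thm:completeness} with the completeness halves of \cref{cor:inv}, \cref{lem:absorbing1}, and the almost-sure termination rule of \cite[Proof Rule 3.2 \& Lemma 3.4]{MajumdarS25}, in the same way as the proof of \cref{thm:rule1-eps-complete}, but without any $\varepsilon$ slack.

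\textbf{Regions.} Since the reactivity property holds $\Prob_\mu$-almost surely, \cref{thm:completeness} gives a region $I$ with $\Prob_\mu(I^\omega)=1$. It also gives regions $J_i \subseteq I\smallsetminus A_i$ satisfying \cref{eqn:semantic-1-quant,eqn:semantic-2-quant}. On a countable state space every subset is measurable, so these are legitimate choices for the supporting invariant and the absorbing regions of \emph{Proof Rule \#1}.

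\textbf{Quantitative safety.} Apply \cref{cor:inv}(ii) with $X=I$, $\xi=\mu$ and $p=1$. This yields $V_0$ satisfying \cref{eqn:proof-rule-1-first,eqn:proof-rule-V0-2nd-cond} with $\mu V_0 \le 0$. Since $V_0\ge 0$, this gives $\mu V_0 = 0$. Concretely, one may take $V_0(s)=\Prob_{\delta_s}(\tau_{I^{\sf c}}<\infty)$.

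\textbf{Qualitative safety.} $J_i$ is disjoint from $A_i$, and \cref{eqn:semantic-1-quant} holds. Hence the reverse direction of \cref{lem:absorbing1} gives $V_i$ satisfying \cref{eqn:V-cond1,eqn:V-cond3,eqn:V-cond2}.

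\textbf{Almost-sure termination.} This step is where the main care is needed: \cref{eqn:semantic-2-quant} must be matched against the hypothesis of the completeness lemma of \cite{MajumdarS25}. That lemma concerns almost-sure termination into a target set from every state of a countable chain.

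The plan is to apply it to the target $T_i = B_i\cup J_i\cup I^{\sf c}$. The first step is to convert return times into hitting times. For $s\in I\smallsetminus(B_i\cup J_i)$ we have $s\notin T_i$, so $\tau_{T_i}=\sigma_{T_i}$ and termination holds from $s$. For $s\in T_i$ the hitting time is zero. So $\Prob_{\delta_s}(\tau_{T_i}<\infty)=1$ for all $s\in S$.

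The lemma then provides $W_i,U_i$ satisfying the supermartingale and ranking conditions on $S\smallsetminus T_i = I\smallsetminus(B_i\cup J_i)$, with $W_i=U_i=0$ on the target. This gives \cref{eqn:majumdar-1} through \cref{eqn:majumdar-4} on $I\smallsetminus(B_i\cup J_i)$, together with \cref{eqn:majumdar-2}.

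The boundedness condition \cref{eqn:proof-rule-1-last} is quantified only over states in $I$. It should follow from the corresponding level-set boundedness in \cite{MajumdarS25}, restricted to $I$, since $U_i=0$ on $I\cap(B_i\cup J_i)$. If the cited lemma states boundedness only off the target, the values on $I\cap T_i$ are zero and do not affect boundedness.

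One detail to check is that the target states outside $I$ (namely $I^{\sf c}$) are never quantified over in our conditions, so whatever values the lemma assigns there are harmless. Collecting all the certificates completes the proof.
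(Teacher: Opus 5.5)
Your proposal is correct and takes the same route as the paper. The paper's proof reruns the argument of \cref{thm:rule1-eps-complete} with \cref{thm:completeness} in place of \cref{thm:eps-completeness}: obtain $I$ and the $J_i$ from the decomposition, discharge the two safety obligations via \cref{cor:inv}(ii) with $p=1$ (giving $\mu V_0 = 0$) and \cref{lem:absorbing1}, and discharge the termination obligation via the completeness lemma of \cite{MajumdarS25}. Your explicit conversion of \eqref{eqn:semantic-2-quant} into almost-sure termination into $B_i \cup J_i \cup I^{\sf c}$ from every state, and your handling of \eqref{eqn:proof-rule-1-last}, fill in details the paper leaves implicit.
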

\begin{proof}
    We invoke an analogous argument to \cref{thm:rule1-eps-complete}, except we invoke \cref{thm:completeness} instead of \cref{thm:eps-completeness}.
\end{proof}

\begin{example}[Proof Rule \#1 applied to the Lending Casino]\label{ex:lending-casino-rule1}
We continue \cref{eqn:debt-absorbing}, where we established that
$I = \mathbb{Z}$ and $J_1 = \textit{Debt}$ form a valid invariant and
absorbing region, witnessed by the certificates $V_0$ and $V_1$ of
\cref{fig:casino-soundness}. To discharge the almost-sure
termination obligation \eqref{eqn:semantic-2-quant}, Proof
Rule \#1 requires a supermartingale certificate $W_1 \colon \mathbb{Z} \to
\Real_{\geq 0}$ and a ranking function $U_1 \colon \mathbb{Z} \to \Nat$ satisfying
\eqref{eqn:majumdar-1}--\eqref{eqn:proof-rule-1-last} with
$J_1 = \textit{Debt}$ and $B_1 = \emptyset$, for which we exhibit 
    $U_1(s) = \max\{ s+1, 0 \}$ and 
    $W_1(s) = \max\{ s+1, 0 \}$ (cf.\ \cite[Example 3.5]{MajumdarS25}).\lipicsEnd
\end{example}

\paragraph*{Proof Rule \#2} 
We next instantiate our absorbing-region decomposition using the proof rule for almost-sure termination introduced by \cite[Theorem 4.1]{McIverMKK18} (equivalently, \cite[Proof Rule 3.3]{MajumdarS25}), and which is known to be complete over countable-state Markov chains \cite[Theorem 3.11]{MajumdarS25}.
Suppose $(A_1,B_1), \dots, (A_k,B_k)$ 
is a Streett acceptance condition with $A_i,B_i \subseteq S$ for $i = 1 ,\dots, k$
on the countable state space $(S, 2^S)$ of a time-homogeneous Markov chain defined by $(\mu, P)$.
Then, \emph{Proof Rule \#2} requires the existence of 
\begin{itemize}
    \item \textbf{a supporting invariant} $I \subseteq S$ with $V_0 : S \to \Real_{\geq 0}$
    \item \textbf{absorbing regions} $J_1 \subseteq I \smallsetminus A_1, \ldots, J_k \subseteq I \smallsetminus A_k$
    \item \textbf{supermartingale certificates} 
    $V_1, W_1, \ldots, V_k, W_k : S \to \Real_{ \geq 0}$ 
    \item \textbf{decrease functions} $d_1, \ldots, d_k : \Real_{\geq 0} \to \Real_{> 0}$, and 
    \item \textbf{probability functions} $p_1, \ldots, p_k : \Real_{\geq 0} \to (0, 1]$
\end{itemize}
such that the following conditions hold:
\begin{tcolorbox}[proofrulebox, title={Qualitative safety}]
For every $i = 1, \ldots, k$:
\tightequations\tagpad
\begin{align}
    &\forall s\notin A_i \mathpunct{.} PV_i(s) \leq V_i(s) \label{eqn:rule2-c1} \\
    &\forall s\in A_i \mathpunct{.} V_i(s) \geq 1 \label{eqn:rule2-c2} \\
    &\exists \gamma_i > 0 ~ \forall s \in J_i \mathpunct{.} V_i(s) \leq 1 - \gamma_i \label{eqn:rule2-c3}
\end{align}
\end{tcolorbox}

\begin{tcolorbox}[middleproofrulebox, title={Almost-sure termination}]
For every $i = 1, \ldots, k$:
\tightequations\tagpad
\begin{align}
    &\forall s \in I \smallsetminus (B_i \cup J_i) \mathpunct{.} W_i(s) > 0 \label{eqn:rule2-c4} \\
    &\forall s \in I \cap  (B_i \cup J_i)  \mathpunct. W_i(s) = 0\\
    &\forall s \in I \smallsetminus (B_i \cup J_i) \mathpunct{.}
        PW_i(s) \leq W_i(s)\\
    &\forall r \in \Real_{> 0} ~ \forall s \in I \smallsetminus (B_i \cup J_i) \mathpunct{.} \notag \\
    &\qquad (W_i(s) = r)
        \implies
        P(s, \{s^\prime \colon W_i(s^\prime) \leq r - d_i(r) \})
        \geq p_i(r) \label{eqn:rule2-cdrift} \\
    &\forall r_1, r_2 \in \Real_{> 0} \mathpunct{.}
        r_1 \leq r_2 \implies p_i(r_1) \geq p_i(r_2)
        \label{eqn:rule2-c5} \\
    &\forall r_1, r_2 \in \Real_{> 0} \mathpunct{.}
        r_1 \leq r_2 \implies d_i(r_1) \geq d_i(r_2)
        \label{eqn:proof-rule-2-last}
\end{align}
\end{tcolorbox}

\begin{tcolorbox}[proofrulebox, title={Quantitative safety}]
\tightequations\tagpad
\vspace{-2mm}
\begin{align}
    &\forall s \in I \mathpunct{.} PV_{0}(s) \leq V_{0}(s) \label{eqn:proof-rule-2-first} \\
    &\forall s \notin I \mathpunct{.} V_{0}(s) \geq 1 \label{eqn:proof-rule-2-V0-2nd-cond}
\end{align}
\end{tcolorbox}

We employ the results of \cref{sec:decomp} to delineate the properties of Proof Rule \#2.

\begin{theorem}[Proof Rule \#2 is Sound]\label{thm:proof-rule-2-sound}
Suppose conditions \eqref{eqn:rule2-c1} to \eqref{eqn:proof-rule-2-V0-2nd-cond} hold.
Then $\Prob_\mu\left(\bigcap_{i = 1}^{k} \Fin(A_i) \cup \Inf(B_i)\right) \geq 1 - \mu V_0$.
\end{theorem}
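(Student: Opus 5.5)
The plan mirrors the proof of \cref{thm:proof-rule-1-sound}: we check that each block of \emph{Proof Rule \#2} discharges one premise of \cref{thm:absorbing-regions-decomposition}, and then bound $\Prob_\mu(I^\omega)$ from below using $V_0$.

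\emph{Qualitative safety.} Fix $i \in \{1,\ldots,k\}$. Conditions \eqref{eqn:rule2-c1}--\eqref{eqn:rule2-c3} are exactly \cref{eqn:decrease2,eqn:safety2,eqn:supcondition2} with $A = A_i$, $J = J_i$ and $V = V_i$. The disjointness hypothesis of \cref{lem:absorbing1} holds because $J_i \subseteq I \smallsetminus A_i$. The forward direction of \cref{lem:absorbing1} therefore gives $\sup_{s \in J_i} \Prob_{\delta_s}(\sigma_{A_i} < \infty) \leq 1 - \gamma_i < 1$, which is \cref{eqn:semantic-1-quant}.

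\emph{Quantitative safety.} Conditions \eqref{eqn:proof-rule-2-first} and \eqref{eqn:proof-rule-2-V0-2nd-cond} are the hypotheses \cref{eq:si_1,eq:si_2} of \cref{cor:inv}(i) with $X = I$. Applying it with $\xi = \mu$ yields $\Prob_\mu(I^\omega) \geq 1 - \mu V_0$.

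\emph{Almost-sure termination.} Condition \cref{eqn:semantic-2-quant} is the main obstacle, and it is where the external rule of \cite[Theorem 4.1]{McIverMKK18} (equivalently \cite[Proof Rule 3.3]{MajumdarS25}) is used.
\begin{itemize}
\item Consider the stopped process, absorbed upon entering $T_i = B_i \cup J_i \cup I^{\sf c}$. Because the conditions on $W_i$ only constrain states of $I$, we redefine $W_i$ to be $0$ on $I^{\sf c}$.
\item On $I \smallsetminus (B_i \cup J_i)$ the supermartingale condition and conditions \eqref{eqn:rule2-cdrift}--\eqref{eqn:proof-rule-2-last} are unaffected by this change: it only lowers values at successor states, so $PW_i$ can only decrease and the sets $\{W_i \leq r - d_i(r)\}$ only grow.
\item With this modification, $W_i$ is a variant that vanishes exactly on the target $T_i$. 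The McIver--Morgan rule then gives almost-sure hitting of $T_i$ from every starting state, i.e.\ $\Prob_{\delta_s}(\tau_{T_i} < \infty) = 1$ for all $s$.
\end{itemize}
It remains to pass from hitting to return times. For $s \in I$, one step produces $\Phi_1$ with $\Prob_{\delta_s}(\sigma_{T_i} < \infty) = \int P(s,\mathrm{d}u)\,\Prob_{\delta_u}(\tau_{T_i} < \infty)$ by the Markov property (\cref{thm:strong-markov-property}). The integrand equals $1$ for every $u$, so the integral equals $1$, which is \cref{eqn:semantic-2-quant}.

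\emph{Conclusion.} All premises of \cref{thm:absorbing-regions-decomposition} now hold. It gives
\[
\Prob_\mu\Bigl(\bigcap_{i=1}^{k} \Fin(A_i) \cup \Inf(B_i)\Bigr) \geq \Prob_\mu(I^\omega) \geq 1 - \mu V_0 .
\]
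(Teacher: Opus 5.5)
Your proposal is correct and follows the paper's route: it discharges \cref{eqn:semantic-1-quant} via \cref{lem:absorbing1}, bounds $\Prob_\mu(I^\omega) \geq 1 - \mu V_0$ via \cref{cor:inv}(i), obtains \cref{eqn:semantic-2-quant} from the McIver--Morgan rule, and concludes with \cref{thm:absorbing-regions-decomposition}. The paper's proof is only a one-line appeal to these results. Your extra steps make explicit what it leaves implicit: setting $W_i$ to zero on $I^{\sf c}$ so that it vanishes exactly on $B_i \cup J_i \cup I^{\sf c}$, and the one-step Markov argument that turns almost-sure hitting into almost-sure return.
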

\begin{proof}
    In the same manner as \cref{thm:proof-rule-1-sound}, we have invoked an existing proof rule \cite[Theorem 4.1]{McIverMKK18} for condition \eqref{eqn:semantic-2-quant}, and the proof follows by \cref{thm:absorbing-regions-decomposition}.
\end{proof}

\begin{theorem}[Proof Rule \#2 is $\varepsilon$-Complete]\label{thm:rule2-eps-complete}
Suppose 
that $(\mu, P)$ is a Markov chain on a countable state-space for which
$\Prob_\mu\left(\bigcap_{i = 1}^{k} \Fin(A_i) \cup \Inf(B_i)\right) = p
$
for some $p \in (0, 1]$.
Then for every $\varepsilon > 0$, conditions 
\eqref{eqn:rule2-c1}
to 
\eqref{eqn:proof-rule-2-V0-2nd-cond}
hold for suitable supporting invariant, absorbing regions, 
supermartingale certificates, decrease and probability functions, with $\mu V_0 \leq 1-p+\varepsilon$.
\end{theorem}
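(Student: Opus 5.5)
The proof mirrors that of \cref{thm:rule1-eps-complete}, with the almost-sure termination component now discharged by the rule of \cite[Theorem 4.1]{McIverMKK18}. It assembles the three obligations of \cref{fig:rule_stack} separately and checks that each is witnessed by a certificate of the required shape.

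\textbf{Step 1 (regions).} Apply \cref{thm:eps-completeness}, which is available because $p>0$, with the given $\varepsilon$. It yields $I \subseteq S$ and $J_i \subseteq I \smallsetminus A_i$ satisfying \cref{eqn:semantic-1-quant,eqn:semantic-2-quant}, together with $\Prob_\mu(I^\omega) \geq p - \varepsilon$. On the countable space $(S,2^S)$ every subset is measurable, so nothing further needs checking.

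\textbf{Step 2 (safety certificates).} For quantitative safety, apply \cref{cor:inv}(ii) with $X = I$, $\xi=\mu$ and the bound $\max\{p-\varepsilon,0\}$. This gives $V_0$ satisfying \eqref{eqn:proof-rule-2-first}--\eqref{eqn:proof-rule-2-V0-2nd-cond} with $\mu V_0 \leq 1-p+\varepsilon$. For qualitative safety, $A_i$ and $J_i$ are disjoint, so the reverse direction of \cref{lem:absorbing1}, applied with \cref{eqn:semantic-1-quant}, gives $V_i$ and $\gamma_i>0$ satisfying \eqref{eqn:rule2-c1}--\eqref{eqn:rule2-c3}.

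\textbf{Step 3 (termination certificates).} This step is the main obstacle. \Cref{eqn:semantic-2-quant} states that from every $s \in I$ the chain almost surely returns, at some time $\geq 1$, to $T_i = B_i \cup J_i \cup I^{\sf c}$. Meanwhile, the conditions \eqref{eqn:rule2-c4}--\eqref{eqn:proof-rule-2-last} are posed only on $I$, with target $B_i \cup J_i$.

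I would apply the completeness theorem \cite[Theorem 3.11]{MajumdarS25} to the modified chain in which every state of $T_i$ is made absorbing. Since $\sigma$ and $\tau$ agree for starts outside $T_i$, \cref{eqn:semantic-2-quant} implies that this chain terminates almost surely into $T_i$ from every state. Completeness then provides $W_i$, $d_i$ and $p_i$ with $W_i = 0$ exactly on $T_i$, supermartingale and drift conditions off $T_i$, and monotone $d_i$ and $p_i$. The modified kernel coincides with $P$ on $I \smallsetminus (B_i \cup J_i)$, so these conditions transfer to the original chain there.

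Two points need checking.
\begin{itemize}
\item Condition \eqref{eqn:rule2-c4} requires $W_i > 0$ only on $I \smallsetminus (B_i\cup J_i)$, which is disjoint from $T_i$, so it is met.
\item Setting $W_i = 0$ on $I^{\sf c}$ is harmless, since \eqref{eqn:rule2-c4} never constrains states outside $I$. It also only helps the supermartingale and drift inequalities, because lowering $W_i$ at successor states decreases $PW_i$ and enlarges the set $\{s' : W_i(s') \leq r - d_i(r)\}$.
\end{itemize}
Subtleties in the exact form of \cite[Theorem 3.11]{MajumdarS25}, for instance whether it needs termination from all states or a notion of target of the form $W=0$, are what I expect to require care. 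Should its statement differ, I would fall back on the analogous argument via \cite[Lemma 3.4]{MajumdarS25}, as in \cref{thm:rule1-eps-complete}.

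\textbf{Conclusion.} Combining the three steps gives all conditions \eqref{eqn:rule2-c1}--\eqref{eqn:proof-rule-2-V0-2nd-cond} with $\mu V_0 \leq 1-p+\varepsilon$.
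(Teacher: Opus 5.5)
Your proposal is correct and follows the paper's route: the paper's proof reuses the Rule~\#1 argument, namely \cref{thm:eps-completeness} for $I$ and $J_i$, then the safety certificates and completeness of the termination rule, and your step of making $B_i \cup J_i \cup I^{\sf c}$ absorbing is what turns \eqref{eqn:semantic-2-quant} into an almost-sure termination instance to which \cite[Theorem 3.11]{MajumdarS25} applies. Theorem~3.11 is also the completeness result the paper invokes in \cref{thm:rule-2-complete}, but your fallback via \cite[Lemma 3.4]{MajumdarS25} would not by itself yield the $d_i, p_i$ certificates of Rule~\#2, so it should not be relied on.
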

\begin{proof}
    We apply the same argument as \cref{thm:rule1-eps-complete}, except relying on \cite[Theorem 4.1]{McIverMKK18} for the soundness of conditions \eqref{eqn:rule2-c4} to \eqref{eqn:proof-rule-2-last}.
\end{proof}

\begin{theorem}[Proof Rule \#2 is Complete for Almost-Sure Reactivity]\label{thm:rule-2-complete}
Suppose that 
$
\Prob_\mu\left(\bigcap_{i = 1}^{k} \Fin(A_i) \cup \Inf(B_i)\right) = 1.
$
Then,  
conditions 
\eqref{eqn:rule2-c1}
to 
\eqref{eqn:proof-rule-2-V0-2nd-cond} hold for suitable supporting invariant, absorbing regions, supermartingale certificates, decrease and probability functions, with $\mu V_0 = 0$.
\end{theorem}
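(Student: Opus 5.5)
The plan mirrors \cref{eqn:as-react-proof1}, with \cref{thm:completeness} supplying the regions and then each block of conditions certified separately. First, invoke \cref{thm:completeness} on the premise $\Prob_\mu(\bigcap_{i=1}^k \Fin(A_i)\cup\Inf(B_i))=1$. This yields a region $I$ with $\Prob_\mu(I^\omega)=1$ and absorbing regions $J_i \subseteq I\smallsetminus A_i$ such that \cref{eqn:semantic-1-quant,eqn:semantic-2-quant} hold for every $i$. These become the supporting invariant and absorbing regions of Proof Rule \#2.

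Next, discharge the two safety blocks using the complete rules already available on general state spaces. For the qualitative safety conditions \eqref{eqn:rule2-c1}--\eqref{eqn:rule2-c3}, apply the reverse direction of \cref{lem:absorbing1} to the disjoint pair $A_i, J_i$, using \cref{eqn:semantic-1-quant}. Concretely, take $V_i(s)=\Prob_{\delta_s}(\tau_{A_i}<\infty)$. For the quantitative safety conditions \eqref{eqn:proof-rule-2-first}--\eqref{eqn:proof-rule-2-V0-2nd-cond}, apply \cref{cor:inv}(ii) with $X=I$, $\xi=\mu$ and $p=1$. This gives $V_0$ with $\mu V_0\le 0$, and since $V_0\ge 0$ we get $\mu V_0=0$. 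An explicit witness is $V_0(s)=1-\Prob_{\delta_s}(I^\omega)$.

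The remaining block is the almost-sure termination conditions \eqref{eqn:rule2-c4}--\eqref{eqn:proof-rule-2-last}. Here I would appeal to the completeness of the McIver et al.\ rule over countable chains, \cite[Theorem 3.11]{MajumdarS25}. That result applies to the chain restricted so as to stop in the target $T_i = B_i\cup J_i\cup I^{\sf c}$. Condition \eqref{eqn:semantic-2-quant} says precisely that every $s\in I$ reaches $T_i$ almost surely within positively many steps. Moreover, from $s\in I\smallsetminus(B_i\cup J_i)$ the return time and the hitting time of $T_i$ coincide.

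So the chain made absorbing at $T_i$ terminates almost surely from every state in $I$. The completeness theorem then produces $W_i$, $d_i$ and $p_i$ that are zero on the target, positive off it, supermartingale off the target, and satisfy the antitone progress conditions. These conditions are imposed only on $I\smallsetminus(B_i\cup J_i)$, while $W_i$ must vanish on $I\cap(B_i\cup J_i)$. States in $I^{\sf c}$ are unconstrained and may keep value $0$, which keeps the supermartingale inequality valid.

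I expect the main obstacle to be this bookkeeping for the termination block. The completeness theorem of \cite{MajumdarS25} is stated for termination from all states of a countable chain into a fixed terminal set. Its hypothesis must therefore be matched to our formulation, which is quantified only over $I$, targets $B_i\cup J_i\cup I^{\sf c}$, and uses return rather than hitting times. The first thing I would try is to define the auxiliary kernel $P'_i(s,\cdot)=\delta_s$ on $T_i$ and $P'_i=P$ elsewhere. I would then check that certificates obtained for $P'_i$ satisfy the stated inequalities for $P$ on $I\smallsetminus(B_i\cup J_i)$, since the two kernels agree there. Combining the three blocks gives all conditions \eqref{eqn:rule2-c1} to \eqref{eqn:proof-rule-2-V0-2nd-cond} with $\mu V_0=0$.
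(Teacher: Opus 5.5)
Your proposal is correct and follows the paper's route. You invoke \cref{thm:completeness} to obtain $I$ and the $J_i$, discharge the two safety blocks via \cref{lem:absorbing1} and \cref{cor:inv}(ii), and discharge the termination block via the completeness of the McIver et al.\ rule \cite[Theorem 3.11]{MajumdarS25}. Your explicit witnesses and the stopped-kernel bookkeeping only make explicit what the paper's one-line proof leaves implicit: $I \smallsetminus (B_i \cup J_i)$ is exactly the complement of $B_i \cup J_i \cup I^{\sf c}$, and $W_i$ can be set to $0$ on $I^{\sf c}$.
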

\begin{proof}
    Similarly, we apply the same argument as \cref{eqn:as-react-proof1}, except relying on \cite[Theorem 3.11]{MajumdarS25} for the completeness of the conditions \eqref{eqn:rule2-c4} to \eqref{eqn:proof-rule-2-last}.
\end{proof}

\begin{example}[Proof Rule \#2 applied to the Lending Casino]\label{ex:proof-rule-2-lending-casino}
Continuing \cref{ex:lending-casino-rule1} and using $I = \mathbb{Z}$, $J_1 = \textit{Debt}$, and the safety certificates of \cref{fig:casino-soundness}, we additionally exhibit $W_1(s) = \max\{s + 1, 0\}$ for conditions \eqref{eqn:rule2-c4} to \eqref{eqn:proof-rule-2-last} within Proof Rule \#2, paired with the constant decrease and probability functions $d_1(r) = 1$ and $p_1(r) = \frac{1}{2}$ (cf.\ \cite[p.8]{McIverMKK18}).\lipicsEnd
\end{example}

\begin{remark}[Prior Supermartingale Certificates for $\omega$-Regular Properties are Incomplete]
We note that all prior supermartingale certificates for $\omega$-regular properties in the literature, which meet the requirement that they are defined in terms of value functions whose codomain is real-valued \cite{AbateGR25,DBLP:conf/cav/AbateGR24,HenzingerMSZ25,DBLP:conf/tacas/ChakarovVS16}, are inapplicable to proving the almost-sure satisfaction of the reactivity property $\Fin(\textit{Solvency})$ explored in \cref{eqn:debt-absorbing}, due to the fact that the time to reach \textit{Debt} from any state in \textit{Solvency} has infinite expected value, which is beyond the reach of prior proof rules (cf.\ \cite[Example 4.12]{DBLP:journals/corr/abs-2512-00270}). \lipicsEnd
\end{remark}


\section{Related Work}\label{sec:relwork}

Supermartingale proof rules have been developed as tools to analyse transience and recurrence on aperiodic and irreducible Markov chains, and applied to draw conclusions about their long-term behaviour, such as their stationary distributions (cf.\ \cite{Foster1953,Pakes1969} and \cite[Theorem 11.0.1]{meyn_tweedie_glynn_2009}).
 For countable-state Markov chains that are aperiodic and irreducible, \cite{Mertens78} introduced sound and complete supermartingale proof rules for almost-sure recurrence \cite[Theorem 1]{Mertens78} (cf.~\cite[Theorem 7.5.2]{douc2018markov}) and almost-sure transience \cite[Theorem 2]{Mertens78} (cf.~\cite[Theorem 7.5.1]{douc2018markov}). Note that for irreducible aperiodic chains on $\Nat$, recurrence corresponds to visiting $\{0\}$ infinitely often, and transience to visiting it finitely often.

 Although transience and recurrence are closely related to the events $\Fin(A)$ and $\Inf(B)$ that constitute reactivity properties (\cref{def:reactivity}), we identify two major reasons why these classic results on necessary and sufficient conditions for almost-sure transience and almost-sure recurrence do not easily transfer to reactivity verification on countably infinite Markov chains.
Firstly, they require irreducibility, meaning that a finite path exists between any pair of states. The Markov chain induced by synchronous composition (cf.\ \cref{rem:omega-regular}) with a deterministic Streett automaton $\mathcal{A}$ fails to be irreducible, because $\mathcal{A}$ in general contains more than one strongly connected component, regardless of the irreducibility of the underlying Markov chain $\hat{\chain}$. Despite the fact that countable Markov chains admit a partitioning into transient and recurrent components (\cite[Section 4.5]{meyn_tweedie_glynn_2009}, \cite[Theorem 7.3.3]{douc2018markov} and \cite{Doblin1940}), the transient (and thus potentially non-irreducible) component may encompass the entire chain.
Secondly, every reactivity property is a Boolean combination of $\Fin$- and $\Inf$-events, and can therefore be satisfied almost surely without any of its constituent $\Fin$- or $\Inf$-events being satisfied almost surely. Proof rules for almost-sure transience and almost-sure recurrence, applied in isolation, are therefore too conservative for reactivity properties represented by deterministic Streett and parity automata \cite[Figure 1]{DBLP:conf/cav/AbateGR24}.

Subsequent works have relaxed the irreducibility assumption of \cite{Mertens78} for special cases of almost-sure reactivity.
In the case of the reactivity property defined by a single Streett pair $(A, \emptyset)$, \cite[Proposition 8.3.1(iv)]{meyn_tweedie_glynn_2009} defines a proof rule for uniform transience on Markov chains without assuming irreducibility. While this condition is similar in spirit to our requirement in \cref{eqn:semantic-1-quant}, almost-sure uniform transience is strictly stronger than requiring $\Fin(A)$ to hold with probability 1.
For instance, the region \textit{Solvency} in the Lending Casino (\cref{fig:lending-casino}) is not uniformly transient, even though the event $\Fin(\textit{Solvency})$ holds almost surely (see \cref{eqn:debt-absorbing}). 
Dually, by considering the reactivity property defined by the Streett pair $(S, \textit{Halt})$---where every state in $\textit{Halt} \subseteq S$ is made absorbing---one recovers the property of probabilistic termination. Under this instantiation, our framework subsumes the known result that quantitative termination is equivalent (up to $\varepsilon$-approximation) to a combination of almost-sure termination and quantitative safety obligations (cf.\ \cite[Lemma 4.6]{MajumdarS25},
\cite{DBLP:conf/cav/ChatterjeeGMZ22},
and \cite[Example 4]{AbateGR25}).

While we consider reactivity properties defined upon Markov chains, prior work has treated questions of completeness for recurrence and transience properties defined upon countable MDPs under adversarial and controllable interpretations of non-determinism, as well as for stochastic games \cite{DBLP:conf/icalp/KieferMST19,DBLP:conf/sigecom/KieferMST25}. In the case of controllable non-determinism, prior results demonstrate the necessary existence of policies for the attainment of reactivity conditions \cite{DBLP:conf/concur/KieferMST20,DBLP:conf/concur/KieferMST21} and show that these may be constructed from optimal policies for simpler obligations (such as reachability, safety, and recurrence). It is an open question as to whether our decomposition results (\cref{sec:decomp}) extend to MDPs in any form.

The algorithmic synthesis of supermartingale certificates draws on several research areas: ranking functions and inductive invariants in program analysis~\cite{DBLP:books/daglib/0080029,DBLP:conf/cav/ColonSS03,DBLP:journals/scp/ErnstPGMPTX07,DBLP:journals/tosem/NguyenKWF14,DBLP:conf/sas/SankaranarayananSM04,ProbTermToolAmber2021,DBLP:journals/pacmpl/MoosbruggerSBK22,DBLP:journals/toplas/TakisakaOUH21}, Lyapunov functions and barrier certificates in control theory~\cite{DBLP:conf/cdc/PrajnaJP04,DBLP:conf/cdc/PrajnaPP02,DBLP:conf/cdc/Papachristodoulou02,sankaranarayanan2013LyapunovFunctionSynthesis,she2013DiscoveringPolynomialLyapunov}, and martingale-like value functions underpinning reinforcement learning~\cite{DBLP:books/lib/SuttonB98}.
Across these domains, supermartingale certificates are typically expressed as real-valued functions and synthesized either by deductive symbolic reasoning or by inductive learning. Symbolic methods compute linear or polynomial certificates via Farkas' Lemma~\cite{AgrawalC018,DBLP:conf/cav/ChakarovS13,DBLP:journals/toplas/ChatterjeeFNH18,DBLP:conf/popl/ChatterjeeNZ17,DBLP:conf/tacas/ColonS01,DBLP:conf/cav/ColonS02} and Positivstellensatz results~\cite{DBLP:conf/cav/ChatterjeeFG16,ChatterjeeAAAI2025,DBLP:conf/pldi/AsadiC0GM21,DBLP:conf/fm/ChatterjeeGGKZ24,HandelmanOriginalPaper1988,DBLP:journals/pacmpl/ChatterjeeGNZ24,DBLP:conf/pldi/ZikelicCBR22,PLDI24HandelmanBayesian}, reducing synthesis to decision problems in the first-order theory of the reals.
Learning-based techniques instead infer certificates from parametric families via counterexample-guided inductive synthesis~\cite{DBLP:conf/tacas/BatzCJKKM23,DBLP:conf/tacas/AhmedPA20}, often using neural networks as their representation~\cite{DBLP:conf/cav/AbateGR20,DBLP:conf/aaai/ZikelicLHC23,DBLP:conf/atva/AnsaripourCHLZ23,DBLP:conf/nips/ZikelicLVCH23,DBLP:conf/tacas/ChatterjeeHLZ23,DBLP:conf/aaai/LechnerZCH22,ctsm,DBLP:conf/nips/GiacobbeKPT24,DBLP:conf/nips/GiacobbeKPT25,DBLP:journals/lmcs/AbateEGPR26}.

\section{Conclusion}
We have established that every reactivity property defined upon a Markov chain over a general state space admits a decomposition into multiple absorbing regions and a single global invariant. This decomposition, in turn, yields a strategy for proving probabilistic reactivity that reduces the problem to solving obligations of qualitative safety, quantitative safety, and almost-sure termination. Our technical results in \cref{thm:absorbing-regions-decomposition,thm:eps-completeness} establish that this strategy is sound and 
$\varepsilon$-complete in the general case, while \cref{thm:completeness} shows that it is fully complete when the reactivity property holds almost surely. 

Our strategy yields the first proof rules involving scalar-valued supermartingale certificates that are both sound and complete for reactivity properties (\cref{sec:complete-certificates}); this provides a stepping stone towards automation through the adaptation of existing methods for scalar-valued certificates (\cref{sec:relwork}), which remains an open question and a subject for future work. These new supermartingale certificates for reactivity are obtained by instantiating our strategy with existing proof rules for almost-sure termination that are sound and complete in the special case of countably infinite state spaces.

Our decomposition results (\cref{thm:eps-completeness,thm:completeness,fig:rule_stack}) apply to general state spaces, whereas the concrete supermartingale certificates for reactivity we present currently possess completeness guarantees in the setting of countably infinite state spaces.
Extending these results to uncountable state spaces is an interesting direction for future work. Our results show that addressing this challenge reduces to that of certifying almost-sure termination over general state spaces in a complete fashion, from which a complete proof rule for reactivity over general state spaces would follow (cf.\ \cref{fig:rule_stack}). Moreover, our strategy assumes time homogeneity; extending its scope of applicability to Markov decision processes over general state spaces is another promising direction for future research.

\bibliography{main}

\end{document}